\documentclass[11pt]{article}

\usepackage[margin=1in]{geometry}
\usepackage[T1]{fontenc}
\usepackage{lmodern}
\usepackage{amsmath,amssymb,amsthm,mathtools,bm}
\usepackage{enumitem}
\usepackage{microtype}
\usepackage{booktabs,tabularx,array}
\usepackage{aliascnt}
\usepackage[colorlinks=true,linkcolor=blue,citecolor=blue,urlcolor=blue]{hyperref}
\usepackage[nameinlink,capitalise,noabbrev]{cleveref}
\usepackage{tikz}

\newtheorem{theorem}{Theorem}[section]

\newaliascnt{lemma}{theorem}
\newtheorem{lemma}[lemma]{Lemma}
\aliascntresetthe{lemma}

\newaliascnt{proposition}{theorem}
\newtheorem{proposition}[proposition]{Proposition}
\aliascntresetthe{proposition}

\newaliascnt{assumption}{theorem}
\newtheorem{assumption}[assumption]{Assumption}
\aliascntresetthe{assumption}

\newaliascnt{remark}{theorem}

\aliascntresetthe{remark}

\newaliascnt{definition}{theorem}

\aliascntresetthe{definition}

\newaliascnt{corollary}{theorem}

\aliascntresetthe{corollary}

\theoremstyle{definition}
\newaliascnt{algorithm}{theorem}

\aliascntresetthe{algorithm}
\theoremstyle{plain}

\crefname{theorem}{Theorem}{Theorems}
\Crefname{theorem}{Theorem}{Theorems}
\crefname{lemma}{Lemma}{Lemmas}
\Crefname{lemma}{Lemma}{Lemmas}
\crefname{proposition}{Proposition}{Propositions}
\Crefname{proposition}{Proposition}{Propositions}
\crefname{assumption}{Assumption}{Assumptions}
\Crefname{assumption}{Assumption}{Assumptions}
\crefname{remark}{Remark}{Remarks}
\Crefname{remark}{Remark}{Remarks}
\crefname{definition}{Definition}{Definitions}
\Crefname{definition}{Definition}{Definitions}
\crefname{corollary}{Corollary}{Corollaries}
\Crefname{corollary}{Corollary}{Corollaries}
\crefname{algorithm}{Algorithm}{Algorithms}
\Crefname{algorithm}{Algorithm}{Algorithms}

\newcommand{\R}{\mathbb R}
\newcommand{\C}{\mathbb C}
\newcommand{\dd}{\,\mathrm d}
\newcommand{\ii}{ i}
\newcommand{\eps}{\epsilon}
\newcommand{\bigO}{\mathcal O}
\newcommand{\widetildeO}{\widetilde{\mathcal O}}
\newcommand{\polylog}{\operatorname{polylog}}
\newcommand{\poly}{\operatorname{poly}}
\newcommand{\diag}{\operatorname{diag}}

\newcommand{\Lbox}{L_{\rm box}}

\newcommand{\cmin}{c_{\min}}
\newcommand{\cmax}{c_{\max}}
\newcommand{\norm}[1]{\left\lVert #1\right\rVert}

\newcommand{\by}{\bm y}
\newcommand{\bq}{\bm q}
\newcommand{\be}{\bm e}

\newcommand{\bg}{\bm g}

\newcommand{\calM}{ M}
\newcommand{\calC}{ C}

\title{Exponential Reduction of Mesh Dependence in Quantum Estimation
of Parabolic PDE Observables}
\author{Xiantao Li\\ Pennsylvania State University\\ \texttt{Xiantao.Li@psu.edu}}
\date{July 2026}

\begin{document}
\maketitle

\begin{abstract}
Can a quantum PDE algorithm avoid the polynomial cost of resolving a fine
spatial mesh?  For standard fixed-order discretizations, direct classical
methods require work polynomial in \(h^{-1}\), or equivalently in the number
of spatial degrees of freedom \(N_h=\Theta(h^{-d})\).  Direct quantum
implementations of a parabolic semigroup still have coherent complexity
\(\widetilde{\mathcal O}(\sqrt{T}/h)\), and gradient-dependent observables
such as heat flux and dissipation introduce additional mesh dependence.
Decay of the solution norm will further suppress the postselection
probability for preparing a normalized final state.

We develop a multilevel quantum algorithm that estimates linear and
quadratic observables \emph{directly} and places the fine--coarse cancellation
inside the circuit before measurement.  A contour-based LCU reconstructs
each target-time correction from a coherent family of shifted resolvent
differences.  Rather than block encoding the fine and coarse inverses
separately, we encode their difference through a shifted Ritz--Schur
factorization, exposing its \(\mathcal O(h_\ell^2)\) two-grid
normalization.  For Fourier hierarchies, the
corresponding SELECT oracle consists of a quantum Fourier or sine transform,
a spectral-band selector, and reversible diagonal arithmetic.  We also give
a non-Fourier realization based on energy-orthogonal dyadic midpoint details
in one dimension, together with structured tensor-product extensions under
fixed-rank coefficient and access assumptions.

For readouts with derivative order \(0\le\chi\le2\), optimized amplitude
estimation removes \emph{all} polynomial dependence on the finest mesh size.
Under the stated access assumptions, both linear and quadratic observables
can be estimated with complexity
\(\widetilde{\mathcal O}(1+(T\epsilon)^{-1})\), with only polylogarithmic
dependence on \(h^{-1}\).  The factor \(\epsilon^{-1}\) is the standard
amplitude-estimation cost of inferring the resulting observable.
\end{abstract}

\section{Introduction}
\label{sec:intro}

Partial differential equations are natural candidate applications of
quantum computation because they are central to many engineering models and
their discretizations produce large, structured algebraic systems. Quantum
linear-system, block-encoding, and matrix-function algorithms are therefore
directly applicable, often with polylogarithmic dependence on the algebraic
dimension under suitable input and operator-access models
\cite{HarrowHassidimLloyd2009,ChildsKothariSomma2017,
CostaEtAl2022Optimal,GilyenSuLowWiebe2019}. This observation has led to a substantial literature.  General quantum
algorithms for systems of linear ordinary differential equations provide
basic tools for a semi-discrete approximation of PDEs, see 
\cite{BerryChildsOstranderWang2017,Krovi2023,
FangLinTong2023} and references therein.  Quantum algorithms have also been developed
for Poisson, elliptic, and finite-element boundary-value problems
\cite{CaoEtAl2013Poisson,CladerJacobsSprouse2013,
MontanaroPallister2016,ChildsLiuOstrander2021,
DeimlPeterseim2025}.  Complementary approaches target time-dependent PDEs
directly, including the wave and heat equations, general linear PDEs through
Schr\"odingerization, and explicit Fourier-space circuit constructions
\cite{costa2019quantum,LindenMontanaroShao2022,liang2026structure,
JinLiuYu2024Schrodingerization,
LubaschKikuchiWrightMcKeever2025FourierPDE,hu2024quantum,alipanah2025quantum}.

 For PDE computations,  however, polylogarithmic dependence on the algebraic
dimension is not by itself an end-to-end complexity result. If a spatial
discretization has mesh size $\bigO(h)$, which needs to be sufficiently small to approximate
differential operators with desired precision. Then the reduced problem  has $N_h=\bigO(h^{-d})$ degrees of
freedom. As a result, the matrix dimension, operator normalization,
condition number, discretization accuracy, and observable normalization can all
depend on the same parameter $h$, thus leading to a polynomial dependence of the overall complexity on $h^{-1}$.  In particular, mesh-dependent normalization and measurement costs can dominate this final
inference stage even when the coherent linear-algebra primitive is
efficiently preconditioned
\cite{MontanaroPallister2016,Li2026EndToEndElliptic,DeimlPeterseim2025}. The
appropriate computational target is therefore not necessarily a normalized
solution state, but a scalar functional of the unnormalized solution
operator.

Parabolic equations make this distinction especially clear. Consider the
model problem \cite{evans2022partial}
\begin{equation}
  \partial_t u+\mathcal L u=0,
  \qquad
  \mathcal L u=-\nabla\!\cdot\!\bigl(a(x)\nabla u\bigr),
  \label{eq:intro-parabolic-model}
\end{equation}
on a bounded polyhedral domain with homogeneous Dirichlet boundary
conditions. A standard mass-reduced $P_1/Q_1$ finite element discretization
with mesh width $h$ gives a semi-discrete approximation as an
$N_h$-dimensional linear ODE system \cite{Thomee2006},
\[
  \dot{\by}_h=-L_h\by_h,
  \qquad
  L_h=L_h^\dagger\succ 0,
  \qquad
  \norm{L_h}=\bigO(h^{-2}),
\]
and, with appropriate spatial discretization, the discrete generator
also admits a first-order Gram factorization
\[
  L_h=A_h^\dagger A_h,
  \qquad
  \norm{A_h}=\bigO(h^{-1}).
\]
After normalizing $L_h$ by $\alpha_L=\Theta(h^{-2})$, the heat filter
depends on the parameter $\tau=T\alpha_L=\Theta(Th^{-2})$.  QSVT
approximates this filter with degree
$\widetilde{\bigO}(\sqrt{\tau})
=\widetilde{\bigO}(\sqrt T/h)$.  Quantum algorithms avoid linear dependence on
the $\tau$, but it is not mesh independent.

A finite element approximation also inherits the Dirichlet gap
$L_h\succeq\lambda_*I$. Writing $E_h(T):=e^{-TL_h}$, if one were to prepare
the solution $\bm y_h(T)=E_h(T)\bm y_h(0)$ by applying $E_h(T)$ to a
normalized initial state, postselection of the signal ancilla would succeed
with probability
\begin{equation}\label{pT}
    p_T
  =
  \norm{E_h(T)\lvert\by_h(0)\rangle}^2
  \le e^{-2\lambda_*T}.
\end{equation}
Preparing the normalized final heat state can therefore incur an
exponentially large amplification overhead. This is not a numerical
instability of the parabolic equation; it is the consequence of demanding a
normalized encoding of a vector that is physically decaying.

Most PDE calculations do not need this normalized state at all — they seek
a scalar quantity such as a regional average, heat flux, local mass, or
dissipated energy. After discretization, representative linear and
quadratic quantities of interest take the forms
\begin{equation}\label{readouts}
     \bg_h^\dagger E_h(T)\by_h(0),
  \qquad
  \by_h(0)^\dagger
  E_h(T)^\dagger M_hE_h(T)\by_h(0),
\end{equation}
and both can be estimated directly as matrix elements of the semigroup,
without first preparing the normalized state $\lvert\by_h(T)\rangle$. This
observable-driven formulation avoids the decay-induced final-state
normalization penalty and is the organizing principle of this paper.

Linden, Montanaro, and Shao (LMS) emphasized the importance of this kind of
end-to-end accounting for the heat equation
\cite{LindenMontanaroShao2022}. Their work shows that spatial
discretization, the normalization of a history-state construction,
conversion from discrete amplitudes to physical integrals, and statistical
inference can together dominate the cost of a nominally efficient quantum
linear-system algorithm. They also exhibit a substantially faster method
based on probabilistic heat-kernel access and amplitude estimation in the
constant-coefficient periodic setting.

In this paper we pursue a different direction, targeting more general
settings. We discretize the parabolic PDE \eqref{eq:intro-parabolic-model}
with Galerkin methods (finite element or spectral) and work directly with semigroup matrix elements,
aiming at linear and quadratic observables from the outset. We further
consider observables that depend on the gradient of the solution, for
which direct estimation incurs additional $h$-dependence beyond that of
the underlying semigroup. To obtain an efficient end-to-end quantum
algorithm, we construct a multilevel scheme that works naturally with
multilevel discretizations and that removes polynomial $h$-dependence from both
the readout and the overall complexity.

\subsection{Contributions}
\label{subsec:contributions}

We analyze the complete prepare--simulate--infer (PSI) pipeline under explicit
state-preparation and operator-access assumptions.  The principal
results are as follows.

\paragraph{Direct observable estimation.}
We first construct three block encodings of the parabolic semigroup
$E_h(T)$: a QSVT polynomial construction, a Gaussian dilation based on
the discrete gradient, and an inverse-Laplace contour construction
implemented through first-order shifted systems.  Under their
respective access models, all three have coherent cost
\begin{equation}
  \widetilde{\bigO}\!\left(\frac{\sqrt T}{h}\right).
  \label{eq:intro-direct-coherent-scale}
\end{equation}
Using these block encodings directly inside amplitude estimation avoids
preparation of the normalized final heat state.

We quantify the remaining inference cost through the effective readout
order $\chi$.  This means that the relevant state or observable
block-encoding normalization scales as $\bigO(h^{-\chi})$.  The cases
$\chi=0$, $\chi=1$, and $\chi=2$ include, respectively, bounded
$L^2$-type readouts, first-derivative quantities such as flux, and
second-order or energy-type quadratic observables.  At a fixed spatial
resolution $h$, the direct estimator for \eqref{readouts} has cost
\begin{equation}
  \widetilde{\bigO}\!\left(
    \frac{\sqrt T}{\epsilon h^{1+\chi}}
  \right).
  \label{eq:intro-direct-fixed-mesh}
\end{equation}
For the standard positive-time spatial error
$\bigO(h^2/T)$, choosing
\[
  h=\Theta(\sqrt{T\epsilon})
\]
gives the end-to-end scaling
\begin{equation}
  \mathcal C_{\rm dir}
  =
  \widetilde{\bigO}\!\left(
    T^{-\chi/2}\epsilon^{-(3+\chi)/2}
  \right).
  \label{eq:intro-direct-final}
\end{equation}
Direct observable estimation therefore removes the exponentially small
final-state success probability, but it does not remove the polynomial dependence on $h$.

\paragraph{Multilevel corrected-resolvent estimator.}
To eliminate this remaining polynomial mesh dependence, we introduce a
multilevel estimator on a nested Galerkin hierarchy.  The construction places the
fine--coarse cancellation inside the block-encoded level operator. For two consecutive levels $\ell-1$ and $\ell$, define
\[
  \widehat{\be}_\ell(T)
  :=
  h_\ell^{-2}
  \bigl(
    \by_\ell(T)-P_\ell\by_{\ell-1}(T)
  \bigr),
  \qquad
  \bm\phi_\ell(T)
  :=
  \begin{bmatrix}
    \widehat{\be}_\ell(T)\\
    \by_{\ell-1}(T)
  \end{bmatrix}.
\]
We prove exact identities expressing every compatible linear or
quadratic level difference from \eqref{readouts} as an extended observable  of
$\bm\phi_\ell(T)$.  The associated readout contains explicit factors
$h_\ell^2$, so the expected Galerkin correction scale is present before
measurement and after initial preparation.

A target-time inverse-Laplace contour represents the joint transfer map
through the corrected resolvent (here $L_\ell= L_{h_\ell}$)
\begin{equation}
  D_\ell(z)
  =
  (L_\ell+zI)^{-1}
  -
  P_\ell(L_{\ell-1}+zI)^{-1}P_\ell^\dagger.
  \label{D-res}
\end{equation}
Parabolic smoothing gives an $\bigO(T^{-1})$ normalization for the
linear transfer row, while the extended readout contributes
$h_\ell^{2-\chi}$.  Optimizing the
amplitude-estimation accuracy allocation over the levels yields the complexity for large time $T$
\begin{equation}
  \mathcal C_{\rm ML}
  =
  \widetilde{\bigO}\!\left(1+\frac{1}{T\epsilon}\right),
  \label{eq:intro-ml-final}
\end{equation}
for every $0\le\chi\le2$.  The additional
dependence on $ h$ is only polylogarithmic.

\paragraph{A corrected-resolvent access model.}
The central implementation problem is not generic access to the two
inverses in \eqref{eq:joint-corrected-resolvent-z}.  If they are block encoded independently and
then subtracted by LCU, their separate normalizations obscure the
fine--coarse cancellation.  We instead show that the exact shifted
Ritz--Schur factorization
\[
  D_\ell(z)
  =
  J_\ell^R(z)
  \Sigma_\ell(z)^{-1}
  J_\ell^L(z)^\dagger.
\]
 A concise sufficient access condition is that the two shifted Ritz maps
are block encoded with normalization \(\bigO(h_\ell)\), while the shifted
Schur complement is block encoded with normalization
\(\bigO(1+|z_k|h_\ell^2)\) and satisfies the uniform lower bound
\[
  s_{\min}(\Sigma_\ell(z_k))\ge c_\Sigma>0.
\]
QSVT inversion and block-encoding composition then give
\begin{equation}
  \alpha_{D_\ell(z_k)}
  =
  \bigO(h_\ell^2),
  \label{eq:intro-D-access}
\end{equation}
over the selected contour nodes, and therefore has no polynomial
dependence on \(h_\ell^{-1}\).

This access model is fully explicit for nested Fourier and
trigonometric spaces.  In that case, a QFT diagonalizes the
constant-coefficient generator, the fine--coarse difference becomes a
band projector.
The model also covers transform-structured tensor-product
discretizations for which the mass--stiffness pencil and the two-scale
transfer admit efficient  Kronecker
representations.  A further route is available for special
finite-element meshes and coefficients whose Ritz corrections and
shifted Schur blocks are local or explicitly computable on demand.
The construction of comparable oracles for general unstructured,
variable-coefficient finite elements is left as an open problem rather
than incorporated into the main theorem.

\begin{table}[t]
\centering
\small
\begin{tabularx}{\textwidth}{
@{}
>{\raggedright\arraybackslash}p{0.28\textwidth}
>{\raggedright\arraybackslash}p{0.27\textwidth}
>{\raggedright\arraybackslash}X
@{}}
\toprule
Method & End-to-end scaling & Scope and access model \\
\midrule
LMS quantum linear-system method
\cite{LindenMontanaroShao2022}
&
$\widetilde{\bigO}(\epsilon^{-5/2})$ for $d=1,2$;
$\widetilde{\bigO}(\epsilon^{-d/4-2})$ for $d\ge3$
&
Analyzed for the constant-coefficient heat equation on a periodic
hypercube with uniform finite differences; target output is the regional
heat content
$\int_S u(x,T)\,\dd x$, $\chi=0$
\\
\addlinespace
LMS fast random walk with amplitude estimation
\cite{LindenMontanaroShao2022}
&
$\widetilde{\bigO}(\epsilon^{-1})$
&
Same regional heat-content target, represented as the probability that a
classical random walk terminates in $S$, $\chi=0$; specialized constant-coefficient
probabilistic access
\\
\addlinespace
Direct semigroup estimators ({\bf this work})
&
$\widetilde{\bigO}\!\left(
\epsilon^{-(3+\chi)/2}
\right)$
&
Galerkin semigroup access through QSVT, Gaussian dilation, or
factorized resolvents; linear and quadratic observables with readout order
$0\le\chi\le2$
\\
\addlinespace
Multilevel corrected-resolvent estimator ({\bf this work})
&
$\widetilde{\bigO}(\epsilon^{-1})$
&
Linear and quadratic observables with $0\le\chi\le2$; explicit for nested
spectral and transform-structured hierarchies, and conditional on shifted
Ritz--Schur access for structured finite-element constructions
\\
\bottomrule
\end{tabularx}
\caption{
Representative end-to-end observable-estimation costs.  All LMS rows refer
to their benchmark problem of estimating the heat contained in a prescribed
spatial region.  Their access assumptions differ from the Galerkin and
corrected-resolvent models used here, so the table compares computational
mechanisms and precision exponents rather than black-box algorithms under
identical oracles.  Fixed physical parameters and polylogarithmic factors
are suppressed. 
}
\label{tab:intro-comparison}
\end{table}

\cref{tab:intro-comparison} compares complete observable-estimation procedures rather than
only coherent matrix-function subroutines. To make a comparison with the study in LMS, the displayed precision exponents are for fixed nondimensional target time
$T\ge1$.  The full time-dependent bounds are
\(\widetilde{\bigO}(\epsilon^{-1-\chi/2}
+\sqrt{T}\,\epsilon^{-(3+\chi)/2})\) for \(T\le1\);
\(\widetilde{\bigO}(T^{-\chi/2}
\epsilon^{-(3+\chi)/2})\) for \(T\ge1\)
for the direct method and
$\widetilde{\bigO}(1+(T\epsilon)^{-1})$
for the multilevel method.

LMS  \cite{LindenMontanaroShao2022} formulate all of their
algorithms around the same output, $\int_S u(x,T)\,\dd x$
which is a bounded linear quantity of interest.  Their quantum
linear-system method constructs a history-state representation of the
time-discretized equation and then accounts for the normalization and
numerical-integration costs required to recover $H_S(T)$.  For
$\chi=0$ and fixed $T$, our direct semigroup estimator has
$\widetilde{\bigO}(\epsilon^{-3/2})$ precision dependence, and the improvement comes from the
more efficient block-encoding of the solution semigroup.

The LMS fast random-walk algorithm exploits substantially more specialized
structure.  The nonnegative heat distribution is interpreted
probabilistically, and amplitude estimation is applied to the event that
the endpoint of the walk lies in $S$.  This yields
$\widetilde{\bigO}(\epsilon^{-1})$ complexity for the regional heat
integral in the constant-coefficient  setting.  Our multilevel
linear estimator has the same precision exponent at fixed $T$, but it
uses coherent Galerkin fine--coarse transfer maps rather than
heat-kernel sampling.  Its purpose is therefore to achieve such 
scaling for more general approximation methods.

LMS briefly observe that a comparable complexity for their accelerated
classical random-walk method could also be obtained through a more involved
multilevel Monte Carlo construction.  They do not develop that
construction or a quantum multilevel algorithm for the heat equation.
Quantum-accelerated MLMC was subsequently developed for expectations of
stochastic differential equations
\cite{AnLindenLiuMontanaroShaoWang2021}; that setting retains a
probabilistic coupling between sample paths.  The present method is
different: its hierarchy is deterministic, and the fine--coarse
cancellation reduces the block-encoding normalization of a Galerkin
transfer map before measurement.

The comparison also clarifies the broader scope of the present results.
The LMS complexity theorems do not treat gradient-dependent readouts or
quadratic quantities such as local mass and dissipation.  Our analysis
allows readout orders up to $\chi=2$ and covers both linear and quadratic
observables, subject to the stated corrected-resolvent access model \eqref{eq:intro-D-access}.  The
quadratic result therefore has no direct counterpart among the LMS
algorithms summarized in the table.

The exponential reduction of mesh dependence assumes that the compatible
level states \(\lvert\by_h(0)\rangle\) and readout states or observables can be
prepared coherently with cost polylogarithmic in \(h_\ell^{-1}\).

Finally, comparing the direct bound
\eqref{eq:intro-direct-final} with the multilevel row of
\cref{tab:intro-comparison} separates the two algorithmic
contributions of this paper.  Direct semigroup estimation removes the
decay-induced normalized-state preparation penalty but retains polynomial
dependence on the finest mesh.  The multilevel estimator removes that
dependence by encoding the fine--coarse cancellation coherently.

\subsection{Relation to prior work}
\label{subsec:related}

Quantum linear systems and block encodings form the standard algebraic
toolkit for discretized PDEs
\cite{HarrowHassidimLloyd2009,ChildsKothariSomma2017,
CostaEtAl2022Optimal,GilyenSuLowWiebe2019}.
Montanaro and Pallister emphasized systematic end-to-end accounting for
quantum finite element methods \cite{MontanaroPallister2016}.
Deiml and Peterseim obtained optimal linear-functional estimation for
elliptic problems through a BPX-preconditioned block encoding
\cite{DeimlPeterseim2025}.  The elliptic companion of the present work
uses multilevel corrected Green's operators for linear and quadratic
observables \cite{Li2026EndToEndElliptic}.

For nonunitary dynamics, relevant approaches include time-marching
quantum ODE solvers, QSVT matrix functions, linear combinations of
Hamiltonian simulations, Schr\"odingerization, and dilation methods
\cite{BerryChildsOstranderWang2017,Krovi2023,FangLinTong2023,
GilyenSuLowWiebe2019,AnLiuLin2023LCHS,
AnChildsLin2023NonUnitary,JinLiuYu2024Schrodingerization,
Li2025MomentDilation}.
Our contour construction belongs to the
linear-combination-of-inverses framework for matrix functions
\cite{TakahiraOhashiSogabeUsuda2022Contour} and uses sectorial
inverse-Laplace quadrature
\cite{LopezFernandezPalenciaSchadle2006,
WeidemanTrefethen2007}.
Its role here is target-time transfer evaluation rather than
construction of a time-uniform reduced model. A complementary observable-driven approach to nonunitary dynamics is
the randomized LCHS framework of
\cite{YangLiu2025ObservableDriven}, which estimates final-time
observable expectations directly through an unbiased Monte Carlo
estimator.  The present construction instead uses deterministic
Galerkin fine--coarse corrections to reduce the block-encoding
normalization before amplitude estimation.

The multilevel principle originates from
\cite{Heinrich2001,Giles2008,Giles2015}.
Quantum-accelerated MLMC for stochastic differential equations
accelerates a probabilistic estimator on each level
\cite{AnLindenLiuMontanaroShaoWang2021}.
The present hierarchy is deterministic: the resource reduced by
fine--coarse coupling is the block-encoding normalization of the
level-transfer map.  The shifted Ritz--Schur construction is related to
variational multigrid, hierarchical bases, prewavelets, and
subspace-correction methods
\cite{Hackbusch1985,BramblePasciakXu1990,Xu1992,
Yserentant1986,Oswald1994,Stevenson1998,
FloaterQuakReimers2000}.

Contour-integral time discretizations for parabolic equations have a
substantial classical history.  Sheen, Sloan, and Thom\'ee introduced
parallel methods based on contour representations and Laplace-transform
quadrature, and McLean, Sloan, and Thom\'ee developed corresponding
finite-element error estimates for inhomogeneous parabolic problems
\cite{SheenSloanThomee1999,SheenSloanThomee2003,
McLeanThomee2011}.  At every quadrature node, these methods require
the solution of an independent complex-shifted elliptic problem. This is precisely the connection to
our prior work on multilevel quantum algorithm for elliptic PDEs \cite{Li2026EndToEndElliptic}.
  The hyperbolic contour used here
is closely related to this classical construction, while our
strip-analytic quadrature estimate and fixed-target-time scaling follow
the more specialized sectorial inverse-Laplace framework of
\cite{LopezFernandezPalencia2004,
LopezFernandezPalenciaSchadle2006}.  Our contour is chosen with a strictly
positive vertex and is scaled by \(T^{-1}\), which is convenient for estimating an observable at
one prescribed target time.

\subsection{Organization and conventions}

\Cref{sec:setup} introduces the parabolic model, the mass-reduced
Galerkin discretization, the Gram factorization, the observable access
model, and the positive-time spatial error estimates.
\Cref{sec:direct} develops the QSVT, Gaussian-dilation, and
contour-resolvent semigroup implementations and their direct
observable-estimation costs.
\Cref{sec:multilevel} constructs the nested hierarchy, joint scaled
correction, target-time contour map, shifted Ritz--Schur access model,
and multilevel complexity.  The appendices collect the contour representation, its approximation, 
 and complexity.

Let \(T\) denote the target time.  Since \(\lambda_*\), the Dirichlet gap, is a fixed problem-dependent constant, we use
\(T\ge1\) as shorthand for the positive-time regime
\(\lambda_*T\ge1\). 
Constants depending on fixed
coefficient bounds, spatial dimension, mesh regularity, elliptic
regularity, and fixed physical input norms are absorbed into
$\bigO(\cdot)$.  The symbol $\epsilon$ denotes the final additive error,
and polylogarithmic factors are suppressed by
$\widetilde{\bigO}$. For nonnegative quantities \(A\) and \(B\), we write \(A\asymp B\), as is common in finite element analysis, when there exist constants \(c,C>0\), independent of the discretization level and the asymptotic parameters under consideration, such that \(cB\le A\le CB\), i.e., $A=\Theta(B)$.

\section{Parabolic PDE and Galerkin discretization}
\label{sec:setup}
\subsection{Model problem}
\label{subsec:heat-model}

We consider the parabolic PDE \cite{evans2022partial}
\begin{equation}
    \partial_t u(t,x)+\mathcal L u(t,x)=0,
    \qquad
    \mathcal L u=-\nabla\cdot(a(x)\nabla u),
    \qquad
    u(0,x)=u_0(x),
    \label{eq:heat-pde}
\end{equation}
on a bounded polyhedral domain
\[
    \Omega\subset\R^d,
    \qquad d\in\{1,2,3\},
\]
with representative length $\Lbox$, and homogeneous Dirichlet boundary
conditions $u(t,x)=0$ for $x\in\partial\Omega$. We assume uniform elliptic condition on $\mathcal L$ where the coefficient satisfies
\begin{equation}\label{eq:uniform-ellipticity}
     0<\cmin\le a(x)\le \cmax<\infty,
\end{equation}
and that the associated elliptic equation has $H^2 $ regularity.

A key route to study general solution properties and numerical
approximation is the following sesquilinear extension:
On the complexification of \(H_0^1(\Omega)\), we define
\begin{equation}
  \mathrm a(u,v)
  :=
  \int_\Omega
  a(x)\nabla u(x)\cdot\overline{\nabla v(x)}\,\dd x,
  \qquad
  u,v\in H_0^1(\Omega;\mathbb C),
  \label{eq:heat-bilinear}
\end{equation}
with the convention that the form is linear in its first argument.
The solution $u(t,x)$ is then characterized by the weak form: find
$u(t,\cdot)\in H_0^1(\Omega)$ such that
\begin{equation}\label{weak-form}
  \mathrm a\Big(u,v\Big) + \int_\Omega \partial_t u(t,x)\, v(x)\,\dd x=0,
  \qquad \forall v\in H_0^1(\Omega).
\end{equation}

\subsection{Galerkin semidiscretization in reduced coordinates}
\label{subsec:heat-galerkin}

Let $V_h\subset H_0^1(\Omega)$ be a conforming $P_1$ or $Q_1$ finite element space on a quasi-uniform, shape-regular mesh of size indicated by $h$.  The number of degrees of freedom, which usually determines the cost of a classical algorithm, can be roughly estimated to be
\begin{equation}
    N_h:=\dim V_h\asymp (\Lbox/h)^d,
    \label{eq:Nh}
\end{equation}
where $\Lbox$ is a diameter of the domain $\Omega$. 

To derive a finite-dimensional approximation, let $\{\varphi_j\}_{j=1}^{N_h}$ be the nodal basis in $V_h$ and write
\[
    u_h(t,x)=\sum_{j=1}^{N_h}q_j(t)\varphi_j(x).
\]
The raw Galerkin semidiscretization, via forcing \eqref{weak-form} in $V_h$, is given by,
\begin{equation}
  \mathsf  M_h\dot{\bm q}_h(t)+\mathsf K_h\bm q_h(t)=0,
    \label{eq:raw-heat}
\end{equation}
where
\[
    (\mathsf M_h)_{ij}=\int_\Omega \varphi_j(x)\varphi_i(x)\,\dd x,
    \qquad
    (\mathsf K_h)_{ij}=\int_\Omega a(x)\nabla\varphi_j(x)\cdot\nabla\varphi_i(x)\,\dd x.
\]

For quantum encoding, however, it is useful to work in coordinates whose Euclidean norm represents the physical $L^2$ norm.  We define the nodal values,
\begin{equation}
    \by_h:=\mathsf M_h^{1/2}\bm q_h .
    \label{eq:y-reduced}
\end{equation}
Then we arrive at a system of ordinary differential equations \cite{Thomee2006}, 
\begin{equation}
    \dot\by_h(t)+L_h\by_h(t)=0,
    \qquad
    L_h:=\mathsf M_h^{-1/2} \mathsf K_h \mathsf M_h^{-1/2},
    \label{eq:heat-reduced}
\end{equation}
and they are referred to as a semi-discrete approximation. On a quasi-uniform mesh, the condition number of $\mathsf M_h$ is $\bigO(1)$. Thus the block encoding of $\mathsf M_h^{-1/2}$ can be efficiently prepared to construct $L_h$.

Noticeably, the reduced Euclidean norm is equivalent to the physical norm:
\begin{equation}
    \norm{\by_h}_2= \norm{u_h}_{L^2(\Omega)} .
    \label{eq:L2-l2-scaling}
\end{equation}
This is the scaling used by the amplitude-encoded quantum state
\[
    |\by_h\rangle=\by_h/\norm{\by_h}_2.
\]
While the raw nodal vector $\bm q_h$ has an artificial grid-volume factor, $\by_h$ removes this factor.  As a consequence, every normalization constant appearing in the measurement costs below is a physical norm such as $\norm{u_0}_{L^2(\Omega)}$, uniformly in $h$.  We highlight this because, in an end-to-end comparison, the growth of solution-vector norms with the grid can be one of the principal sources of hidden mesh- and dimension-dependent factors.

\subsection{Gram factorization and operator scales}
\label{subsec:gram}

The stiffness matrix has a Gram structure.  Let $\mathsf G_h$ be the element-gradient or quadrature-gradient matrix, and let $\mathsf  W_h \succ 0$ contain the quadrature weights and coefficient $a(x)$.  Then
\begin{equation}
    \mathsf  K_h=\mathsf  G_h^\dagger \mathsf  W_h \mathsf G_h.
    \label{eq:stiffness-gram}
\end{equation}
Consequently, we arrive at a natural factorization
\begin{equation}
    L_h=A_h^\dagger A_h,
    \qquad
    A_h:=\mathsf W_h^{1/2} \mathsf  G_h \mathsf M_h^{-1/2}.
    \label{eq:A-factor}
\end{equation}
The matrix $A_h$ is the coefficient-weighted discrete gradient.  It is sparse and local for finite elements or finite differences, and
\begin{equation}
    \norm{A_h}=\bigO(h^{-1}),
    \qquad
    \norm{L_h}=\bigO(h^{-2}).
    \label{eq:A-L-scales}
\end{equation}
The coefficient, local-sparsity, and fixed-dimensional constants are absorbed according to the convention above. 

\subsection{Observables and spatial accuracy}
\label{subsec:heat-error}

We consider two classes of scalar outputs.  \emph{Linear observables} are bounded functionals of the temperature field,
\begin{equation}
    Q^{\rm lin}(T)=\int_\Omega g(x)\,u(T,x)\,\dd x,
    \qquad g\in L^2(\Omega),
    \label{eq:Qlin-cont}
\end{equation}
with the regional heat content $\int_S u(T,x)\dd x$ of \cite{LindenMontanaroShao2022} corresponding to $g=\mathbf 1_S$.  In reduced coordinates,
\begin{equation}
    Q_h^{\rm lin}(T)=\bg_h^\dagger\,\by_h(T),
    \label{eq:Qlin-disc}
\end{equation}
where $\bg_h$ collects the reduced-coordinate coefficients of the $L^2$ projection of $g$.  

Meanwhile \emph{quadratic observables} are quadratic forms, e.g., for some $S\subset \Omega$
\begin{equation}
    Q^{\rm quad}(T)= \int_{S} u(T,x)^2 dx
    \qquad\text{or}\qquad
    Q^{\rm quad}(T)=\int_{S} a(x) | \nabla u(T,x)|^2 dx .
    \label{eq:Qquad-cont}
\end{equation}

After the Galerkin subspace projection \eqref{eq:heat-reduced}, and in reduced coordinates, we express them as a quadratic form with a Hermitian matrix $\calM_h$,
\begin{equation}
    Q_h^{\rm quad}(T)=\by_h(T)^\dagger \calM_h\,\by_h(T),
    \label{eq:quad-Qh}
\end{equation}
\cref{eq:Qlin-disc,eq:quad-Qh} together allow a general quadratic function for the readout. 

Notice that in the latter case in \eqref{eq:Qquad-cont}, $M_h$ has an $\bigO(h^{-2})$ scaling which may require more amplitude estimation steps. 

\bigskip

Standard semidiscrete Galerkin theory gives the spatial error estimates
needed below
\cite{BrambleSchatzThomeeWahlbin1977,Thomee2006,BrennerScott2008}.
For compatibly projected smooth initial data, a uniform-in-time error bound is given by,
\begin{equation}
  \sup_{t\ge0}
  \norm{u(t,\cdot)-u_h(t,\cdot)}_{L^2(\Omega)}
  \le
  Ch^2\norm{u_0}_{H^2(\Omega)}.
  \label{eq:heat-smooth-error}
\end{equation}

For general $L^2$ initial data, parabolic PDE has a built-in smoothing effect \cite{Thomee2006}. For large $T$,  the smoothing provides a stronger bound,
\begin{equation}
  \norm{u(T,\cdot)-u_h(T,\cdot)}_{L^2(\Omega)}
  \le
  C\min\left\{1,\frac{h^2}{T}\right\}
  \norm{u_0}_{L^2(\Omega)},
  \qquad T>0.
  \label{eq:heat-operator-error}
\end{equation}
Thus the spatial error is uniformly $\bigO(h^2)$ for smooth data and is
$\bigO(h^2/T)$ at large time for general $L^2$ data.

To determine the appropriate grid size in terms of the precision $\epsilon$, one can consider the regime where $T$ is sufficiently large so that the error exhibits the scaling $h^2/T $, and we choose \begin{equation} h=\Theta( \sqrt{T\eps}). \label{eq:hL-choice} \end{equation} On the other hand, for short time, one can choose $h=\Theta( \sqrt{\eps})$.

\section{Direct estimation of heat observables}
\label{sec:direct}

A natural quantum pipeline would first apply the parabolic semigroup approximation \eqref{eq:heat-reduced},
postselect a state proportional to the final solution, and then measure
the desired observable.  For a dissipative equation, this organization
can be inefficient.  Postselection prepares the normalized state
$\lvert\by_h(T)\rangle$, but its success probability \eqref{pT}
 may decay exponentially with the final time.  We instead place the
semigroup directly inside the observable-estimation circuit, so that no
normalized final heat state is prepared. The same observation has been made in \cite{KharaziAlkadriMandadapuWhaley2026ReactionRates} for estimating reaction rates.  

By setting
\begin{equation}
  E_h(T):=e^{-TL_h},
  \label{eq:heat-semigroup-direct}
\end{equation}
we first rewrite the discrete observables from
\cref{eq:Qlin-disc,eq:quad-Qh} are
\begin{align}
  Q_h^{\rm lin}(T)
  &=
  \bg_h^\dagger E_h(T)\by_h(0),
  \label{eq:direct-linear-map}\\
  Q_h^{\rm quad}(T)
  &=
  \by_h(0)^\dagger
  E_h(T)^\dagger\calM_hE_h(T)
  \by_h(0).
  \label{eq:direct-quadratic-map}
\end{align}

\subsection{Observable estimation from a semigroup block encoding}
\label{subsec:single-access}

We assume efficient state-preparation oracles
\begin{equation}
  O_y\lvert0\rangle
  =
  \lvert\by_h(0)\rangle
  :=
  \frac{\by_h(0)}{\norm{\by_h(0)}},
  \qquad
  O_g\lvert0\rangle
  =
  \lvert\bg_h\rangle
  :=
  \frac{\bg_h}{\norm{\bg_h}},
  \label{ass:data-access}
\end{equation}
where $O_g$ is required only for a linear observable.  Let us also keep track of their magnitude
\[
  \beta_{y,h}:=\norm{\by_h(0)},
  \qquad
  \beta_{g,h}:=\norm{\bg_h}.
\]
These physical norms are assumed known from the data-loading
procedure.  Likewise, we assume an
$\alpha_{M,h}$-normalized block encoding of $\calM_h$. 
State preparation, controlled state preparation, observable access, and
evaluation of the known scalar norms are assumed to have
$\polylog(N_h)$ cost in the stated input model.

Suppose that $U_{E,h}(T)$ is an
$\alpha_{E,h}$-normalized block encoding of the semigroup in \eqref{eq:heat-semigroup-direct}
\begin{equation}
  \bigl(\langle0^a|\otimes I\bigr)
  U_{E,h}(T)
  \bigl(|0^a\rangle\otimes I\bigr)
  = E_h(T)/\alpha_{E,h} +
  \bigO(\epsilon),
  \label{eq:semigroup-block-encoding}
\end{equation}
 Since $E_h(T)$ is a contraction, one may take
$\alpha_{E,h}=\bigO(1)$ in the constructions below. 

\begin{proposition}[Observable estimation from a semigroup block encoding]
\label{prop:direct-observable-from-E}
Given the access model above, the linear quantity
$Q_h^{\rm lin}(T)$ can be estimated to additive accuracy $\eps$ using
\begin{equation}
  \widetilde{\bigO}\!\left(
    \frac{
      \alpha_{E,h}\beta_{g,h}\beta_{y,h}
    }{\eps}
  \right)
  \label{eq:linear-E-query-cost}
\end{equation}
controlled uses of the semigroup block encoding $U_{E,h}(T)$ and the state-preparation
oracles.

The quadratic quantity $Q_h^{\rm quad}(T)$ can be estimated to additive
accuracy $\eps$ using
\begin{equation}
  \widetilde{\bigO}\!\left(
    \frac{
      \alpha_{E,h}^2\alpha_{M,h}\beta_{y,h}^2
    }{\eps}
  \right)
  \label{eq:quadratic-E-query-cost}
\end{equation}
uses of the composed semigroup--observable block encoding.  Each such
use requires two calls to $U_{E,h}(T)$ and one call to the block
encoding of $\calM_h$, up to constant ancilla overhead.
\end{proposition}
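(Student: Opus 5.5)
The plan is to express each observable as a rescaled matrix element of a single unitary circuit, and then apply amplitude estimation (Hadamard-test or overlap form) to that circuit.

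\textbf{Linear case.} Let $\Pi_0=|0^a\rangle\langle0^a|\otimes I$ and define
\[
W_{\rm lin}:=\bigl(I\otimes O_g^\dagger\bigr)\,U_{E,h}(T)\,\bigl(I\otimes O_y\bigr).
\]
By \eqref{eq:semigroup-block-encoding},
\[
\langle 0^a,0|W_{\rm lin}|0^a,0\rangle=\frac{\langle\bg_h|E_h(T)|\by_h(0)\rangle}{\alpha_{E,h}}+\bigO(\eps_{\rm be}),
\]
and therefore
\[
Q_h^{\rm lin}(T)=\alpha_{E,h}\beta_{g,h}\beta_{y,h}\,\langle 0|W_{\rm lin}|0\rangle+\bigO\bigl(\alpha_{E,h}\beta_{g,h}\beta_{y,h}\,\eps_{\rm be}\bigr).
\]
The real and imaginary parts of $\langle0|W_{\rm lin}|0\rangle$ are obtained from a Hadamard test with one extra control qubit. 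For the real part, the probability of outcome $0$ is $(1+\mathrm{Re}\langle0|W|0\rangle)/2$, and similarly with an $S^\dagger$ phase for the imaginary part. Amplitude estimation of this probability to accuracy $\delta$ uses $\bigO(\delta^{-1}\log(1/\eta))$ controlled calls to $W_{\rm lin}$, with failure probability $\eta$ made small by median amplification. I will take $\delta=\eps/(3\alpha_{E,h}\beta_{g,h}\beta_{y,h})$ and $\eps_{\rm be}$ of the same order. The latter costs only polylogarithmic overhead, because the block-encoding error enters the QSVT/contour constructions only logarithmically. This yields \eqref{eq:linear-E-query-cost}, and each call to $W_{\rm lin}$ uses one $U_{E,h}(T)$, one $O_y$, and one $O_g^\dagger$. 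Since the prefactor is known from the data-loading assumptions, rescaling the estimate is classical.

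\textbf{Quadratic case.} Let $U_M$ be the $\alpha_{M,h}$-block encoding of $\calM_h$. I will compose $U_{E,h}(T)$, then $U_M$, then $U_{E,h}(T)^\dagger$, each acting on its own ancilla register. Standard block-encoding product rules make the composition an $\alpha_{E,h}^2\alpha_{M,h}$-normalized block encoding of $E_h(T)^\dagger\calM_hE_h(T)$, because $E_h(T)$ is Hermitian and the adjoint of a block encoding block-encodes the adjoint. Sandwiching this product between $O_y$ and $O_y^\dagger$ gives
\[
\langle0|W_{\rm quad}|0\rangle=\frac{Q_h^{\rm quad}(T)}{\alpha_{E,h}^2\alpha_{M,h}\beta_{y,h}^2}+\bigO(\eps_{\rm be}).
\]
The matrix element is real because $\calM_h$ is Hermitian, so a single real-part Hadamard test suffices. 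Amplitude estimation at accuracy $\eps/(\alpha_{E,h}^2\alpha_{M,h}\beta_{y,h}^2)$ then gives \eqref{eq:quadratic-E-query-cost}, with two calls to $U_{E,h}(T)$ and one call to $U_M$ per use.

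\textbf{Main obstacle.} The delicate step is error bookkeeping rather than any structural difficulty. The $\bigO(\eps)$ block-encoding error in \eqref{eq:semigroup-block-encoding} is amplified by the known prefactors, and in the quadratic case errors from the two semigroup factors combine (errors add under products of contractions). I will handle this by requiring the block-encoding precision to be $\eps/(C\alpha_{E,h}^2\alpha_{M,h}\beta_{y,h}^2)$ and noting that this costs only logarithmically in $U_{E,h}(T)$, which is absorbed in the tilde.

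The total additive error is then bounded by a three-way split of $\eps$: block-encoding bias, amplitude-estimation error, and the failure-probability contribution controlled through $\eta$.
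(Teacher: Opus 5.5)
Your proposal is correct and follows the paper's argument. For the linear quantity you read off the block-encoded transition amplitude $\langle 0^a,\bg_h|U_{E,h}(T)|0^a,\by_h(0)\rangle$ by a Hadamard test and amplitude estimation. For the quadratic quantity you compose the block encodings into an $\alpha_{E,h}^2\alpha_{M,h}$-normalized encoding of $E_h(T)^\dagger\calM_hE_h(T)$ and estimate its expectation in $\lvert\by_h(0)\rangle$. The paper delegates that last step to Rall's block-encoded observable-estimation procedure, whereas you write it out as a real-part Hadamard test with amplitude estimation, which is the same mechanism.
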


\begin{proof}
For the linear quantity, the block-encoding identity gives
\begin{equation}
  \bigl\langle0^a,\bg_h\bigr|
  U_{E,h}(T)
  \bigl|0^a,\by_h(0)\bigr\rangle
  =
  \frac{
    Q_h^{\rm lin}(T)
  }{
    \alpha_{E,h}\beta_{g,h}\beta_{y,h}
  }
  +\bigO(\epsilon).
  \label{eq:linear-block-matrix-element}
\end{equation}
The real and imaginary parts of this transition amplitude are obtained
by the Hadamard test applied to coherent preparation of the two
branches
\[
  \lvert0^a,\bg_h\rangle,
  \qquad
  U_{E,h}(T)\lvert0^a,\by_h(0)\rangle
\]
\cite{CleveEtAl1998QuantumAlgorithmsRevisited,
GilyenPoremba2022Fidelity}.
Applying amplitude estimation to the resulting ancilla-measurement
probability gives the query count
\eqref{eq:linear-E-query-cost}.  

For the quadratic quantity, define the pure-state density matrix
\[
  \rho_{y,h}
  :=
  \lvert\by_h(0)\rangle
  \langle\by_h(0)\rvert
\]
and the effective Hermitian observable $\mathcal O_h^{\rm quad}(T)
  :=
  E_h(T)^\dagger\calM_hE_h(T).$
Then
\begin{equation}
  Q_h^{\rm quad}(T)
  =
  \beta_{y,h}^2
  \operatorname{Tr}
  \left(
    \rho_{y,h}\mathcal O_h^{\rm quad}(T)
  \right).
  \label{eq:quadratic-rall-form}
\end{equation}
Block-encoding composition gives
$\mathcal O_h^{\rm quad}(T)$ with normalization
\(
  \alpha_{E,h}^2\alpha_{M,h}.
\)
Applying the block-encoded observable-estimation procedure of
\cite{Rall2020} gives \eqref{eq:quadratic-E-query-cost}.
\end{proof}

The reduced-coordinate scaling removes artificial grid-volume factors,
but the physical readout may still become stronger under mesh
refinement.  We quantify this dependence through the readout orders
\begin{equation}
  \beta_{g,h}
  \le C_g h^{-\chi_{\rm lin}},
  \qquad
  \alpha_{M,h}
  \le C_M h^{-\chi_{\rm quad}},
  \qquad
  0\le\chi_{\rm lin},\chi_{\rm quad}\le2,
  \label{eq:direct-readout-orders}
\end{equation}
with mesh-independent constants $C_g$ and $C_M$.  The input norm
$\beta_{y,h}$ is treated as a fixed physical quantity.  We write $\chi$ for the corresponding
readout order.

\Cref{prop:direct-observable-from-E} separates the inference problem
from the dynamical implementation.  Once a block encoding of $E_h(T)$
is available, the linear and quadratic quantities follow from,
respectively, transition-amplitude estimation and block-encoded
expectation estimation.  The remaining task is therefore to construct
$U_{E,h}(T)$ with favorable normalization and coherent cost. 

Next we present three realizations of this block encoding $U_{E,h}$.  The QSVT construction uses direct access
to the second-order generator $L_h$.  The Gaussian-dilation and
contour--resolvent constructions exploit the first-order factor
$A_h$ in
\[
  L_h=A_h^\dagger A_h.
\]
The contour formulation will also provide the dynamical primitive for
the multilevel estimator in \cref{sec:multilevel}.

All internal block-encoding, polynomial-approximation, quadrature, and
Hamiltonian-simulation errors below are chosen as fixed fractions of
the final tolerance after multiplication by the known physical
prefactors.  This affects only logarithmic factors, so the same symbol
$\eps$ is used throughout the discussions.

\subsection{QSVT implementation of the parabolic semigroup}
\label{subsec:qsvt}

We first construct a block encoding of the semigroup directly by QSVT
\cite{GilyenSuLowWiebe2019}.  Let
\begin{equation}
  X_h:=\frac{L_h}{\alpha_L},
  \qquad
  \alpha_L\ge\norm{L_h},
  \qquad
  \alpha_L=\Theta(h^{-2}),
  \label{eq:qsvt-normalized-generator}
\end{equation}
and assume that $X_h$ has a normalized Hermitian block encoding.  Since
$L_h\succeq0$, its spectrum lies in $[0,1]$.  With
\begin{equation}
  \tau:=T\alpha_L,
  \label{eq:tau-qsvt}
\end{equation}
the desired semigroup is
\begin{equation}
  E_h(T)=e^{-TL_h}=e^{-\tau X_h}.
  \label{eq:qsvt-normalized-exp}
\end{equation}

To place the exponential in the standard bounded QSVT setting, define
\begin{equation}
  Y_h:=I-X_h.
  \label{eq:shifted-B}
\end{equation}
A normalized block encoding of $Y_h$ is obtained from that of $X_h$ by
a standard affine block-encoding transformation, with no polynomial
dependence on $h$, $T$, or $\eps$.  Since
$\operatorname{spec}(Y_h)\subset[0,1]$,
\begin{equation}
  E_h(T)
  =
  e^{-\tau(I-Y_h)}
  =
  F_\tau(Y_h),
  \qquad
  F_\tau(x):=e^{-\tau(1-x)}.
  \label{eq:qsvt-shifted-exp}
\end{equation}
The function $F_\tau$ is bounded by one on the full QSVT signal interval
$[-1,1]$.

Corollary~64 of \cite{GilyenSuLowWiebe2019} gives an efficiently
constructible real polynomial $p_{\tau,\eps}$ satisfying
\begin{equation}
  \max_{x\in[-1,1]}
  \left|
    F_\tau(x)-p_{\tau,\eps}(x)
  \right|
  \le \eps
  \label{eq:gslw-exponential-approximation}
\end{equation}
with degree
\begin{equation}
  \deg(p_{\tau,\eps})
  =
  \bigO\!\left(
    \sqrt{
      \max\{\tau,\log(1/\eps)\}
      \log(1/\eps)
    }
  \right).
  \label{eq:gslw-exponential-degree}
\end{equation}
Rescaling the polynomial by $1+\bigO(\eps)$ and applying the standard
even--odd parity decomposition makes it QSVT-admissible, with only
constant circuit overhead.  The resulting transformation is an
$\bigO(1)$-normalized block encoding of $E_h(T)$ with operator error
$\bigO(\eps)$.

\begin{proposition}[QSVT block encoding of the parabolic semigroup]
\label{prop:qsvt-heat}
Suppose that $X_h=L_h/\alpha_L$ has a normalized Hermitian block
encoding, where $\alpha_L\ge\norm{L_h}$.  Then QSVT produces an
$\bigO(1)$-normalized block encoding of
$E_h(T)=e^{-TL_h}$ with operator error $\bigO(\eps)$ using
\begin{equation}
  d_{\rm qsvt}
  =
  \bigO\!\left(
    \sqrt{
      \max\{T\alpha_L,\log(1/\eps)\}
      \log(1/\eps)
    }
  \right)
  \label{eq:qsvt-degree-max}
\end{equation}
queries to the block encoding of $X_h$.  Equivalently,
\begin{equation}
  d_{\rm qsvt}
  =
  \bigO\!\left(
    \sqrt{T\alpha_L\log(1/\eps)}
    +
    \log(1/\eps)
  \right).
  \label{eq:qsvt-degree}
\end{equation}
Consequently, the coherent query complexity is
\begin{equation}
  \mathcal T_{\rm QSVT}(h,T,\eps)
  =
  \widetildeO\!\left(
    1+\sqrt{T\alpha_L}
  \right)
  =
  \widetildeO\!\left(
    1+\frac{\sqrt T}{h}
  \right).
  \label{eq:qsvt-cost}
\end{equation}
In the resolved positive-time regime considered below, this simplifies
to $\widetildeO(\sqrt T/h)$.
\end{proposition}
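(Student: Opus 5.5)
The plan is to assemble the result from the preceding discussion in four steps: spectral reduction, polynomial approximation, QSVT, and cost accounting.

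\emph{Spectral reduction.} Since $L_h=L_h^\dagger\succeq0$ and $\alpha_L\ge\norm{L_h}$, the spectrum of $X_h$ lies in $[0,1]$. The affine map \eqref{eq:shifted-B} is realized by an LCU of the identity and the block encoding of $X_h$, with constant normalization. One can also avoid the shift entirely by applying QSVT directly to $X_h$ with the function $x\mapsto e^{-\tau x}$, which the same corollary covers. Either route yields a normalized Hermitian block encoding of $Y_h$ with spectrum in $[0,1]$, at $\bigO(1)$ extra cost per query. Then $E_h(T)=F_\tau(Y_h)$ by \eqref{eq:qsvt-shifted-exp}. This holds exactly because $F_\tau$ is applied through the spectral decomposition of the Hermitian matrix $Y_h$.

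\emph{Polynomial approximation.} Invoke Corollary~64 of \cite{GilyenSuLowWiebe2019} to obtain $p_{\tau,\eps}$ with the uniform error \eqref{eq:gslw-exponential-approximation} and degree \eqref{eq:gslw-exponential-degree}. Since $|F_\tau|\le1$ on $[-1,1]$, we have $|p_{\tau,\eps}|\le1+\eps$. Dividing by $1+\eps$ costs an additional $\bigO(\eps)$ error. Split the result into its even and odd parts, each bounded by $1+\bigO(\eps)$ on $[-1,1]$; after a further constant rescaling, each is QSVT-admissible. Apply Hermitian QSVT to each part, then recombine them with a one-ancilla LCU. This gives a block encoding of $p_{\tau,\eps}(Y_h)/c$ with $c=\bigO(1)$.

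\emph{Error.} Because $Y_h$ is Hermitian, the spectral theorem gives
\[
\norm{p_{\tau,\eps}(Y_h)-F_\tau(Y_h)}\le\max_{x\in[0,1]}|p_{\tau,\eps}(x)-F_\tau(x)|\le\eps .
\]
The result is therefore an $\bigO(1)$-normalized block encoding of $E_h(T)$ with operator error $\bigO(\eps)$. If the block encoding of $X_h$ is itself only approximate, the error propagates through the polynomial. I expect this robustness bookkeeping to be the only delicate point. I would handle it with the standard QSVT robustness bound: a degree-$d$ polynomial amplifies input error by at most $\bigO(d)$, or $\bigO(d\sqrt{\cdot})$. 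This only requires setting the input accuracy to $\eps/\poly(d)$, which affects logarithmic factors.

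\emph{Cost.} The query count equals the degree, which gives \eqref{eq:qsvt-degree-max}. Using $\sqrt{\max\{a,b\}b}\le\sqrt{ab}+b$ gives \eqref{eq:qsvt-degree}. Substituting $\tau=T\alpha_L$ and $\alpha_L=\Theta(h^{-2})$, and suppressing logarithms, gives \eqref{eq:qsvt-cost}. In the resolved regime $\sqrt T/h\gtrsim1$, the constant term is absorbed.
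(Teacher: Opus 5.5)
Your route is the paper's: shift to $Y_h=I-X_h$, apply Corollary~64 of \cite{GilyenSuLowWiebe2019} to $F_\tau$, rescale, split by parity, and count the degree. The spectral-theorem error bound, the robustness remark, and the inequality $\sqrt{\max\{a,b\}\,b}\le\sqrt{ab}+b$ are all fine.

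The gap is your justification of the shift, and that is exactly where the square-root speedup comes from. An LCU of $I$ and $-X_h$ has normalization equal to the $\ell_1$-norm of its coefficients, here $2$. So it encodes $(I-X_h)/2$, not $Y_h$, and the factor $2$ matters: $F_\tau(Y_h/2)=e^{-\tau/2}E_h(T/2)$, which is the wrong time with an exponentially small prefactor.

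To recover $E_h(T)$ from $Y_h/2$ you would instead need a polynomial $q$ with $|q|\le1$ on $[-1,1]$ and $q(\mu)\approx e^{-\tau(1-2\mu)}$ uniformly on $[0,\tfrac12]$. Near the interior point $\mu=\tfrac12$, this target falls from $1$ to $e^{-1}$ over an interval of length $1/(2\tau)$. Bernstein's inequality gives $|q'(\mu)|\le d/\sqrt{1-\mu^2}\le 2d/\sqrt3$ there, which forces $d=\Omega(\tau)=\Omega(T/h^2)$.

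Your alternative of applying $x\mapsto e^{-\tau x}$ directly to $X_h$ fails for the same reason. Corollary~64 does not cover it, since $p_{\tau,\eps}(1-x)$ is controlled only for $x\in[0,2]$, not on $[-1,0)$. Because the fast decay sits at the interior point $x=0$, Bernstein again gives $d=\Omega(\tau)$. The $\sqrt\tau$ degree arises only because $F_\tau$ varies rapidly at the endpoint $x=1$, where Markov's inequality permits derivatives of order $d^2$. The bottom of $\operatorname{spec}(L_h)$ must therefore be mapped exactly onto that endpoint.

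So you need a normalization-one block encoding of $Y_h$, or equivalently of $2X_h-I$. The paper simply asserts this as a standard affine transformation. Since LCU cannot supply it, you should either make it part of the access hypothesis or build it from additional structure.

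One way to build it uses the Gram factor in \eqref{eq:A-factor}. Apply QSVT with the degree-two Chebyshev polynomial $\sigma\mapsto2\sigma^2-1$ to $A_h/\alpha_A$. This gives $Z=2A_h^\dagger A_h/\alpha_A^2-I$ exactly, with normalization one and two queries. Corollary~64 applied to $-Z$ with $\beta=T\alpha_A^2/2$ then returns $E_h(T)$ at degree $\widetildeO(1+\alpha_A\sqrt T)$. This is the claimed scaling, with $\alpha_A^2$ in place of $\alpha_L$.
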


\subsection{Gaussian dilation from discrete-gradient access}
\label{subsec:gaussian-rep}

The second implementation starts from a block encoding of the first-order
factor $A_h$ in \eqref{eq:A-factor}, with normalization
$\alpha_A\ge\norm{A_h}$ and $\alpha_A=\Theta(h^{-1})$.  Its analytic
starting point is the Hubbard--Stratonovich identity
\begin{equation}
  e^{-Tx^2}
  =\frac{1}{\sqrt{4\pi T}}
   \int_{\R}e^{-\omega^2/(4T)}e^{\ii \omega x}\,\dd \omega,
  \qquad T>0.
  \label{eq:gaussian-identity}
\end{equation}
To turn the oscillatory factor into a Hamiltonian simulation, introduce only
at this stage the Hermitian dilation and the projection onto its primal
block,
\begin{equation}
  H_{A_h}:=
  \begin{bmatrix}
    0&A_h^\dagger\\
    A_h&0
  \end{bmatrix},
  \qquad
  \Pi_h:=\begin{bmatrix}I&0\end{bmatrix}.
  \label{eq:H-A}
\end{equation}
Since
\begin{equation}
  H_{A_h}^2
  =\begin{bmatrix}
      L_h&0\\
      0&A_hA_h^\dagger
    \end{bmatrix},
  \label{eq:H-A-square}
\end{equation}
functional calculus applied to \eqref{eq:gaussian-identity} gives the exact
block formula
\begin{equation}
  E_h(T)
  =\Pi_h e^{-T H_{A_h}^2}\Pi_h^\dagger
  =\frac{1}{\sqrt{4\pi T}}
   \int_{\R}e^{-\omega^2/(4T)}
   \Pi_h e^{\ii \omega H_{A_h}}\Pi_h^\dagger\,\dd \omega.
  \label{eq:heat-gaussian}
\end{equation}
Thus the nonunitary parabolic semigroup is a Gaussian linear combination of
unitary Hamiltonian evolutions similar to that in  \cite{KharaziAlkadriMandadapuWhaley2026ReactionRates}.  The generator has the first-order scale
$\norm{H_{A_h}}=\norm{A_h}=\Theta(h^{-1})$, while the Gaussian measure is
concentrated on times $|s|=\widetildeO(\sqrt T)$.

\begin{proposition}[Gaussian-dilation block encoding]
\label{prop:gaussian-be}
Under block-encoding access to $A_h/\alpha_A$, truncation and discretization
of \eqref{eq:heat-gaussian}, followed by LCU and optimal Hamiltonian
simulation of $H_{A_h}$, produces a normalized block encoding of $E_h(T)$
with coherent query cost
\begin{equation}
  \mathcal T_{\rm Gauss}(h,T,\eps)
  =\widetildeO\!\left(\alpha_A\sqrt T\right)
  =\widetildeO\!\left(\frac{\sqrt T}{h}\right).
  \label{eq:gaussian-cost}
\end{equation}
\end{proposition}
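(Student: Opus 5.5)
The plan is to discretize the Gaussian integral in \eqref{eq:heat-gaussian} into a finite LCU of Hamiltonian evolutions, and then bound the cost of the longest evolution. We proceed in three steps: truncation, quadrature, and implementation.

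\textbf{Truncation.} Since $\norm{\Pi_h e^{\ii\omega H_{A_h}}\Pi_h^\dagger}\le1$, cutting the integral at $|\omega|\le\Omega_T$ costs at most the Gaussian tail mass,
\[
  \operatorname{erfc}\!\bigl(\Omega_T/(2\sqrt T)\bigr).
\]
Hence $\Omega_T=\Theta(\sqrt{T\log(1/\eps)})$ makes the truncation error $\bigO(\eps)$.

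\textbf{Quadrature.} On $[-\Omega_T,\Omega_T]$ we use a uniform trapezoidal rule with step $\Delta\omega$ and weights $c_k=\Delta\omega\, e^{-\omega_k^2/(4T)}/\sqrt{4\pi T}>0$. The error is analyzed eigenvalue by eigenvalue: the integrand $e^{-\omega^2/(4T)}e^{\ii\omega x}$ with $|x|\le\norm{A_h}\le\alpha_A$ is entire. By Poisson summation, the aliasing error of the trapezoidal rule on the full line equals the sum of $e^{-T(x+2\pi m/\Delta\omega)^2}$ over $m\ne0$. Choosing $\Delta\omega\le\pi/\alpha_A$ keeps every shifted point at distance at least $\pi/\Delta\omega$, so the sum is at most $e^{-T(\pi/\Delta\omega)^2}$ up to constants. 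We therefore take
\[
  \Delta\omega=\Theta\!\Bigl(\min\bigl\{\alpha_A^{-1},\,\sqrt{T/\log(1/\eps)}\bigr\}\Bigr),
\]
which gives aliasing error $\bigO(\eps)$ uniformly over the spectrum of $H_{A_h}$. The resulting number of nodes, $K=2\Omega_T/\Delta\omega=\widetildeO(1+\alpha_A\sqrt T)$, enters only logarithmically through the LCU index register. The $\ell^1$ norm $\sum_k c_k$ is $1+\bigO(\eps)$, since it is a Riemann sum of a probability density; this gives the normalized block encoding with $\alpha_{E,h}=\bigO(1)$.

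\textbf{Implementation.} We use LCU with PREPARE on $\sum_k\sqrt{c_k}\,|k\rangle$ (an efficiently preparable discretized Gaussian) and SELECT $=\sum_k|k\rangle\langle k|\otimes e^{\ii\omega_k H_{A_h}}$. The SELECT is realized by binary controlled powers $e^{\ii 2^j\Delta\omega H_{A_h}}$, with the sign handled by one control. The Hermitian dilation $H_{A_h}/\alpha_A$ has a block encoding built from one call to the block encoding of $A_h/\alpha_A$ and its adjoint. Optimal Hamiltonian simulation (QSP/QSVT) of evolution time $t$ costs
\[
  \bigO\!\bigl(\alpha_A t+\log(1/\eps)\bigr)
\]
queries. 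Summing the geometric series of controlled powers, the total cost is dominated by the longest time $\Omega_T$, giving $\bigO(\alpha_A\Omega_T+\log K\,\log(1/\eps))=\widetildeO(\alpha_A\sqrt T)$. Projecting onto $\Pi_h$ is just restriction to the primal block, which is one more flag qubit. With $\alpha_A=\Theta(h^{-1})$ this yields \eqref{eq:gaussian-cost}.

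\textbf{Main obstacle.} The delicate step is the quadrature: the error must be uniform over the spectrum of $H_{A_h}$ while the step size is tied to $\alpha_A$, and the simulation errors of the $\log K$ controlled powers must be accumulated without losing more than polylogarithmic factors. Allocating $\eps/(3\log K)$ of the error budget to each component handles this.
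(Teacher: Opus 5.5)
Your proposal is correct and follows essentially the same route as the paper's proof sketch. Both truncate the Gaussian at $\Theta(\sqrt{T\log(1/\eps)})$, control the trapezoidal aliasing by Poisson summation with step $\Theta(\min\{\alpha_A^{-1},\sqrt{T/\log(1/\eps)}\})$ (equivalent to the paper's $[\alpha_A+\sqrt{\log(1/\eps)/T}]^{-1}$), use nonnegative LCU weights of total mass $1+\bigO(\eps)$, and bound the cost of QSP/qubitization simulation by the maximal evolution time; your binary controlled-power realization of SELECT just makes explicit the paper's ``coherently selected simulation'' step at the same $\widetildeO(1+\alpha_A\sqrt T)$ cost.
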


\begin{proof}[Proof sketch]
Choose
\begin{equation}
  s_{\max}=\bigO\!\left(\sqrt{T\log(1/\eps)}\right)
  \label{eq:gaussian-cutoff}
\end{equation}
so that the Gaussian tail is within the allocated operator error.  Poisson
summation shows that a trapezoidal spacing
\begin{equation}
  \Delta s
  =\Theta\!\left(
    \left[\alpha_A+\sqrt{\log(1/\eps)/T}\right]^{-1}
  \right)
  \label{eq:gaussian-spacing}
\end{equation}
controls aliasing uniformly for
$\operatorname{spec}(H_{A_h})\subset[-\alpha_A,\alpha_A]$.  The resulting
Gaussian LCU weights are nonnegative and have total mass
$1+\bigO(\eps)$ \cite{ChildsWiebe2012,BerryChildsCleveKothariSomma2015}.
A coherently selected simulation with maximum time $s_{\max}$ has query cost
$\widetildeO(1+\alpha_A s_{\max})$ by QSP or qubitization
\cite{LowChuang2017QSP,LowChuang2019Qubitization}.  Logarithmic factors from
the cutoff, quadrature, and simulation precision are absorbed into
\eqref{eq:gaussian-cost}.  Related Gaussian transmutation and dilation
representations appear in
\cite{JinMaZuazua2026Transmutation,KharaziAlkadriMandadapuWhaley2026ReactionRates}.
\end{proof}

\subsection{A contour--based implementation }
\label{subsec:single-contour}

A third implementation follows from the inverse Laplace transform 
\begin{equation}
  E_h(T)
  =\frac{1}{2\pi\ii}
    \int_{\Gamma_T}e^{zT}(zI+L_h)^{-1}\,\dd z,
  \label{eq:single-inverse-laplace}
\end{equation}
where the Bromwich contour $\Gamma_T$ is deformed into the left half-plane
without crossing the poles $-\operatorname{spec}(L_h)$ while making $e^{zT}$ decay rapidly. The contour geometry is shown in
\cref{fig:heat-contour-geometry}.  Each shifted inverse is realized
through the first-order factor \(A_h\).

\begin{lemma}[Sectorially stable inverse-Laplace quadrature]
\label{lem:single-contour-quadrature}
Let $T\ge1$.  There are constants
$C,c,c_0,C_0>0$ and $\delta_0\in(0,\pi/2)$, independent of $h$, $T$, and
the truncation index $n_{\rm c}$, such that for every integer
$n_{\rm c}\ge3$ one can choose conjugate-symmetric shifts and weights
\begin{equation}
  z_{-k}=\overline{z_k},
  \qquad
  w_{-k}=\overline{w_k},
  \qquad -n_{\rm c}\le k\le n_{\rm c},
  \label{eq:single-contour-symmetry}
\end{equation}
with
\begin{equation}
  |\arg z_k|\le\pi-\delta_0,
  \qquad
  \frac{c_0}{T}\le |z_k|\le\frac{C_0n_{\rm c}}{T},
  \label{eq:single-contour-shift-range}
\end{equation}
such that the quadrature error obeys the bound,
\begin{equation}
  \left\|
    E_h(T)-
    \sum_{k=-n_{\rm c}}^{n_{\rm c}}
      w_k(L_h+z_kI)^{-1}
  \right\|
  \le C\exp\!\left(-c\frac{n_{\rm c}}{\log n_{\rm c}}\right).
  \label{eq:single-contour-quadrature-error}
\end{equation}
In addition, the quadrature coefficients have bounds
\begin{equation}
  \sum_{k=-n_{\rm c}}^{n_{\rm c}}
    \frac{|w_k|}{|z_k|}\le C,
  \qquad
  \sum_{k=-n_{\rm c}}^{n_{\rm c}}
    \frac{|w_k|}{\sqrt{|z_k|}}
    \le\frac{C}{\sqrt T}.
  \label{eq:single-contour-weight-bounds}
\end{equation}
\end{lemma}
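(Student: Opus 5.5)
The plan is to use the standard hyperbolic-contour construction of López-Fernández--Palencia--Schädle and Weideman--Trefethen, rescaled by $T^{-1}$. First, the problem reduces to a scalar one. Since $L_h$ is Hermitian with $\operatorname{spec}(L_h)\subset[\lambda_*,\infty)$, the spectral theorem gives
\[
\norm{E_h(T)-\sum_k w_k(L_h+z_kI)^{-1}}\le\sup_{\lambda\ge0}\Bigl|e^{-T\lambda}-\sum_k\frac{w_k}{\lambda+z_k}\Bigr|.
\]
Substituting $\mu=T\lambda$, $\zeta_k=Tz_k$, $\omega_k=Tw_k$ removes both $T$ and $h$. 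It therefore suffices to build nodes $\zeta_k$ and weights $\omega_k$ for the normalized problem $e^{-\mu}\approx\sum_k\omega_k/(\mu+\zeta_k)$, uniformly in $\mu\ge0$. These are then rescaled. The scale-free bounds $|\arg\zeta_k|\le\pi-\delta_0$ and $c_0\le|\zeta_k|\le C_0n_{\rm c}$ become \eqref{eq:single-contour-shift-range}. The sums $\sum|w_k|/|z_k|=\sum|\omega_k|/|\zeta_k|$ and $\sum|w_k|/\sqrt{|z_k|}=T^{-1/2}\sum|\omega_k|/\sqrt{|\zeta_k|}$ show that \eqref{eq:single-contour-weight-bounds} reduces to $T$-independent bounds on the normalized weights.

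For the contour I would take the hyperbola
\[
\zeta(s)=\sigma+\mu_0\bigl(1-\sin(\alpha+\ii s)\bigr),\qquad \sigma>0,
\]
with the strictly positive vertex $\sigma+\mu_0(1-\sin\alpha)>0$ mentioned in the introduction. In the variable $z=-\zeta$ this curve separates the poles $z=-\mu\in(-\infty,0]$ from $+\infty$. Its asymptotes have angle $<\pi$ from the positive axis, which gives $|\arg\zeta|\le\pi-\delta_0$. The map $s\mapsto\zeta(s)$ is analytic in a horizontal strip $|\operatorname{Im}s|<d$. For $d$ small enough, every deformed contour stays in the sector and away from $-[0,\infty)$. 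On these contours $|\lambda+\zeta|\gtrsim|\zeta|+\lambda$, uniformly in $\lambda\ge0$, by sectorial geometry. The integrand $e^{-\zeta}\zeta'(s)/(\mu+\zeta)$ decays like $\exp(-c\,e^{|s|})$ as $s\to\pm\infty$.

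The trapezoidal rule with step $\eta$ and truncation $|k|\le n_{\rm c}$ then gives the nodes $\zeta_k=\zeta(k\eta)$ and weights $\omega_k=\frac{\eta}{2\pi\ii}e^{-\zeta_k}\zeta'(k\eta)$, up to the orientation sign. Conjugate symmetry $\zeta_{-k}=\overline{\zeta_k}$ follows from $\overline{\sin(\alpha+\ii s)}=\sin(\alpha-\ii s)$. The error has two parts. The strip (discretization) error is $\bigO(e^{-2\pi d/\eta})$, using the uniform resolvent bound on the strip boundary. The truncation error is $\bigO(\exp(-c\mu_0e^{n_{\rm c}\eta}))$. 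Choosing $\eta\asymp\log n_{\rm c}/n_{\rm c}$ and $\mu_0\asymp1$ balances these to $\exp(-cn_{\rm c}/\log n_{\rm c})$, which is \eqref{eq:single-contour-quadrature-error}. This choice gives $|\zeta_k|\le C\mu_0 e^{n_{\rm c}\eta}\asymp n_{\rm c}^{c'}$. To get the linear upper bound $C_0n_{\rm c}$ I would tune the constant in $\eta=a\log n_{\rm c}/n_{\rm c}$ with $a\le1$, so that $e^{n_{\rm c}\eta}\le n_{\rm c}$. The lower bound $|\zeta_k|\ge c_0$ follows from the positive vertex.

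For the weight sums, $|\omega_k|\lesssim\eta\,e^{-\operatorname{Re}\zeta_k}|\zeta'(k\eta)|$. Along the hyperbola $|\zeta'(s)|\asymp|\zeta(s)|\asymp\cosh s$ and $\operatorname{Re}\zeta(s)\gtrsim\cosh s-C$. Hence $\sum|\omega_k|/|\zeta_k|\lesssim\eta\sum_k e^{-c\cosh(k\eta)}$ and $\sum|\omega_k|/\sqrt{|\zeta_k|}\lesssim\eta\sum_k\sqrt{\cosh k\eta}\,e^{-c\cosh(k\eta)}$. Both are Riemann sums of integrable functions, so they are bounded uniformly in $\eta\le1$. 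This yields the two bounds after the rescaling above.

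I expect the main obstacle to be getting uniform resolvent bounds over the whole strip that are simultaneously independent of $\lambda\in[0,\infty)$, which encodes independence from $h$. The deformed contours must neither cross nor approach the negative real axis as $\lambda$ grows. This requires choosing $\alpha,d$ with $\alpha\pm d$ inside $(0,\pi/2)$, as in López-Fernández--Palencia, together with a careful geometric lemma giving $|\mu+\zeta|\ge c(\mu+|\zeta|)$.
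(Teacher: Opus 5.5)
Your argument is essentially the paper's: the same $T^{-1}$-scaled hyperbola with positive vertex, trapezoidal step $\log n_{\rm c}/n_{\rm c}$, the strip-analytic error bound of L\'opez-Fern\'andez--Palencia, and weight sums bounded as Riemann sums of double-exponentially decaying functions. Your scalar spectral reduction replaces the paper's operator-valued lemma applied with $\nu=1$; the paper keeps the operator-valued form because it reuses it for $h_\ell^{-2}D_\ell(z)$, which is not a function of a single Hermitian matrix. Two slips need fixing. First, on your left-opening hyperbola $\operatorname{Re}\zeta(s)\le C-c\cosh s$, so the integrand and weights must carry $e^{+\zeta}$, the residue of $e^{\zeta}/(\zeta+\mu)$ at $\zeta=-\mu$, not $e^{-\zeta}$. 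Second, the step must be exactly $\eta=\log n_{\rm c}/n_{\rm c}$ ($a=1$), because $a<1$ leaves a truncation error $\exp(-c\,n_{\rm c}^{a})$ that does not meet the stated rate.
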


Consequently, it is sufficient to choose
\begin{equation}
  n_{\rm c}
  =\bigO\!\left(
     \log(1/\eps)\log\log(1/\eps)
   \right)
  \label{eq:single-contour-node-count}
\end{equation}
to make the quadrature error $\bigO(\eps)$.

The lemma is a nonoptimized specialization of hyperbolic sinc quadrature for
sectorial inverse Laplace transforms
\cite{LopezFernandezPalenciaSchadle2006,WeidemanTrefethen2007}.  Its proof is
deferred to \cref{app:single-contour-proofs} in a form that also applies to the multilevel analysis in the next section.

\begin{theorem}[Factorized contour block encoding]
\label{thm:single-contour-block-encoding}
Suppose $L_h=A_h^\dagger A_h$ and that $A_h/\alpha_A$ has a block encoding,
with $\alpha_A\ge\norm{A_h}$.  Let $T\ge1$ and choose $n_{\rm c}$ according
to \eqref{eq:single-contour-node-count}. There is a normalized block encoding of $E_h(T)$ with
coherent query cost
\begin{equation}
  \mathcal T_{\rm cont}(h,T,\eps)
  =\bigO\!\left(
    \left[1+\alpha_A\sqrt T+\sqrt{n_{\rm c}}\right]
    \log\!\frac{n_{\rm c}(1+\alpha_A\sqrt T)}{\eps}
  \right)
  =\widetildeO\!\left(\alpha_A\sqrt T\right).
  \label{eq:single-contour-query-complexity}
\end{equation}
For discrete-gradient access with $\alpha_A=\bigO(h^{-1})$,
\begin{equation}
  \mathcal T_{\rm cont}(h,T,\eps)
  =\widetildeO\!\left(1+\frac{\sqrt T}{h}\right).
  \label{eq:single-contour-cost-h}
\end{equation}
\end{theorem}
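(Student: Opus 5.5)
The plan is to realize each quadrature term of \cref{lem:single-contour-quadrature} through the Hermitian dilation $H_{A_h}$ of \eqref{eq:H-A}. That makes every shifted inverse a first-order inverse of condition number $\bigO(1+\alpha_A\sqrt T)$. The resulting resolvents are then combined by an LCU whose one-norm is controlled by the first bound in \eqref{eq:single-contour-weight-bounds}.

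\emph{Reduction to first-order inverses.} For $z$ with $|\arg z|\le\pi-\delta_0$, let $s=\sqrt z$ be the principal root, so $|\arg s|\le\pi/2-\delta_0/2$. Because $H_{A_h}^2=\diag(L_h,A_hA_h^\dagger)$ by \eqref{eq:H-A-square}, the resolvent of $L_h$ is the primal block of $(H_{A_h}^2+zI)^{-1}$. Partial fractions then give
\[
  (L_h+zI)^{-1}
  =\Pi_h\,\frac{1}{2\ii s}\Bigl[(H_{A_h}-\ii s)^{-1}-(H_{A_h}+\ii s)^{-1}\Bigr]\Pi_h^\dagger .
\]
Since $H_{A_h}$ is Hermitian, $H_{A_h}\mp\ii s$ is normal, and its eigenvalues are $\lambda\mp\ii s$ with $\lambda$ real. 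The distance from $\pm\ii s$ to the real axis is $|\operatorname{Re} s|\ge\sin(\delta_0/2)|s|$, so every singular value is at least $c_\delta|z|^{1/2}$.

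\emph{Block-encoding each resolvent.} I block encode $H_{A_h}\mp\ii s_k$ with normalization $\alpha_k:=\alpha_A+|s_k|$, using LCU of the block encoding of $A_h/\alpha_A$ with a scalar phase. I then apply QSVT inversion \cite{GilyenSuLowWiebe2019} with condition number
\[
  \kappa_k=\frac{\alpha_k}{c_\delta|s_k|}\lesssim 1+\alpha_A\sqrt{T/c_0}+\sqrt{C_0n_{\rm c}/c_0}.
\]
This uses the range \eqref{eq:single-contour-shift-range}: $|s_k|\ge\sqrt{c_0/T}$ and $|s_k|\le\sqrt{C_0n_{\rm c}/T}$. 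The cost is $\bigO(\kappa_k\log(1/\eps'))$, and the resulting normalization of $(H_{A_h}\mp\ii s_k)^{-1}$ is $\bigO(|s_k|^{-1})$.

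Combining the two inverses and the prefactor $1/(2\ii s_k)$ yields a block encoding of $(L_h+z_kI)^{-1}$ with normalization $\bigO(|z_k|^{-1})$. I perform the QSVT inversions for all $k$ inside one coherently controlled SELECT. For that, I pad to a common degree $\max_k\kappa_k\log(1/\eps')$, or equivalently index the phase sequences by $k$, so the cost is the maximum rather than the sum over nodes.

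\emph{LCU and error budget.} PREPARE loads weights proportional to $|w_k|/|z_k|$, with phases absorbed into SELECT. The one-norm is $\sum_k|w_k|/|z_k|\le C$ by \eqref{eq:single-contour-weight-bounds}, so the output is an $\bigO(1)$-normalized block encoding of $\sum_k w_k(L_h+z_kI)^{-1}$. By \eqref{eq:single-contour-quadrature-error} with the choice \eqref{eq:single-contour-node-count}, this sum is $\eps$-close to $E_h(T)$. Each inversion error $\eps'$ enters multiplied by the same bounded weights, so $\eps'=\eps/\poly(n_{\rm c})$ suffices. This produces the logarithm $\log\bigl(n_{\rm c}(1+\alpha_A\sqrt T)/\eps\bigr)$ in \eqref{eq:single-contour-query-complexity}. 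Substituting $\alpha_A=\bigO(h^{-1})$ and $n_{\rm c}=\polylog(1/\eps)$ gives \eqref{eq:single-contour-cost-h}.

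\emph{Main obstacle.} The delicate step is the uniform singular-value lower bound. It must hold with the sector angle $\delta_0$ independent of $h$ and $T$, so that the complex shifts never approach the spectrum of the non-Hermitian but normal operator $H_{A_h}\mp\ii s_k$. The same step must also deliver the tight resolvent normalization $\bigO(|z_k|^{-1})$ rather than $\bigO(|z_k|^{-1/2}\kappa_k)$. I would verify this first by working in the eigenbasis of $H_{A_h}$, where everything reduces to scalar bounds on $|\lambda\mp\ii s|$.
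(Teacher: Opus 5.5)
Your argument is correct and reaches the same cost, but by a somewhat different route than the paper's.

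\textbf{The first-order factorization.} The paper does not split the resolvent by partial fractions. It inverts the single first-order matrix $\mathcal K_{A_h}(z)=\rho\,\diag(I,-I)+H_{A_h}$, with $\rho=\sqrt z$, whose square is $H_{A_h}^2+zI$. Hence $(L_h+zI)^{-1}=\rho^{-1}\Pi\,\mathcal K_{A_h}(z)^{-1}\Pi^\dagger$ costs one inversion per node. Because $\mathcal K_{A_h}(z)$ is not normal, its singular values $|\rho\pm\ii\sigma|$ must be computed on two-dimensional invariant subspaces (\cref{lem:complex-shifted-factorization}). Your factorization $(H_{A_h}-\ii s)(H_{A_h}+\ii s)=H_{A_h}^2+zI$ doubles the number of inversions. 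In exchange, the operators are normal, so the same numbers $|\lambda\mp\ii s|$ follow directly from the spectral theorem.

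\textbf{The normalization bookkeeping.} The paper fixes a common normalization $\beta=\bigO(\alpha_A+\sqrt{n_{\rm c}/T})$ and a uniform threshold $s_*=\Omega(T^{-1/2})$. One polynomial then serves every node, and the selected linear-combination-of-inverses result (\cref{lem:selected-inverse-lcu}) applies as stated. The paper then needs the $r=\tfrac12$ weight bound $\sum_k|w_k|/\sqrt{|z_k|}=\bigO(T^{-1/2})$ to reach normalization $\bigO(C_\gamma/s_*)=\bigO(1)$. You instead adapt the inversion to each node. This gives resolvent normalization $\bigO(|z_k|^{-1})$, and the $r=1$ bound $\sum_k|w_k|/|z_k|\le C$ closes the argument. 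The price is $k$-indexed QSVT phase sequences, or padding, inside the SELECT.

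A side benefit of your version is that $\kappa_k=c_\delta^{-1}(1+\alpha_A/|s_k|)$ contains no $\sqrt{n_{\rm c}}$ term. In the paper that term comes from the ratio $\max_k|\rho_k|/\min_k|\rho_k|$ forced by the uniform normalization. Since $n_{\rm c}$ is polylogarithmic, the difference is immaterial.

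If ``normalized'' is read literally, add the paper's one-line remark that the $\bigO(1)$ constant can be removed with constant overhead because $E_h(T)$ is a contraction.
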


The proof begins with the complex-shifted first-order matrix
\begin{equation}
  \mathcal K_h(z)
  =\begin{bmatrix}
      \sqrt z\,I&A_h^\dagger\\
      A_h&-\sqrt z\,I
    \end{bmatrix},
  \label{eq:single-contour-first-order-preview}
\end{equation}
whose square is block diagonal and whose leading inverse block is
$\sqrt z\,(L_h+zI)^{-1}$.  The angular separation in
\eqref{eq:single-contour-shift-range} guarantees a stable branch of
$\sqrt z$ and a first-order condition scale
$\widetildeO(1+\alpha_A\sqrt T)$.  The complete factorization and contour
proof are placed in the appendix.

\begin{figure}[t]
\centering
\begin{tikzpicture}[scale=1.05,>=stealth]

  \draw[->] (-5.5,0) -- (1.8,0) node[right] {$\Re z$};
  \draw[->] (0,-3.5) -- (0,3.5) node[above] {$\Im z$};

  \draw[very thick,red!65!black] (-5.2,0) -- (-0.95,0);
  \node[below left,red!65!black] at (-4.6,0)
    {$-\operatorname{spec}(L_h)$};

  \foreach \x/\lab in {
    -1.15/{-\lambda_*},
    -2.0/{-\lambda_2},
    -2.9/{-\lambda_3},
    -4.0/{-\lambda_4}
  }{
    \fill[black] (\x,0) circle (1.2pt);
    \node[below] at (\x,0) {\small $\lab$};
  }

  \draw[thick,blue,->,
        samples=140,domain=-2:2,smooth,variable=\s]
    plot
    ({1.20*(1-0.707106*cosh(\s))},
     {1.20*0.707106*sinh(\s)});

  \draw[thick,teal!70!black,dashed,->,
        samples=140,domain=-2:2,smooth,variable=\s]
    plot
    ({0.75*(1-0.707106*cosh(\s))},
     {0.75*0.707106*sinh(\s)});

  \node[blue,right] at (-0.68,2.15) {$\Gamma_{T_1}$};
  \node[teal!70!black,right] at (-1.38,1.15) {$\Gamma_{T_2}$};

  \node[align=left] at (-3.55,2.35)
    {\small poles of $(zI+L_h)^{-1}$\\[-1mm]
     \small at $z=-\lambda_j$};

  \node[align=left] at (0.55,-2.75)
    {\small $T_2>T_1$\\[-1mm]
     \small contour scale $\sim T^{-1}$};

\end{tikzpicture}
\caption{Schematic hyperbolic contours for the inverse-Laplace
representation.  The contour $\Gamma_T$ intersects the positive real axis
 and its tails extend into the
second and third quadrants.  Increasing the target time contracts the
contour by the factor \(T^{-1}\).}
\label{fig:heat-contour-geometry}
\end{figure}

 Our PDE-specific choices are
the sectorial contour, whose constants are uniform on the positive spectral
half-axis, and the first-order realization of each shifted inverse through
$A_h$, as shown in \cref{fig:heat-contour-geometry}.  This differs from contour-based matrix decomposition
\cite{wang2025quantum}, where the contour argument is used
to decompose a general nonunitary matrix function into implementable
Hermitian components.  Here the numerical contour is discretized directly
into shifted resolvents.  That form is particularly convenient in the next
section, where the same nodes and weights are retained while a single-level
resolvent is replaced by a fine--coarse corrected resolvent.

As outlined in \cite{McLeanThomee2011}, the contour construction also extends naturally to a nonhomogeneous
equation
\(
  \partial_t u+\mathcal L u=f(t),
  \qquad
  u(0)=u_0.
\)
Indeed, after Laplace transformation,
\(
  \widehat u(z)
  =
  (zI+\mathcal L)^{-1}
  \bigl(u_0+\widehat f(z)\bigr),
\)
yielding
\[
  u(T)
  =
  \frac{1}{2\pi\ii}
  \int_{\Gamma_T}
  e^{zT}
  (zI+\mathcal L)^{-1}
  \bigl(u_0+\widehat f(z)\bigr)\,\dd z.
\]
Thus the same resolvent and contour quadrature apply provided that
\(\widehat f(z)\) admits a holomorphic continuation to the contour
deformation region, satisfies suitable sectorial growth bounds, and can
be coherently evaluated or prepared at the quadrature nodes.  At the
circuit level, the only modification is that the initial-data input is
replaced by the node-dependent right-hand side
\(u_0+\widehat f(z_k)\).

\subsection{Observable estimator and end-to-end complexity}
\label{subsec:pipeline}

With the block-encoding of the semigroup $E_h(T)$, we invoke \cref{prop:direct-observable-from-E} and summarize the cost of estimating directly the linear and quadratic functions.

\begin{theorem}[Direct observable-estimation cost at a fixed mesh $h$]
\label{thm:direct-eps}
Suppose the state-preparation and observable-access assumptions of
\cref{subsec:single-access} hold, and let
\begin{equation}
  \mathcal T_E(h,T,\eps)
  =\widetildeO\!\left(\frac{\sqrt T}{h}\right)
  \label{eq:unified-semigroup-cost}
\end{equation}
be the coherent cost of any one of the three semigroup implementations.
Then amplitude estimation with constant success probability estimates the
linear and quadratic observables to additive physical accuracy $\eps$ at
costs
\begin{align}
  \calC_{\rm dir}^{\rm lin}(h,T,\eps)
  &=\widetildeO\!\left(
      \frac{\sqrt T}{h\eps}
      \norm{\by_h(0)}\norm{\bg_h}
    \right),  \quad 
  \calC_{\rm dir}^{\rm quad}(h,T,\eps)
  =\widetildeO\!\left(
      \frac{\sqrt T}{h\eps}
      \norm{\by_h(0)}^2\alpha_{M,h}
    \right).
  \label{eq:direct-fixed-h-quadratic}
\end{align}
Equivalently, if the relevant readout scale is bounded by
$C_{\rm obs}h^{-\chi}$, then
\begin{equation}
  \calC_{\rm dir}(h,T,\eps)
  =\widetildeO\!\left(
      C_{\rm obs}\frac{\sqrt T}{\eps h^{1+\chi}}
    \right).
  \label{eq:direct-fixed-h}
\end{equation}
\end{theorem}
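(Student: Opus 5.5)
The plan is to combine \cref{prop:direct-observable-from-E} with the coherent cost of a single semigroup call, \eqref{eq:unified-semigroup-cost}. Total cost is (number of controlled uses of $U_{E,h}(T)$ and the oracles) times (gate/query cost per use). \Cref{prop:direct-observable-from-E} supplies the first factor.

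First, I fix the normalization. Each of \cref{prop:qsvt-heat}, \cref{prop:gaussian-be} and \cref{thm:single-contour-block-encoding} gives an $\bigO(1)$-normalized block encoding, so $\alpha_{E,h}=\bigO(1)$. The internal approximation error $\bigO(\eps')$ in \eqref{eq:semigroup-block-encoding} is then propagated into the estimated quantity. In the linear case the matrix element \eqref{eq:linear-block-matrix-element} is rescaled by $\alpha_{E,h}\beta_{g,h}\beta_{y,h}$. Choosing $\eps'=\eps/(C\beta_{g,h}\beta_{y,h})$ therefore makes the bias at most $\eps/2$. The quadratic case is analogous, with the observable $E_h^\dagger M_hE_h$ block-encoded at normalization $\alpha_{M,h}$ and $\eps'=\eps/(C\alpha_{M,h}\beta_{y,h}^2)$. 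Since $\beta_{g,h}$ and $\alpha_{M,h}$ are at most polynomial in $h^{-1}$ by \eqref{eq:direct-readout-orders}, this tighter tolerance enters the per-call cost only through $\log(1/\eps')$, which is absorbed in $\widetildeO$.

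Second, I multiply. The remaining statistical error $\eps/2$ requires $\widetildeO(\beta_{g,h}\beta_{y,h}/\eps)$ controlled uses (linear) or $\widetildeO(\alpha_{M,h}\beta_{y,h}^2/\eps)$ uses (quadratic), by \eqref{eq:linear-E-query-cost} and \eqref{eq:quadratic-E-query-cost}; a median of $\bigO(\log(1/\delta))$ repetitions boosts amplitude estimation to constant success probability. Each use costs $\mathcal T_E=\widetildeO(\sqrt T/h)$ queries. In the quadratic case each use contains two semigroup calls and one call to $M_h$, which changes only the constant. State-preparation and observable-access costs are $\polylog(N_h)=\polylog(h^{-1})$ and are dominated. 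Controlling $U_{E,h}$ at most doubles its cost. Multiplying yields \eqref{eq:direct-fixed-h-quadratic}.

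Finally, I substitute $\beta_{g,h}\le C_gh^{-\chi}$ or $\alpha_{M,h}\le C_Mh^{-\chi}$ and treat $\beta_{y,h}$ as a fixed physical constant. Setting $C_{\rm obs}$ to $C_g\beta_{y,h}$ or $C_M\beta_{y,h}^2$ gives \eqref{eq:direct-fixed-h}. The only step needing care is the error bookkeeping in the first step. The point to check there is that amplifying the tolerance by the mesh-dependent readout prefactor, and by $T$-independent constants, enlarges the QSVT degree, quadrature node count, or Hamiltonian-simulation precision only logarithmically. The explicit complexity formulas \eqref{eq:qsvt-degree}, \eqref{eq:gaussian-cost} and \eqref{eq:single-contour-query-complexity} show this. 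One should also note that $\sqrt T/h\ge1$ in the resolved regime, so the additive $1$ and $\log(1/\eps)$ terms are absorbed.
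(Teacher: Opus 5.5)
Your proposal is correct and follows essentially the same route as the paper, which gives no separate proof: it invokes Proposition~\ref{prop:direct-observable-from-E} with $\alpha_{E,h}=\bigO(1)$, multiplies the resulting query count by the coherent cost $\mathcal T_E=\widetildeO(\sqrt T/h)$, and then substitutes the readout orders. Your explicit split into a bias term (internal tolerance $\eps'$ rescaled by the physical prefactor, costing only logarithms) and a statistical term spells out the paper's blanket remark that internal errors are fixed fractions of the final tolerance; the only quibble is that amplitude estimation already succeeds with constant probability, so the median trick is needed only to boost the success probability to $1-\delta$.
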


\section{The multilevel estimation algorithm}
\label{sec:multilevel}

The direct estimators of \cref{sec:direct} avoid the final-state
normalization penalty, but they still apply a fixed-mesh semigroup inside
every amplitude-estimation call.  To approximate the continuum quantity of
interest to accuracy $\eps$, the mesh width must be sufficiently small,
which leads to a polynomial dependence on $h^{-1}$ in the coherent
semigroup implementation and, for derivative-dependent quantities, in the
readout normalization.  We therefore introduce a hierarchy of nested
discretizations
\[
  h_0>h_1>\cdots>h_L=h,
\]
where the finest level $L$ is chosen to meet the required spatial
accuracy.  The purpose of the multilevel construction is to move the
fine--coarse cancellation inside the quantum map before measurement.

Let $Q_\ell(T)$ denote the quantity of interest computed on level $\ell$.
The exact telescoping identity is
\begin{equation}
  Q_L(T)
  =
  Q_0(T)
  +
  \sum_{\ell=1}^{L}\Delta_\ell(T),
  \qquad
  \Delta_\ell(T)
  :=
  Q_\ell(T)-Q_{\ell-1}(T).
  \label{eq:heat-telescope}
\end{equation}
The coarsest term captures the large-scale part of the response, while
$\Delta_\ell(T)$ represents the additional information introduced when
the mesh is refined from $h_{\ell-1}$ to $h_\ell$.

In classical multilevel Monte Carlo, consecutive discretizations are
coupled so that the variance of $\Delta_\ell$ decreases with the level
\cite{Heinrich2001,Giles2008,Giles2015}.  Quantum-accelerated MLMC for
stochastic differential equations retains this probabilistic coupling and
applies quantum mean estimation on each level
\cite{AnLindenLiuMontanaroShaoWang2021}.  The hierarchy considered here is
deterministic.  Its quantum analogue of variance reduction is a smaller
block-encoding normalization for the level-correction map.

This distinction is important because the number of
amplitude-estimation queries is proportional to the normalization of the
quantity being estimated.  If the fine- and coarse-grid outputs were
estimated in separate circuits and subtracted only afterward, both
circuits would still pay the normalization of a full solution-level
observable, and no multilevel gain would result.  The fine--coarse
cancellation must therefore be implemented coherently, before
measurement, so that the quantum circuit directly encodes the smaller
quantity $\Delta_\ell(T)$.

For standard $P_1/Q_1$ Galerkin discretizations and sufficiently regular
solutions, the interlevel solution correction has physical scale
$\bigO(h_\ell^2)$.  Once this factor is exposed inside the block encoding,
a readout of order $\chi$ has level scale
$\bigO(h_\ell^{2-\chi})$, up to the time-dependent semigroup factor
derived below.  Coarse levels carry the larger contributions, whereas the
more highly resolved levels contribute progressively smaller corrections
and therefore require fewer coherent repetitions.  The multilevel
algorithm exploits this balance rather than repeatedly evaluating the full
finest-grid observable.

The construction has three ingredients.  First, the nested Galerkin
hierarchy is expressed in compatible mass-reduced coordinates.  Second, a
two-level variable gives exact linear and quadratic representations of the
level correction.  Third, a target-time contour formula expresses the
scaled correction through the shifted corrected resolvent in \cref{eq:joint-corrected-resolvent-z}.

\subsection{Nested Galerkin hierarchy and mass-scaled transfer}
\label{subsec:hierarchy}

Rather than using a discretization with a fixed resolution, we consider a hierarchy of discretizations. Let
\begin{equation}
  V_0\subset V_1\subset\cdots\subset V_L\subset H_0^1(\Omega),
  \qquad
  h_\ell=2^{-\ell}h_0,
  \label{eq:nested-spaces}
\end{equation}
be conforming \(P_1\) or \(Q_1\) finite element spaces on a shape-regular,
quasi-uniform refinement hierarchy.  To distinguish the finite element mass
matrix from the quadratic readout \(\calM_\ell\), write
\(\mathsf M_\ell\) and \(\mathsf K_\ell\) for the mass and stiffness
matrices:
\begin{equation}
  (\mathsf M_\ell)_{ij}
  :=(\varphi_j^{(\ell)},\varphi_i^{(\ell)})_{L^2(\Omega)},
  \qquad
  (\mathsf K_\ell)_{ij}
  :=\mathrm a(\varphi_j^{(\ell)},\varphi_i^{(\ell)}).
  \label{eq:level-mass-stiffness}
\end{equation}
The semidiscrete equation and its mass-reduced form are
\begin{equation}
  \mathsf M_\ell\dot{\bq}_\ell+\mathsf K_\ell\bq_\ell=0,
  \qquad
  \by_\ell:=\mathsf M_\ell^{1/2}\bq_\ell,
  \qquad
  L_\ell:=\mathsf M_\ell^{-1/2}\mathsf K_\ell
            \mathsf M_\ell^{-1/2},
  \label{eq:level-reduced-generator}
\end{equation}
so that
\begin{equation}
  \dot\by_\ell(t)=-L_\ell\by_\ell(t),
  \qquad
  \by_\ell(t)=e^{-tL_\ell}\by_\ell(0).
  \label{eq:level-reduced-heat}
\end{equation}

Notice that we have defined $\bm y_\ell$ as $\bm y_{h_\ell}$, $L_\ell$ as $L_{h_\ell}$, and so on.

Let \(I_\ell\) be the nodal coefficient matrix of the natural inclusion
\(V_{\ell-1}\hookrightarrow V_\ell\). To navegatte through the levels,  we use the exact Galerkin hierarchy
\begin{equation}
  \mathsf M_{\ell-1}=I_\ell^\dagger\mathsf M_\ell I_\ell,
  \qquad
  \mathsf K_{\ell-1}=I_\ell^\dagger\mathsf K_\ell I_\ell.
  \label{eq:raw-galerkin-identities}
\end{equation}
The corresponding reduced-coordinate prolongation is
\begin{equation}
  P_\ell
  :=\mathsf M_\ell^{1/2}I_\ell\mathsf M_{\ell-1}^{-1/2}:
  \C^{N_{\ell-1}}\longrightarrow\C^{N_\ell}.
  \label{eq:mass-scaled-prolongation}
\end{equation}
It represents the same finite element function on the two levels and satisfies
\begin{equation}
  P_\ell^\dagger P_\ell=I,
  \qquad
  L_{\ell-1}=P_\ell^\dagger L_\ell P_\ell.
  \label{eq:heat-galerkin-compatibility}
\end{equation}
Thus \(P_\ell\) is the natural inclusion in \(L^2\)-orthonormal
coordinates, and \(P_\ell^\dagger\) is the associated \(L^2\) projection.
This is the standard variational transfer pair in multigrid and
subspace-correction methods
\cite{Hackbusch1985,BramblePasciakXu1990,Xu1992}.

The Dirichlet spectral gap is uniform in the level.  The Rayleigh--Ritz
principle and Poincar\'e inequality give
\begin{equation}
  \lambda_{\min}(L_\ell)
  =\min_{0\ne v_\ell\in V_\ell}
    \frac{\mathrm a(v_\ell,v_\ell)}{\norm{v_\ell}_{L^2(\Omega)}^2}
  \ge \frac{\cmin}{C_{\rm P}^2}
  =:\lambda_*>0,
  \label{eq:level-generator-spectrum}
\end{equation}
while the inverse estimate gives
\(\lambda_{\max}(L_\ell)=\bigO(h_\ell^{-2})\). 

\begin{assumption}[Compatible data, readouts, and level preparation]
\label{ass:compatible-data}
The level data and readouts satisfy
\begin{equation}
  \by_{\ell-1}(0)=P_\ell^\dagger\by_\ell(0),
  \qquad
  \bg_{\ell-1}=P_\ell^\dagger\bg_\ell,
  \qquad
  \calM_{\ell-1}=P_\ell^\dagger\calM_\ell P_\ell.
  \label{eq:compatible-data}
\end{equation}
For every level, a data oracle prepares
\(
  \lvert\by_\ell(0)\rangle
  =\by_\ell(0)/\norm{\by_\ell(0)}
\)
with cost \(\widetildeO(1)\) in powers of \(h_\ell^{-1}\), and the norm
\(\norm{\by_\ell(0)}\) is known.  These physical norms are uniformly
bounded in the hierarchy.  The same convention is used for linear readout
states.
\end{assumption}

\begin{lemma}[Positive-time interlevel smoothing]
\label{lem:interlevel-smoothing}
Assume elliptic \(H^2\) regularity and the Galerkin hierarchy above.  Then
\begin{equation}
  \left\|
    e^{-tL_\ell}-P_\ell e^{-tL_{\ell-1}}P_\ell^\dagger
  \right\|
  \le C\min\left\{1,\frac{h_\ell^2}{t}\right\},
  \qquad t>0.
  \label{eq:interlevel-operator}
\end{equation}
Consequently, under \cref{ass:compatible-data},
\begin{equation}
  \norm{\by_\ell(T)-P_\ell\by_{\ell-1}(T)}
  \le C\frac{h_\ell^2}{T}\norm{\by_\ell(0)},
  \qquad T\ge1.
  \label{eq:interlevel-final-time}
\end{equation}
\end{lemma}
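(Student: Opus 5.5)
The plan is to reduce \eqref{eq:interlevel-operator} to the standard nonsmooth-data semidiscrete Galerkin estimate \eqref{eq:heat-operator-error}, applied on two levels and combined through the exact hierarchy.

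\textbf{Step 1 (trivial branch).} The bound by the constant \(C\) follows from contractivity. Both \(e^{-tL_\ell}\) and \(e^{-tL_{\ell-1}}\) are contractions because \(L_\ell,L_{\ell-1}\succeq0\). By \eqref{eq:heat-galerkin-compatibility}, \(P_\ell\) is an isometry, so \(\norm{P_\ell}=\norm{P_\ell^\dagger}=1\). Hence the difference has norm at most \(2\).

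\textbf{Step 2 (the \(h_\ell^2/t\) branch).} Here I would avoid comparing each level to the PDE through continuous initial data. Instead I would view \(V_\ell\) itself as the "continuum" and \(V_{\ell-1}\subset V_\ell\) as its Galerkin subspace.

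\begin{itemize}[leftmargin=*]
\item Let \(R_{\ell-1}\) be the \(\mathrm a\)-orthogonal Ritz projection \(V_\ell\to V_{\ell-1}\), and let \(\Pi_{\ell-1}=P_\ell P_\ell^\dagger\) be the \(L^2\) projection.
\item \(H^2\) regularity plus Aubin--Nitsche give
\[
\norm{(I-R_{\ell-1})v}_{L^2}\le Ch_\ell\norm{v}_{a},
\qquad
\norm{(I-R_{\ell-1})v}_{L^2}\le Ch_\ell^2\norm{L_\ell v}.
\]
The second bound holds because the discrete elliptic problem \(\mathrm a(v,\cdot)=(L_\ell v,\cdot)\) has solution \(v\), and \(R_{\ell-1}v\) is its Galerkin approximation on the coarser level.
\item With these two estimates, the nonsmooth-data argument of Thom\'ee goes through verbatim. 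I would use either the energy/duality argument with weights \(t\) and \(t^2\), or the resolvent version: \(\norm{(L_\ell+z)^{-1}-P_\ell(L_{\ell-1}+z)^{-1}P_\ell^\dagger}\le Ch_\ell^2\) uniformly on a sector, integrated against \(e^{zt}\) over a contour scaled by \(t^{-1}\).
\item The conclusion is \(\norm{e^{-tL_\ell}v-P_\ell e^{-tL_{\ell-1}}P_\ell^\dagger v}\le Ch_\ell^2t^{-1}\norm{v}\) for all \(v\in V_\ell\). The constant is level-independent since the mesh-size ratio is fixed, \(h_{\ell-1}=2h_\ell\).
\end{itemize}

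The resolvent route fits the paper's contour framework best, so I would try it first.

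\textbf{Step 3 (main obstacle).} The hard part is the uniform-in-\(z\) two-level resolvent estimate. The key point is that the duality argument must use only discrete \(H^2\)-type regularity on \(V_\ell\), meaning \(\norm{L_\ell v}\) controls the coarse approximation error. I would obtain this by the usual trick: take the continuous elliptic solution \(w\) of \(\mathcal Lw=L_\ell v\). Then \(v\) and \(R_{\ell-1}v\) are the Galerkin approximations of \(w\) on \(V_\ell\) and \(V_{\ell-1}\), respectively, so
\[
\norm{v-R_{\ell-1}v}_{L^2}\le\norm{v-w}_{L^2}+\norm{w-R_{\ell-1}v}_{L^2}\le C h_\ell^2\norm{L_\ell v}.
\]
For the shifted problem with sectorial \(z\), I would use a Gårding-type bound \(|z|\norm{e}^2+\norm{e}_a^2\le C|\text{form}|\), valid for \(|\arg z|\le\pi-\delta_0\), and then a duality argument.

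\textbf{Step 4 (final-time consequence).} Compatibility gives \(\by_{\ell-1}(0)=P_\ell^\dagger\by_\ell(0)\), so
\[
\by_\ell(T)-P_\ell\by_{\ell-1}(T)=\bigl(e^{-TL_\ell}-P_\ell e^{-TL_{\ell-1}}P_\ell^\dagger\bigr)\by_\ell(0).
\]
Applying \eqref{eq:interlevel-operator} at \(t=T\) yields \eqref{eq:interlevel-final-time}. For \(T\ge1\) and \(h_\ell\le h_0\) bounded, the minimum is comparable to \(h_\ell^2/T\), after absorbing \(h_0\) into \(C\).
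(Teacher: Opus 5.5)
Your proposal is correct, but it reaches the estimate by a different route than the paper.

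\textbf{The paper's route.} The paper never compares the two discrete levels with each other. Via the mass-reduced isometry it identifies $e^{-tL_j}$ with $\mathcal E_j(t)\Pi_j$ on $L^2(\Omega)$. It then quotes the textbook nonsmooth-data bound $\norm{e^{-t\mathcal L}-\mathcal E_j(t)\Pi_j}\le C\min\{1,h_j^2/t\}$ on each level. The triangle inequality through the common continuous semigroup, together with $h_{\ell-1}=2h_\ell$, finishes the proof.

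\textbf{Your route.} You treat $V_\ell$ as the reference space and $V_{\ell-1}$ as its Galerkin subspace, so you have to re-derive the abstract ingredients yourself. Your auxiliary elliptic solution $w$ with $\mathcal Lw=L_\ell v$ is exactly the right device. It makes $v$ and its coarse Ritz projection the fine and coarse Galerkin approximations of one $H^2$ function. This gives $\norm{S_{\ell-1}-S_\ell}\le Ch_\ell^2$ for $S_\ell:=L_\ell^{-1}$ and $S_{\ell-1}:=P_\ell L_{\ell-1}^{-1}P_\ell^\dagger$. Together with self-adjointness and positivity, that is all the Bramble--Schatz--Thom\'ee--Wahlbin nonsmooth-data argument needs.

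\textbf{Comparison.} Both routes rest on the same continuous $H^2$ regularity, so neither is more general. The paper's argument is shorter. Your resolvent variant is more economical within the paper as a whole. It proves the uniform sectorial bound $\norm{D_\ell(z)}\le Ch_\ell^2$, which is the content of \cref{lem:sectorial-corrected-resolvent}; the paper establishes that separately, again through the continuous resolvent. The semigroup estimate then follows directly from \cref{lem:operator-valued-contour} with $\nu=0$ and $M\sim h_\ell^2$.

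\textbf{A detail to make explicit.} C\'ea plus duality with your Ritz projection as comparison function yields only $\norm{D_\ell(z)}\lesssim h_\ell^2(1+|z|h_\ell^2)$. That covers $|z|h_\ell^2\lesssim1$ but not the whole sector. For $|z|h_\ell^2\ge1$ you need the separate elementary bound $\norm{D_\ell(z)}\le\norm{(L_\ell+zI)^{-1}}+\norm{(L_{\ell-1}+zI)^{-1}}\le C_\delta/|z|\le C_\delta h_\ell^2$, exactly as in the paper.

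Finally, in Step 4 nothing needs absorbing, since $\min\{1,x\}\le x$.
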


The proof is given in \cref{app:interlevel-smoothing}.  The scaled
correction is therefore \(\bigO(T^{-1})\) in operator norm.

\subsection{Joint level dynamics and exact correction observables}
\label{subsec:joint-dynamics}

We now fix two consecutive levels, and define the fine--coarse correction and the
interlevel defect by
\begin{equation}
  \be_\ell(t):=\by_\ell(t)-P_\ell\by_{\ell-1}(t),
  \qquad
  B_\ell:=L_\ell P_\ell-P_\ell L_{\ell-1}.
  \label{eq:coarse-and-error}
\end{equation}
The Galerkin identities \eqref{eq:heat-galerkin-compatibility} imply
\begin{equation}
  B_\ell=(I-P_\ell P_\ell^\dagger)L_\ell P_\ell,
  \qquad
  P_\ell^\dagger B_\ell=0.
  \label{eq:defect-blocks}
\end{equation}
Differentiation gives the exact two-grid  coupled equations
\begin{equation}
  \dot\be_\ell=-L_\ell\be_\ell-B_\ell\by_{\ell-1},
  \qquad
  \dot\by_{\ell-1}=-L_{\ell-1}\by_{\ell-1}.
  \label{eq:unscaled-joint-dynamics}
\end{equation}
Related all-at-once and space--time multigrid formulations for parabolic
equations are discussed in
\cite{HortonVandewalle1995,GanderNeumueller2016SpaceTimeMG}.

Under compatible initial data,
\begin{equation}
  \by_{\ell-1}(0)=P_\ell^\dagger\by_\ell(0),
  \qquad
  \be_\ell(0)=(I-P_\ell P_\ell^\dagger)\by_\ell(0),
  \label{eq:joint-initial-data}
\end{equation}
and motivated by \eqref{eq:interlevel-final-time}, we consider 
\begin{equation}
  \widehat\be_\ell(t):=h_\ell^{-2}\be_\ell(t),
  \qquad
  \bm\phi_\ell(t):=
  \begin{bmatrix}
    \widehat\be_\ell(t)\\[1mm]
    \by_{\ell-1}(t)
  \end{bmatrix}.
  \label{eq:joint-state-phi}
\end{equation}

Then from \cref{eq:unscaled-joint-dynamics}, we arrive at
\begin{equation}
  \frac{\dd}{\dd t}\bm\phi_\ell(t)
  =-\widehat{L}_\ell\bm\phi_\ell(t),
  \qquad
  \widehat{L}_\ell:=
  \begin{bmatrix}
    L_\ell&h_\ell^{-2}B_\ell\\[1mm]
    0&L_{\ell-1}
  \end{bmatrix}.
  \label{eq:explicit-joint-dynamics}
\end{equation}

Our first key observation is that $\bm\phi_\ell$ encodes sufficient information to estimate the observable differences. 
\begin{lemma}[Exact linear and quadratic correction identities]
\label{lem:exact-dynamic-corrections}
Under \cref{ass:compatible-data}, let
\begin{equation}
  Q_\ell^{\rm lin}(T):=\bg_\ell^\dagger\by_\ell(T),
  \qquad
  Q_\ell^{\rm quad}(T):=\by_\ell(T)^\dagger\calM_\ell\by_\ell(T),
  \label{eq:hatted-level-observables}
\end{equation}
and \(\Delta_\ell:=Q_\ell-Q_{\ell-1}\).  Define
\begin{equation}
  \widehat{\bg}_\ell:=
  \begin{bmatrix}h_\ell^2\bg_\ell\\0\end{bmatrix},
  \qquad
  \widehat{\calM}_\ell:=
  \begin{bmatrix}
    h_\ell^4\calM_\ell&h_\ell^2\calM_\ell P_\ell\\[1mm]
    h_\ell^2P_\ell^\dagger\calM_\ell&0
  \end{bmatrix}.
  \label{eq:extended-correction-observables}
\end{equation}
Then
\begin{equation}
  \Delta_\ell^{\rm lin}(T)
  =\widehat{\bg}_\ell^\dagger\bm\phi_\ell(T),
  \qquad
  \Delta_\ell^{\rm quad}(T)
  =\bm\phi_\ell(T)^\dagger\widehat{\calM}_\ell\bm\phi_\ell(T).
  \label{eq:extended-correction-identities}
\end{equation}
\end{lemma}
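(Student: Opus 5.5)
The proof is pure algebra on the two-grid decomposition. By definition of \(\bm\phi_\ell\) in \eqref{eq:joint-state-phi},
\[
  \by_\ell(T)=\be_\ell(T)+P_\ell\by_{\ell-1}(T)=h_\ell^2\widehat\be_\ell(T)+P_\ell\by_{\ell-1}(T)
\]
holds for all \(T\). The identity is purely definitional and uses no dynamics. The only other input is that the coarse level quantities can be rewritten in fine level variables through \cref{ass:compatible-data}.

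\textbf{Linear case.} Using \(\bg_{\ell-1}=P_\ell^\dagger\bg_\ell\), I will write
\[
  Q_{\ell-1}^{\rm lin}(T)=\bg_\ell^\dagger P_\ell\by_{\ell-1}(T).
\]
Subtracting this from \(Q_\ell^{\rm lin}(T)=\bg_\ell^\dagger\bigl(h_\ell^2\widehat\be_\ell+P_\ell\by_{\ell-1}\bigr)\) cancels the coarse term and leaves \(h_\ell^2\bg_\ell^\dagger\widehat\be_\ell(T)\). By \eqref{eq:extended-correction-observables}, this equals \(\widehat\bg_\ell^\dagger\bm\phi_\ell(T)\).

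\textbf{Quadratic case.} Using \(\calM_{\ell-1}=P_\ell^\dagger\calM_\ell P_\ell\), I will write
\[
  Q_{\ell-1}^{\rm quad}=(P_\ell\by_{\ell-1})^\dagger\calM_\ell(P_\ell\by_{\ell-1}).
\]
I then expand \(Q_\ell^{\rm quad}\) with \(\by_\ell=h_\ell^2\widehat\be_\ell+P_\ell\by_{\ell-1}\). This gives the pure correction term \(h_\ell^4\widehat\be_\ell^\dagger\calM_\ell\widehat\be_\ell\), the two cross terms \(h_\ell^2\widehat\be_\ell^\dagger\calM_\ell P_\ell\by_{\ell-1}\) and \(h_\ell^2\by_{\ell-1}^\dagger P_\ell^\dagger\calM_\ell\widehat\be_\ell\), and the coarse term. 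The coarse term cancels exactly against \(Q_{\ell-1}^{\rm quad}\). The three remaining terms match, block by block, the quadratic form \(\bm\phi_\ell^\dagger\widehat\calM_\ell\bm\phi_\ell\): the \((1,1)\) block gives the \(h_\ell^4\) term, the two off-diagonal blocks give the cross terms, and the \((2,2)\) block is zero.

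The only point needing care is bookkeeping. The convention for \(\dagger\) and the placement of \(P_\ell\) versus \(P_\ell^\dagger\) in the off-diagonal blocks must match the sesquilinear expansion, and \(\widehat\calM_\ell\) is Hermitian whenever \(\calM_\ell\) is. I also note that compatibility of \(\by_{\ell-1}(0)\) is not needed for the identity itself. The identity holds for any coarse trajectory, because \(\be_\ell\) is defined relative to \(\by_{\ell-1}\). Compatibility enters only through the coarse level readouts \(\bg_{\ell-1}\) and \(\calM_{\ell-1}\) and later through the smallness bound of \cref{lem:interlevel-smoothing}.
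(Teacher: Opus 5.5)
Your proof is correct and is the same argument as the paper's. You substitute $\by_\ell(T)=P_\ell\by_{\ell-1}(T)+h_\ell^2\widehat\be_\ell(T)$, rewrite the coarse readouts using $\bg_{\ell-1}=P_\ell^\dagger\bg_\ell$ and $\calM_{\ell-1}=P_\ell^\dagger\calM_\ell P_\ell$, and cancel the embedded coarse--coarse term. Your side remark is also accurate: compatibility of the initial data is not needed for the identity itself, only for $\bm\phi_\ell(T)=\widehat E_\ell(T)\by_\ell(0)$ and the smallness estimates.
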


\begin{proof}
Use
\(
  \by_\ell(T)=P_\ell\by_{\ell-1}(T)+h_\ell^2\widehat\be_\ell(T)
\)
and the readout compatibility in \eqref{eq:compatible-data}.  The linear
identity is immediate.  Expanding the quadratic form and subtracting the
embedded coarse--coarse term gives the second identity.
\end{proof}

\subsection{Target-time contour representation}
\label{subsec:joint-contour}

The scaled joint generator in \eqref{eq:explicit-joint-dynamics} is not a
favorable object for direct simulation.  Although its spectrum is the union
of the two diagonal spectra, its Euclidean scale satisfies
\begin{equation}
  \norm{L_\ell}+\norm{L_{\ell-1}}
  =\bigO(h_\ell^{-2}),
  \qquad
  \norm{h_\ell^{-2}B_\ell}
  =\bigO(h_\ell^{-4}).
  \label{eq:scaled-joint-generator-obstruction}
\end{equation}
We therefore construct only the target-time input--output map required by
the observables, rather than simulate the joint generator as an
unstructured ODE system.

Define the scaled correction map and the full two-row transfer map by
\begin{align}
  \widehat{D}_\ell(t)
  &:=
  h_\ell^{-2}
  \left(
    e^{-tL_\ell}
    -
    P_\ell e^{-tL_{\ell-1}}P_\ell^\dagger
  \right),
  \label{eq:scaled-correction-transfer-map}\\
  \widehat E_\ell(t)
  &:=
  \begin{bmatrix}
    \widehat{D}_\ell(t)\\[1mm]
    e^{-tL_{\ell-1}}P_\ell^\dagger
  \end{bmatrix},
  \qquad
  \bm\phi_\ell(t)
  =
  \widehat E_\ell(t)\by_\ell(0).
  \label{eq:scaled-joint-transfer-map}
\end{align}
Only the upper row is needed for a linear correction.  The lower row is
also needed for the cross terms in a quadratic correction.  Neither
estimator prepares a normalized joint state.

Next we give the integral expression of $\widehat E_\ell(t).$ For \(z\notin-\operatorname{spec}(L_j)\), let
\begin{equation}
  R_j(z):=(L_j+zI)^{-1},
  \qquad
  D_j(z)
  :=
  R_j(z)-P_jR_{j-1}(z)P_j^\dagger,
  \quad j\ge1.
  \label{eq:joint-corrected-resolvent-z}
\end{equation}

\begin{lemma}[Laplace transform of the two-level transfer map]
\label{lem:joint-transfer-resolvent}
For \(\Re z>0\),
\begin{align}
  \int_0^\infty e^{-zt}\widehat{D}_\ell(t)\,\dd t
  &=
  h_\ell^{-2}D_\ell(z),
  \label{eq:correction-transfer-resolvent-z}\\
  \int_0^\infty e^{-zt}\widehat E_\ell(t)\,\dd t
  &=
  \begin{bmatrix}
    h_\ell^{-2}D_\ell(z)\\[1mm]
    R_{\ell-1}(z)P_\ell^\dagger
  \end{bmatrix}.
  \label{eq:joint-transfer-resolvent-z}
\end{align}
Consequently,
\begin{equation}
  \widehat{D}_\ell(T)
  =
  \frac{1}{2\pi\ii}
  \int_{\Gamma_T}
  e^{zT}h_\ell^{-2}D_\ell(z)\,\dd z .
  \label{eq:correction-transfer-contour-z}
\end{equation}
\end{lemma}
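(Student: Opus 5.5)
The plan is to reduce everything to the scalar Laplace-transform identity for a single self-adjoint generator, and then use linearity and the conjugation by \(P_\ell\).

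First, fix \(j\in\{\ell-1,\ell\}\). Since \(L_j=L_j^\dagger\succeq\lambda_*I\), the spectral decomposition \(L_j=\sum_i\lambda_i v_iv_i^\dagger\) with \(\lambda_i\ge\lambda_*>0\) gives \(\norm{e^{-tL_j}}\le e^{-\lambda_* t}\). Hence for \(\Re z>0\) the Bochner integral \(\int_0^\infty e^{-zt}e^{-tL_j}\,\dd t\) converges absolutely in operator norm. Evaluating eigenvalue by eigenvalue, \(\int_0^\infty e^{-(z+\lambda_i)t}\,\dd t=(z+\lambda_i)^{-1}\), because \(\Re(z+\lambda_i)>0\). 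This yields \(\int_0^\infty e^{-zt}e^{-tL_j}\dd t=R_j(z)\). The resolvent exists since \(-z\notin\operatorname{spec}(L_j)\subset(0,\infty)\).

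Next, the matrices \(P_\ell\) and \(P_\ell^\dagger\) are constant in \(t\), so they commute with the integral:
\[
\int_0^\infty e^{-zt}P_\ell e^{-tL_{\ell-1}}P_\ell^\dagger\,\dd t=P_\ell R_{\ell-1}(z)P_\ell^\dagger .
\]
Subtracting this from the fine-level identity and multiplying by the constant \(h_\ell^{-2}\) gives \eqref{eq:correction-transfer-resolvent-z}. The lower row \(e^{-tL_{\ell-1}}P_\ell^\dagger\) transforms to \(R_{\ell-1}(z)P_\ell^\dagger\) in the same way. Stacking the two rows gives \eqref{eq:joint-transfer-resolvent-z}.

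For the contour formula \eqref{eq:correction-transfer-contour-z}, I would apply Laplace inversion separately to each resolvent term. By finite-dimensional spectral calculus, \(e^{-TL_j}=\frac{1}{2\pi\ii}\int_{\Gamma_T}e^{zT}R_j(z)\,\dd z\), which is exactly \eqref{eq:single-inverse-laplace} for \(L_j\). Applying it to \(j=\ell\), and to \(j=\ell-1\) sandwiched between \(P_\ell\) and \(P_\ell^\dagger\), and taking the difference scaled by \(h_\ell^{-2}\), produces the claimed formula by linearity.

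The main point to check is that the single contour \(\Gamma_T\) is admissible for both levels simultaneously. Two conditions are needed.
\begin{enumerate}[label=(\roman*)]
\item The contour must separate the poles from the right half-plane. It crosses the real axis at a positive vertex and stays in the sector \(|\arg z|\le\pi-\delta_0\). All poles lie at \(-\operatorname{spec}(L_j)\subset(-\infty,-\lambda_*]\) uniformly in \(j\). So \(\Gamma_T\) encircles every pole of both resolvents once, with the right orientation, and that is enough for the eigenvalue-wise residue computation \(\frac{1}{2\pi\ii}\int_{\Gamma_T}e^{zT}(z+\lambda)^{-1}\dd z=e^{-\lambda T}\).
\item The integrand must decay on the tails so that closing the contour at infinity in the left half-plane contributes nothing. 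On the hyperbolic tails \(\Re z\to-\infty\), so \(e^{zT}\) decays exponentially. Meanwhile \(\norm{R_j(z)}\le 1/\operatorname{dist}(-z,\operatorname{spec}L_j)\) is bounded on the sector by a constant over \(|z|\), uniformly in \(h\).
\end{enumerate}
Since each \(L_j\) is a finite matrix, no further subtlety arises.
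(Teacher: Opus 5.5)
Your proposal is correct and follows the paper's own (main-text) proof. It applies $\int_0^\infty e^{-zt}e^{-tL_j}\,\dd t=(L_j+zI)^{-1}$ by linearity to the rows of $\widehat D_\ell$ and $\widehat E_\ell$, then inverts along the sectorial contour; your checks on pole location and tail decay fill in details the paper leaves implicit. (The appendix gives an alternative derivation via the block inverse of $zI+\widehat L_\ell$ and the shifted defect identity $R_\ell(z)B_\ell R_{\ell-1}(z)=P_\ell R_{\ell-1}(z)-R_\ell(z)P_\ell$, which reaches the same formula.)
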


\begin{proof}
Apply the identity
\[
  \int_0^\infty e^{-zt}e^{-tL_j}\,\dd t
  =
  (L_j+zI)^{-1},
  \qquad \Re z>0,
\]
to the two rows in
\eqref{eq:scaled-correction-transfer-map}--%
\eqref{eq:scaled-joint-transfer-map}.  The last formula follows by inverse
Laplace transformation and deformation to the admissible sectorial
contour.
\end{proof}

It turns out that the required analytic input has normalization, 
\begin{equation}
  \norm{D_\ell(z)}
  \le
  C_\delta
  \frac{h_\ell^2}{1+|z|h_\ell^2},
  \qquad
  \norm{R_j(z)}
  \le
  \frac{C_\delta}{\lambda_*+|z|},
  \qquad
  |\arg z|\le\pi-\delta.
  \label{eq:sectorial-two-level-estimates}
\end{equation}
The first estimate combines a shifted two-level Aubin--Nitsche argument
for \(|z|h_\ell^2\lesssim1\) with the elementary large-shift bound.  Its
proof is given in \cref{lem:sectorial-corrected-resolvent}.

\begin{theorem}[Contour block encoding of the scaled correction]
\label{thm:scaled-correction-contour}
Assume \eqref{eq:sectorial-two-level-estimates}.  For \(T\ge1\) and
\(0<\eps<e^{-2}\), there is a conjugate-symmetric hyperbolic quadrature
with nodes and weights
\[
  \mathcal Z_{T,\eps}
  =
  \{z_k:-n_{\rm c}\le k\le n_{\rm c}\},
  \qquad
  z_{-k}=\overline{z_k},
  \quad
  \omega_{-k}=\overline{\omega_k},
\]
such that
\begin{equation}
  |\arg z_k|\le\pi-\delta_0,
  \qquad
  \frac{c_0}{T}
  \le |z_k|
  \le
  \frac{C_0n_{\rm c}}{T},
  \label{eq:admissible-shift-range}
\end{equation}
and
\begin{align}
  \left\|
    \widehat{D}_\ell(T)
    -
    \sum_{k=-n_{\rm c}}^{n_{\rm c}}
      \omega_k h_\ell^{-2}D_\ell(z_k)
  \right\|
  &\le
  \frac{C}{T}
  \exp\!\left(
    -c\frac{n_{\rm c}}{\log n_{\rm c}}
  \right),
  \label{eq:correction-transfer-quadrature-error}\\
  \sum_{k=-n_{\rm c}}^{n_{\rm c}}
  \frac{|\omega_k|}
       {1+|z_k|h_r^2}
  &\le
  \frac{C}{T},
  \qquad 1\le r\le L.
  \label{eq:correction-transfer-weight-bound}
\end{align}
It is thus sufficient to take
\(
  n_{\rm c}
  =
  \bigO\!\left(
    \log(1/\eps)\log\log(1/\eps)
  \right).
\)

Suppose that \(D_\ell(z_k)\) has a selected block encoding with
\begin{equation}
  \alpha_{D_\ell(z_k)}
  \le C_D h_\ell^2.
  \label{eq:joint-D-access-target-z}
\end{equation}
Then \(\widehat{D}_\ell(T)\) from \cref{eq:correction-transfer-contour-z} has a block encoding with
\begin{equation}
  \alpha_{\widehat{D},\ell}
  =
  \bigO(T^{-1}).
  \label{eq:correction-transfer-block-normalization}
\end{equation}
\end{theorem}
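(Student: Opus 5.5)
We reuse the hyperbolic sinc quadrature of \cref{lem:single-contour-quadrature}, keeping its nodes \(z_k\) and weights \(\omega_k:=w_k\). Recall that the contour is \(z(s)=T^{-1}\mu(1-\sin(\alpha+\ii s))\), or an equivalent parametrization with positive vertex and tails of angle \(\pi-\delta_0\). It is sampled with step \(\eta\asymp\log n_{\rm c}/n_{\rm c}\) and truncated at \(|s|\le n_{\rm c}\eta\). This immediately gives the conjugate symmetry and the shift range \eqref{eq:admissible-shift-range}. The one change from the single-level analysis is the resolvent being integrated. In place of \((L_h+zI)^{-1}\) we use the operator-valued function \(F(z):=e^{zT}h_\ell^{-2}D_\ell(z)\), which is holomorphic off \(-\operatorname{spec}(L_\ell)\cup-\operatorname{spec}(L_{\ell-1})\subset(-\infty,-\lambda_*]\). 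The representation \eqref{eq:correction-transfer-contour-z} is already supplied by \cref{lem:joint-transfer-resolvent}.

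The quadrature error splits into a discretization (strip) part and a truncation part, as in the appendix proof of \cref{lem:single-contour-quadrature}. For both parts the only input is a sectorial bound on \(h_\ell^{-2}D_\ell(z)\) on a strip of parametrized contours. By \eqref{eq:sectorial-two-level-estimates} we have \(\|h_\ell^{-2}D_\ell(z)\|\le C_\delta/(1+|z|h_\ell^2)\le C_\delta\). This is bounded, whereas the single-level resolvent satisfies \(\|R(z)\|\le C/|z|\). Because of the factor \(|z'(s)|\asymp T^{-1}\cosh s\), the single-level argument gives an error \(O(e^{-cn/\log n})\). The same argument with the bound \(C_\delta\) replacing \(1/|z|\sim T/\cosh s\) then gives an error \(O(T^{-1}e^{-cn/\log n})\), up to harmless \(\cosh\)-growth, which the double-exponential decay of \(e^{\Re z T}\) along the tails absorbs. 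This yields \eqref{eq:correction-transfer-quadrature-error}. The main technical point is to check that the strip of contours stays in the sector \(|\arg z|\le\pi-\delta\) uniformly in \(T\) and \(h_\ell\), so that \(C_\delta\) is uniform. This is the same geometric condition already verified for \cref{lem:single-contour-quadrature}. The choice of \(n_{\rm c}\) then follows by solving \(e^{-cn/\log n}\le\eps\).

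For the weight bound \eqref{eq:correction-transfer-weight-bound}, write \(|\omega_k|=\frac{\eta}{2\pi}e^{\Re z_kT}|z'(s_k)|\). Then
\[
\sum_k\frac{|\omega_k|}{1+|z_k|h_r^2}\le\sum_k|\omega_k|\le\frac{C}{T}\,\eta\sum_k e^{\mu(1-\sin\alpha\cosh s_k)}\cosh s_k\le \frac{C}{T}.
\]
The final step holds because the Riemann sum of the double-exponentially decaying integrand is \(O(1)\) uniformly in \(\eta\le1\). The estimate is uniform in \(r\), since we simply drop the denominator.

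Finally, we assemble the block encoding by LCU. PREPARE loads the amplitudes \(\sqrt{|\omega_k|\alpha_{D_\ell(z_k)}}\) with phases, and SELECT applies the selected block encoding of \(D_\ell(z_k)\). The resulting normalization is \(\sum_k|\omega_k|h_\ell^{-2}\alpha_{D_\ell(z_k)}\le C_D\sum_k|\omega_k|=\bigO(T^{-1})\) by \eqref{eq:joint-D-access-target-z} and the weight bound. The quadrature error is \(\bigO(T^{-1}\eps)\), which is consistent with this normalization. Per-node block-encoding errors are handled in the same way, by allocating them a fixed fraction of the tolerance.
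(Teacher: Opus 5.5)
Your proposal is correct and follows essentially the same route as the paper. The paper applies the abstract sectorial quadrature (\cref{lem:operator-valued-contour}) with \(\nu=0\) to the bounded symbol \(h_\ell^{-2}D_\ell(z)\), which gives the \(T^{-1}\) error prefactor; your substitution of \(C_\delta\) for \(1/|z|\) in the single-level argument is exactly that specialization, and it reuses the same \(\mathcal R\)-independent nodes and weights. Your weight bound (\(\sum_k|\omega_k|\le C/T\), i.e.\ the \(r=0\) case) and the LCU normalization \(h_\ell^{-2}\sum_k|\omega_k|\alpha_{D_\ell(z_k)}\le C_D\sum_k|\omega_k|=\bigO(T^{-1})\) also coincide with the paper's.
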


The proof is provided in \cref{proof-contour-correction}. The quadrature estimate is a Banach-space-valued specialization of
sectorial inverse-Laplace quadrature
\cite{LopezFernandezPalencia2004,
LopezFernandezPalenciaSchadle2006}.
Its quantum implementation uses the linear-combination-of-inverses
framework of \cite{TakahiraOhashiSogabeUsuda2022Contour}.  The same
target-time nodes and weights are used on every spatial level; the new
ingredient is that each nodewise inverse is encoded only after the
Galerkin fine--coarse cancellation has been exposed.

\subsection{Explicit corrected-resolvent access models}
\label{subsec:shifted-ritz-schur}

The preceding theorem reduces the algorithmic problem to constructing
\(D_\ell(z_k)\) at the normalization in
\eqref{eq:joint-D-access-target-z}.  We first verify this requirement
directly for a nested sine--Galerkin hierarchy and then give a
Ritz--Schur sufficient condition for structured finite-element
discretizations.

\paragraph{Nested sine--Galerkin hierarchy.}
Consider
\(
  \mathcal L=-\nu\Delta,
  \qquad
  \nu>0,
\)
on \(\Omega=(0,L_{\rm box})^d\) with homogeneous Dirichlet boundary
conditions.  For
\(\boldsymbol j=(j_1,\ldots,j_d)\in\mathbb N^d\), set
\[
  \varphi_{\boldsymbol j}(x)
  :=
  \left(\frac{2}{L_{\rm box}}\right)^{d/2}
  \prod_{q=1}^{d}
  \sin\left(\frac{\pi j_qx_q}{L_{\rm box}}\right),
  \qquad
  \lambda_{\boldsymbol j}
  =
  \nu\left(\frac{\pi}{L_{\rm box}}\right)^2
  |\boldsymbol j|_2^2 .
\]
For this subspace \(V_\ell\)  spanned by the corresponding sine modes , the mesh size is $ h_\ell:=\frac{L_{\rm box}}{n_\ell},$ with
\(
  \mathcal I_\ell:=\{1,\ldots,n_\ell\}^d,
  \mathrm{and } 
  n_\ell=2^\ell n_0.
  \qquad
\)
  This is
an orthonormal Galerkin basis, so the mass matrix is the identity and
\begin{equation}
  L_\ell
  =
  \diag_{\boldsymbol j\in\mathcal I_\ell}
  (\lambda_{\boldsymbol j}).
  \label{eq:sine-diagonal-generator}
\end{equation}
The prolongation \(P_\ell\) retains the coarse modal coefficients and
pads the newly added coefficients with zero. 
Due to the Fourier diagonalization, 
\begin{equation}
  D_\ell(z)
  =
  \diag_{\boldsymbol j\in\mathcal I_\ell}
  \left(
    \frac{\mathbf 1_{\mathcal I_\ell\setminus\mathcal I_{\ell-1}}(\boldsymbol j)}
         {\lambda_{\boldsymbol j}+z}
  \right).
  \label{eq:spectral-corrected-resolvent}
\end{equation}
Every newly added mode satisfies
\(\lambda_{\boldsymbol j}=\Theta(h_\ell^{-2})\).  Since the contour stays
a fixed angle away from the negative real axis, one directly  gets the bound \cref{eq:joint-D-access-target-z}. The oracle is a standard diagonal-function block encoding.  Reversible
arithmetic evaluates the detail-band flag and
\((\lambda_{\boldsymbol j}+z_k)^{-1}\), followed by a controlled
rotation and uncomputation.  If the input is represented in grid
coordinates, this diagonal multiplier is conjugated by the
tensor-product quantum sine transform.  For fixed dimension, the selected
cost is \(\polylog(N_\ell,1/\eps)\).  The same diagonal construction
implements the coarse row
\(e^{-TL_{\ell-1}}P_\ell^\dagger\) with \(\bigO(1)\) normalization.

QFT-based PDE circuits are developed in
\cite{LubaschKikuchiWrightMcKeever2025FourierPDE}, and related spectral
transform circuits have been demonstrated on the Quantinuum H1-1
processor \cite{WrightEtAl2024Wave}.  These works do not implement the
present corrected-resolvent contour construction, but they realize its
principal transform primitive.

\paragraph{Shifted Ritz--Schur factorization.}
For a finite-element hierarchy, choose a detail coordinate map
\(W_\ell:\C^{m_\ell}\to\C^{N_\ell}\) such that
\[
  \Phi_\ell:=[\,P_\ell\ \ W_\ell\,]
\]
is invertible. For \(K_\ell(z):=L_\ell+zI\), define
\[
  G_\ell(z):=P_\ell^\dagger K_\ell(z)W_\ell,
  \qquad
  F_\ell(z):=W_\ell^\dagger K_\ell(z)P_\ell,
  \qquad
  H_\ell(z):=W_\ell^\dagger K_\ell(z)W_\ell.
\]
Then
\begin{equation}
  \Phi_\ell^\dagger K_\ell(z)\Phi_\ell
  =
  \begin{bmatrix}
    K_{\ell-1}(z)&G_\ell(z)\\
    F_\ell(z)&H_\ell(z)
  \end{bmatrix}.
  \label{eq:block-K-z}
\end{equation}
For complex \(z\), \(G_\ell(z)\) and \(F_\ell(z)\) need not be
adjoints. Define
\begin{align}
  \Sigma_\ell(z)
  &:=
  H_\ell(z)
  -
  F_\ell(z)K_{\ell-1}(z)^{-1}G_\ell(z),
  \label{eq:shifted-Schur-z}\\
  J_\ell^R(z)
  &:=
  W_\ell
  -
  P_\ell K_{\ell-1}(z)^{-1}G_\ell(z),
  \notag\\
  J_\ell^L(z)^\dagger
  &:=
  W_\ell^\dagger
  -
  F_\ell(z)K_{\ell-1}(z)^{-1}P_\ell^\dagger.
  \label{eq:shifted-detail-maps-z}
\end{align}

\begin{lemma}[Shifted Ritz--Schur factorization]
\label{lem:D-factorization-z}
For every admissible shift \(z\),
\begin{equation}
  D_\ell(z)
  =
  J_\ell^R(z)\Sigma_\ell(z)^{-1}
  J_\ell^L(z)^\dagger.
  \label{eq:D-factorization-z}
\end{equation}
\end{lemma}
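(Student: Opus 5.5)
The plan is to derive the identity from the block $LDU$ (Schur complement) inversion of the change-of-basis representation \eqref{eq:block-K-z}. Fix an admissible shift $z$, i.e.\ $|\arg z|\le\pi-\delta$.

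First I would record invertibility. Since $L_\ell\succeq\lambda_*I$ and $L_{\ell-1}=P_\ell^\dagger L_\ell P_\ell\succeq\lambda_*I$ by \eqref{eq:heat-galerkin-compatibility} and \eqref{eq:level-generator-spectrum}, both $K_\ell(z)$ and $K_{\ell-1}(z)$ are invertible for such $z$. The lower-right block is then $H_\ell(z)$, and $K_{\ell-1}(z)=P_\ell^\dagger K_\ell(z)P_\ell$ is invertible. Because $\Phi_\ell$ is invertible, the block matrix $\mathcal B(z):=\Phi_\ell^\dagger K_\ell(z)\Phi_\ell$ is invertible. The standard determinant identity $\det\mathcal B=\det K_{\ell-1}\det\Sigma_\ell$ then shows that $\Sigma_\ell(z)$ is invertible.

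Next I would write the block inverse formula:
\[
\mathcal B^{-1}=\begin{bmatrix}K_{\ell-1}^{-1}&0\\0&0\end{bmatrix}
+\begin{bmatrix}-K_{\ell-1}^{-1}G_\ell\\ I\end{bmatrix}\Sigma_\ell^{-1}\begin{bmatrix}-F_\ell K_{\ell-1}^{-1}& I\end{bmatrix}.
\]
From $K_\ell(z)^{-1}=\Phi_\ell\mathcal B^{-1}\Phi_\ell^\dagger$, I would multiply out the two terms.

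The first term gives $\Phi_\ell\begin{bmatrix}K_{\ell-1}^{-1}&0\\0&0\end{bmatrix}\Phi_\ell^\dagger=P_\ell R_{\ell-1}(z)P_\ell^\dagger$. For the second term, the left factor $\Phi_\ell\begin{bmatrix}-K_{\ell-1}^{-1}G_\ell\\ I\end{bmatrix}$ equals $W_\ell-P_\ell K_{\ell-1}^{-1}G_\ell=J_\ell^R(z)$. Likewise, the right factor $\begin{bmatrix}-F_\ell K_{\ell-1}^{-1}& I\end{bmatrix}\Phi_\ell^\dagger$ equals $W_\ell^\dagger-F_\ell K_{\ell-1}^{-1}P_\ell^\dagger=J_\ell^L(z)^\dagger$. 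Subtracting $P_\ell R_{\ell-1}P_\ell^\dagger$ from $R_\ell(z)$ then gives exactly \eqref{eq:D-factorization-z}.

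I expect no serious obstacle, since the argument is purely algebraic. The only care points are the following.
\begin{itemize}
\item The identification of the $(1,1)$ block of $\mathcal B$ with $K_{\ell-1}(z)$ uses both $P_\ell^\dagger P_\ell=I$ (to handle the shift $z$) and the Galerkin compatibility of $L$.
\item For complex $z$, the blocks $F_\ell$ and $G_\ell$ are not adjoint to each other, so the left and right Ritz maps must be kept distinct. This is precisely why $J^L_\ell$ is defined through its adjoint.
\end{itemize}
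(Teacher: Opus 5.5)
Your proposal is correct and follows the paper's own argument: apply the block-inverse (Schur complement) formula to $\Phi_\ell^\dagger K_\ell(z)\Phi_\ell$, use $K_\ell(z)^{-1}=\Phi_\ell\bigl(\Phi_\ell^\dagger K_\ell(z)\Phi_\ell\bigr)^{-1}\Phi_\ell^\dagger$, and subtract the lifted coarse resolvent $P_\ell R_{\ell-1}(z)P_\ell^\dagger$. Your version also writes out details the paper leaves implicit, namely the invertibility of $\Sigma_\ell(z)$ via $\det\mathcal B=\det K_{\ell-1}(z)\det\Sigma_\ell(z)$ and the explicit identification of the two outer factors with $J_\ell^R(z)$ and $J_\ell^L(z)^\dagger$.
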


\begin{proof}
Apply the block-inverse formula to
\(\Phi_\ell^\dagger K_\ell(z)\Phi_\ell\) and subtract the lifted coarse
resolvent \cite{Li2026EndToEndElliptic}.
\end{proof}

\begin{proposition}[Ritz--Schur corrected-resolvent access]
\label{prop:D-from-shifted-ritz-schur}
Suppose, uniformly over the selected level and contour-node registers,
that block encodings of the shifted factors are available with
\begin{equation}
  \alpha_{J_\ell^R(z_k)},
  \alpha_{J_\ell^L(z_k)}
  =
  \bigO(h_\ell),
  \qquad
  \alpha_{\Sigma_\ell(z_k)}
  =
  \bigO(1+|z_k|h_\ell^2),
  \qquad
  s_{\min}(\Sigma_\ell(z_k))
  \ge
  c_\Sigma>0.
  \label{eq:ritz-schur-access-conditions}
\end{equation}
If the selected implementations of these factors have
\(\widetildeO(1)\) cost in powers of \(h_\ell^{-1}\), then
\(D_\ell(z_k)\) has a selected block encoding with
\begin{equation}
  \alpha_{D_\ell(z_k)}
  =
  \bigO(h_\ell^2)
  \label{eq:D-normalization-z}
\end{equation}
and \(\widetildeO(1)\) selected cost in powers of \(h_\ell^{-1}\).
\end{proposition}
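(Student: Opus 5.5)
The plan is to assemble the block encoding of \(D_\ell(z_k)\) directly from the factorization in \cref{lem:D-factorization-z}. We take a product of three selected block encodings: \(J_\ell^R(z_k)\), a QSVT-based block encoding of \(\Sigma_\ell(z_k)^{-1}\), and \(J_\ell^L(z_k)^\dagger\). Throughout, everything is controlled by the level and node registers. The normalization then follows from the multiplicativity of block-encoding normalizations under composition. By \eqref{eq:ritz-schur-access-conditions}, the two Ritz factors contribute \(\bigO(h_\ell)\cdot\bigO(h_\ell)=\bigO(h_\ell^2)\). It therefore suffices to show that the inverse Schur factor can be block encoded with normalization \(\bigO(1)\), uniformly in \(\ell\) and \(k\), at cost polylogarithmic in \(h_\ell^{-1}\).

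First, the adjoint \(J_\ell^L(z_k)^\dagger\) is encoded by inverting the circuit for \(J_\ell^L(z_k)\), with the same normalization. Second, for the inverse, set \(\alpha_\Sigma:=\alpha_{\Sigma_\ell(z_k)}=\bigO(1+|z_k|h_\ell^2)\). We then have \(s_{\min}(\Sigma_\ell(z_k)/\alpha_\Sigma)\ge c_\Sigma/\alpha_\Sigma=:1/\kappa\). Since \(\Sigma_\ell(z_k)\) is non-Hermitian for complex \(z_k\), we apply QSVT singular-value inversion to the odd polynomial approximating \(1/x\) on \([1/\kappa,1]\) (Gily\'en--Su--Low--Wiebe). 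This produces a block encoding of \(\Sigma_\ell(z_k)^{-1}\) with normalization \(\bigO(\kappa\cdot\alpha_\Sigma^{-1}\cdot\alpha_\Sigma)\). The normalization is \(\bigO(1/c_\Sigma)=\bigO(1)\), and the query cost is \(\bigO(\kappa\log(\kappa/\eps))\). To apply the inversion to all nodes coherently, we fix one polynomial for the worst-case \(\kappa_{\max}=\max_k\alpha_{\Sigma_\ell(z_k)}/c_\Sigma\). We then pad the per-node normalizations to the common value \(\alpha_{\Sigma,\max}\), using an extra ancilla rotation in the usual way.

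The main obstacle is this uniform condition number. By \eqref{eq:admissible-shift-range}, \(|z_k|\le C_0n_{\rm c}/T\) with \(T\ge1\), so \(|z_k|h_\ell^2\le C_0n_{\rm c}h_0^2\). This gives \(\kappa_{\max}=\bigO(1+n_{\rm c})=\bigO(\log(1/\eps)\log\log(1/\eps))\), which is polylogarithmic and independent of \(h_\ell\). I would state this explicitly, since a priori the Schur normalization could look mesh-dependent. The remark that the selected shifts are those of \cref{thm:scaled-correction-contour} closes this gap. If instead one wanted shifts with \(|z|h_\ell^2\) unbounded, the inversion cost would grow. Even then, the normalization of the inverse would remain \(\bigO(1)\), because QSVT normalization depends only on \(1/c_\Sigma\).

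Finally, we compose the three encodings. This takes constant ancilla overhead, and the internal inversion error is set to a fraction of \(\eps\) divided by the known prefactors, which costs only logarithmic factors. The total normalization is \(\bigO(h_\ell)\cdot\bigO(1)\cdot\bigO(h_\ell)=\bigO(h_\ell^2)\), which is \eqref{eq:D-normalization-z}. The cost is \(\widetildeO(1)\) calls to each factor's selected implementation, each of which is \(\widetildeO(1)\) in powers of \(h_\ell^{-1}\) by hypothesis, with an additional \(\widetildeO(\kappa_{\max})=\widetildeO(1)\) calls for the Schur inversion.
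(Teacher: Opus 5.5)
Your proposal is correct and follows the paper's proof. As there, you invert the shifted Schur complement by QSVT (the paper calls this Hermitian-dilation inversion, which is equivalent) to get an \(\bigO(1/c_\Sigma)\)-normalized inverse at cost \(\widetildeO(1+|z_k|h_\ell^2)=\widetildeO(1)\), using \(|z_k|\le C_0n_{\rm c}/T\), and then compose with the two \(\bigO(h_\ell)\) Ritz maps. One small slip: \(\bigO(\kappa\cdot\alpha_\Sigma^{-1}\cdot\alpha_\Sigma)\) simplifies to \(\bigO(\kappa)\). What you mean is that QSVT encodes \((\Sigma_\ell(z_k)/\alpha_\Sigma)^{-1}\) at normalization \(\bigO(\kappa)\), so \(\Sigma_\ell(z_k)^{-1}\) is encoded at \(\bigO(\kappa/\alpha_\Sigma)=\bigO(1/c_\Sigma)\), which is the conclusion you state.
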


\begin{proof}
The Hermitian-dilation inversion of \(\Sigma_\ell(z_k)\) has inverse
normalization \(\bigO(1)\) and query cost
\[
  \widetildeO\!\left(
    \frac{\alpha_{\Sigma_\ell(z_k)}}
         {s_{\min}(\Sigma_\ell(z_k))}
  \right)
  =
  \widetildeO(1+|z_k|h_\ell^2).
\]
Because
\[
  |z_k|
  \le
  \frac{C_0n_{\rm c}}{T},
  \qquad
  n_{\rm c}
  =
  \widetildeO(1),
\]
this cost is \(\widetildeO(1)\) in powers of \(h_\ell^{-1}\) for
\(T\ge1\) and \(h_\ell\le1\).  Composition with the two shifted Ritz
maps proves \eqref{eq:D-normalization-z}.
\end{proof}

\paragraph{A structured finite-element realization.}
We describe a class of structured finite-element hierarchies for which
the analytic conditions in
\eqref{eq:ritz-schur-access-conditions} can be verified.  The
construction is presented first in one dimension, where the detail
space and its scaling are explicit.  These one-dimensional factors can
then be used on Cartesian product domains in fixed dimension, provided
that the coefficient, mesh hierarchy, and selected Ritz--Schur factors
admit fixed-rank or polylogarithmic-rank Kronecker representations.

We consider the classical univariate Faber--Schauder hierarchical
construction associated with dyadic midpoint insertion
\cite{Yserentant1986,Oswald1994}. 
Specifically, let \(\Omega=(0,1)\), let
\(\{\mathcal T_\ell\}_{\ell\ge0}\) be a uniform dyadic refinement
hierarchy, and let \(V_\ell\subset H_0^1(0,1)\) be the associated
continuous \(P_1\) space, each subspace $V_\ell$ associated with elements  $\in\mathcal T_{\ell}.$ 
 Assume first that \(a(x)\) is uniformly elliptic and
piecewise constant on a fixed coarsest partition and that all subsequent
refinements preserve its interfaces.  Hence \(a(x)=a_K\) on every
parent element \(K\in\mathcal T_{\ell-1}\).

For each \(K\in\mathcal T_{\ell-1}\), let
\(\psi_{\ell,K}\in V_\ell\) be the nodal hat function associated with
the midpoint of \(K\).  Its support is \(K\), and we introduce the
energy-scaled detail function
\[
  w_{\ell,K}
  :=
  \sqrt{h_\ell}\,\psi_{\ell,K}.
\]
Let
\(\mathcal U_\ell:\mathbb C^{N_\ell}\to V_\ell\) denote the
mass-reduced mapping, and define the detail-coordinate map \(W_\ell\)
by
\[
  \mathcal U_\ell W_\ell e_K=w_{\ell,K},
\]
where \(e_K\) is the coordinate vector associated with the parent
element \(K\).  With this definition,
\[
  V_\ell
  =
  V_{\ell-1}
  \oplus
  \operatorname{range}(W_\ell).
\]

For every \(v_{\ell-1}\in V_{\ell-1}\), notice that the derivative
\(v_{\ell-1}'\) is constant on \(K\), and therefore
\[
\begin{aligned}
  \mathrm a(v_{\ell-1},w_{\ell,K})
  &=
  \sqrt{h_\ell}\,
  a_Kv_{\ell-1}'|_K
  \int_K\psi_{\ell,K}'(x)\,\dd x                          \\
  &=
  \sqrt{h_\ell}\,
  a_Kv_{\ell-1}'|_K
  \left[
    \psi_{\ell,K}(\partial K_{\rm right})
    -
    \psi_{\ell,K}(\partial K_{\rm left})
  \right]
  =
  0.
\end{aligned}
\]
Thus the detail space is exactly energy-orthogonal to the coarse space, i.e., 
\begin{equation}
  P_\ell^\dagger L_\ell W_\ell=0,
  \qquad
  W_\ell^\dagger L_\ell P_\ell=0.
  \label{eq:one-dimensional-energy-orthogonality}
\end{equation}

A scaling argument on the reference parent element gives constants
\(c_M,C_M,c_W,C_W>0\), independent of \(\ell\), such that
\begin{equation}
  c_Mh_\ell^2\norm{\bm d}^2
  \le
  \norm{W_\ell \bm d}^2
  \le
  C_Mh_\ell^2\norm{\bm d}^2,
  \qquad
  c_W\norm {\bm d}^2
  \le
\bm   d^\dagger W_\ell^\dagger L_\ell W_\ell \bm d
  \le
  C_W\norm {\bm d}^2.
  \label{eq:one-dimensional-detail-riesz-bounds}
\end{equation}
In particular,
\begin{equation}
  \norm{W_\ell}
  =
  \bigO(h_\ell),
  \qquad
  W_\ell^\dagger L_\ell W_\ell
  \asymp I.
  \label{eq:one-dimensional-detail-scaling}
\end{equation}

We next verify the shifted Ritz--Schur estimates. We will examine each of the blocks.  For
\(K_\ell(z)=L_\ell+zI\), energy orthogonality implies that the
off-diagonal blocks in \eqref{eq:block-K-z} consist only of shifted
mass couplings:
\begin{equation}
  G_\ell(z)
  =
  zP_\ell^\dagger W_\ell,
  \qquad
  F_\ell(z)
  =
  zW_\ell^\dagger P_\ell.
  \label{eq:one-dimensional-shifted-couplings}
\end{equation}
Since \(P_\ell\) is isometric and
\(\norm{W_\ell}=\bigO(h_\ell)\), 
\(
  \norm{P_\ell^\dagger W_\ell}
  =
  \bigO(h_\ell).
\)
The sectorial resolvent estimate
\(
  \norm{K_{\ell-1}(z)^{-1}}
  \le
  \frac{C_\delta}{\lambda_*+|z|}
\)
therefore gives
\begin{align}
  \norm{J_\ell^R(z)}
  &\le
  \norm{W_\ell}
  +
  |z|
  \norm{K_{\ell-1}(z)^{-1}}
  \norm{P_\ell^\dagger W_\ell}
  =
  \bigO(h_\ell),
  \notag\\
  \norm{J_\ell^L(z)}
  &=
  \bigO(h_\ell).
  \label{eq:one-dimensional-ritz-map-bounds}
\end{align}

Similarly, we have
\(
  H_\ell(z)
  =
  W_\ell^\dagger L_\ell W_\ell
  +
  zW_\ell^\dagger W_\ell,
\)
and hence
\(
  \norm{H_\ell(z)}
  =
  \bigO(1+|z|h_\ell^2).
\)
Moreover,
\[
\begin{aligned}
  \norm{
    F_\ell(z)K_{\ell-1}(z)^{-1}G_\ell(z)
  }
  &\le
  C
  \frac{|z|^2h_\ell^2}{\lambda_*+|z|}  
  &=
  \bigO(|z|h_\ell^2).
\end{aligned}
\]
It follows from \eqref{eq:shifted-Schur-z} that
\begin{equation}
  \norm{\Sigma_\ell(z)}
  =
  \bigO(1+|z|h_\ell^2).
  \label{eq:one-dimensional-Schur-upper-bound}
\end{equation}

It remains to prove a lower singular-value bound \eqref{eq:ritz-schur-access-conditions}.  For any
\(\bm d\in\mathbb C^{m_\ell}\), set
\[
  \bm c_d
  :=
  -K_{\ell-1}(z)^{-1}G_\ell(z)\bm d,
  \qquad
 \bm  v_d
  :=
  P_\ell \bm  c_d+W_\ell \bm d
  =
  J_\ell^R(z)\bm d.
\]
By the definition of the Schur complement,
\begin{equation}
 \bm  d^\dagger\Sigma_\ell(z)\bm d
  =
 \bm  v_d^\dagger K_\ell(z)\bm v_d.
  \label{eq:Schur-energy-identity}
\end{equation}
Writing \(z=|z|e^{\ii\theta}\), with
\(|\theta|\le\pi-\delta\), rotated sectorial coercivity yields
\[
\begin{aligned}
  \operatorname{Re}\!\left[
    e^{-\ii\theta/2}
   \bm  d^\dagger\Sigma_\ell(z)\bm d
  \right]
  &\ge
  c_\delta
  \left(
    \bm v_d^\dagger L_\ell \bm v_d
    +
    |z|\norm{\bm v_d}^2
  \right).
\end{aligned}
\]
Using \eqref{eq:one-dimensional-energy-orthogonality},
\[
\begin{aligned}
  \bm v_d^\dagger L_\ell \bm v_d
  &=
  (P_\ell \bm c_d)^\dagger L_\ell(P_\ell \bm c_d)
  +
  (W_\ell\bm  d)^\dagger L_\ell(W_\ell\bm  d)  
  &\ge
  c_W\norm{\bm d}^2.
\end{aligned}
\]
Consequently,
\(
  \left|\bm d^\dagger\Sigma_\ell(z)\bm d\right|
  \ge
  c_\delta c_W\norm {\bm d}^2,
\)
and therefore
\begin{equation}
  s_{\min}(\Sigma_\ell(z))
  \ge
  c_\Sigma>0,
  \label{eq:one-dimensional-Schur-lower-bound}
\end{equation}
uniformly over the admissible sector and the mesh hierarchy.

Equations
\eqref{eq:one-dimensional-ritz-map-bounds},
\eqref{eq:one-dimensional-Schur-upper-bound}, and
\eqref{eq:one-dimensional-Schur-lower-bound}
verify the analytic conditions in
\eqref{eq:ritz-schur-access-conditions}.  For the contour nodes,
\[
  \frac{\alpha_{\Sigma_\ell(z_k)}}
       {s_{\min}(\Sigma_\ell(z_k))}
  =
  \bigO(1+|z_k|h_\ell^2)
  \le
  \bigO\!\left(
    1+\frac{n_{\rm c}h_\ell^2}{T}
  \right)
  =
  \widetildeO(1)
\]
in powers of \(h_\ell^{-1}\).  Thus, under selected block-encoding
access to the displayed Ritz--Schur factors at their natural
normalizations, QSVT inversion and composition yield
\[
  \alpha_{D_\ell(z_k)}
  =
  \bigO(h_\ell^2)
\]
with polylogarithmic contour-node dependence.

Finally, exact energy orthogonality uses the fact that \(a\) is constant
on every parent element.  If $a$ is smooth, we expect that the unshifted stiffness coupling is one order smaller in
\(h_\ell\).  This suggests that the same Ritz-map and Schur
normalizations persist for uniformly elliptic Lipschitz coefficients,
but a complete proof requires uniform control of the accumulated
cross-level couplings.

\subsection{Multilevel complexity}
\label{subsec:ml-complexity}

The coarsest contribution \(Q_0(T)\) is a fixed-dimensional problem and
is computed classically.  For each \(\ell\ge1\), the compatible level
states and readouts in \cref{ass:compatible-data} are prepared with
\(\polylog(h_\ell^{-1},1/\eps)\) cost, and their physical norms are known.

For a linear correction, the dynamical primitive is
\(\widehat{D}_\ell(T)\), whose normalization is
\(\bigO(T^{-1})\) by \cref{thm:scaled-correction-contour}. 
The full map
\(\widehat E_\ell(T)=[\,\widehat{D}_\ell(T);e^{-TL_{\ell-1}}P_\ell^\dagger\,]\)
has an \(\bigO(1)\)-normalized block encoding in the access models above.

Assume the levelwise readout order in
\eqref{eq:direct-readout-orders}, with \(0\le\chi\le2\) and uniformly
bounded physical input norms. From \cref{eq:extended-correction-observables}, the linear and quadratic readout contains the factor
\(h_\ell^2\), so
\(
  c_\ell
  =
  \widetildeO\!\left(
    \frac{h_\ell^{2-\chi}}{T}
  \right).
\) It is worthwhile to notice that for quadratic observables,  the fact that the lower right block of $\widehat{M}$ is zero is crucial for this caling. 
By \cref{prop:direct-observable-from-E}, assigning level \(\ell\)
additive error \(\eps_\ell\) costs
\(\widetildeO(c_\ell/\eps_\ell)\).

The errors from the levels need to satisfy
\(
  \sum_{\ell=1}^{L}\eps_\ell\le\eps.
\)
Minimizing the total cost gives \cite{Giles2008}
\begin{equation}
  \eps_\ell
  =
  \eps\,
  \frac{\sqrt{c_\ell}}
       {\sum_{r=1}^{L}\sqrt{c_r}},
  \qquad
  \mathcal C_{\rm ML}
  =
  \widetildeO\!\left[
    \frac1\eps
    \left(
      \sum_{\ell=1}^{L}\sqrt{c_\ell}
    \right)^2
  \right].
  \label{eq:ml-ae}
\end{equation}
Since
\(
  \sqrt{c_\ell}
  =
  \widetildeO\!\left(
    \frac{h_\ell^{1-\chi/2}}{\sqrt T}
  \right),
\)
the sum is reduced to $\sum_{\ell=1}^{L}h_\ell^{1-\chi/2},$ which is $\bigO(1)$ for $0\le\chi<2$, and $\bigO(L)$ when $\chi=2$.  We summarized the final complexity as follows.

\begin{theorem}[Multilevel contour-transfer complexity]
\label{thm:ml-complexity}
Assume uniform elliptic condition \eqref{eq:uniform-ellipticity}, \(T\ge1\), \(h_0\le1\), 
\cref{ass:compatible-data}, either the explicit sine-transform
realization or the Ritz--Schur access of
\cref{prop:D-from-shifted-ritz-schur}, and all selected corrected-resolvent
and coarse-row implementations have cost
\(
  \polylog(h_\ell^{-1},T,\eps^{-1})
\)
uniformly over the level and contour-node registers.   
Then both linear and quadratic observables can be estimated with
\begin{equation}
  \mathcal C_{\rm ML}
  =
  \widetildeO\!\left(
    1+\frac{1}{T\eps}
  \right).
  \label{eq:ML-final-costs}
\end{equation}

\end{theorem}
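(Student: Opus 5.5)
The plan is to assemble the multilevel estimator from the telescoping identity \eqref{eq:heat-telescope} and to bound the error budget and the cost level by level.

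\emph{Error budget.} Choose the finest level \(L\) so that the spatial error is at most \(\eps/3\). By \eqref{eq:heat-operator-error} this means \(h_L=\Theta(\sqrt{T\eps})\), or \(h_L=h_0\) when \(T\eps\gtrsim1\). Hence \(L=\bigO(\log(1/(T\eps)))=\widetildeO(1)\). The coarse term \(Q_0(T)\) lives on a fixed-dimensional space and is computed classically at cost \(\widetildeO(1)\); this accounts for the additive \(1\) in \eqref{eq:ML-final-costs}. The remaining budget \(2\eps/3\) is split among quadrature and block-encoding errors, each \(\bigO(\eps)\) at logarithmic cost by \eqref{eq:correction-transfer-quadrature-error}, and among the level accuracies \(\eps_\ell\).

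\emph{Level normalization.} For each \(1\le\ell\le L\), \cref{lem:exact-dynamic-corrections} writes \(\Delta_\ell\) as a linear or quadratic readout of \(\bm\phi_\ell(T)=\widehat E_\ell(T)\by_\ell(0)\).
\begin{itemize}
\item \emph{Linear case.} Only the upper row is needed. The readout is \(h_\ell^2\bg_\ell^\dagger\widehat D_\ell(T)\by_\ell(0)\). \Cref{thm:scaled-correction-contour}, fed by \eqref{eq:joint-D-access-target-z} (sine realization or \cref{prop:D-from-shifted-ritz-schur}), gives \(\alpha_{\widehat D,\ell}=\bigO(T^{-1})\). Together with \(\beta_{g,\ell}\lesssim h_\ell^{-\chi}\), \cref{prop:direct-observable-from-E} yields cost \(\widetildeO(c_\ell/\eps_\ell)\) with \(c_\ell=h_\ell^{2-\chi}/T\).
\item \emph{Quadratic case.} Expand \(\bm\phi_\ell^\dagger\widehat\calM_\ell\bm\phi_\ell\) into two pieces: \(h_\ell^4\widehat D^\dagger\calM_\ell\widehat D\), and the cross terms \(2\Re\, h_\ell^2\,\widehat D^\dagger\calM_\ell P_\ell e^{-TL_{\ell-1}}P_\ell^\dagger\). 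Because the lower-right block of \(\widehat\calM_\ell\) vanishes, every term contains at least one factor \(\widehat D_\ell(T)\). I would estimate the two pieces separately as block-encoded observables, each with its own product normalization:
\begin{itemize}
\item cross term: \(h_\ell^{2}\cdot T^{-1}\cdot h_\ell^{-\chi}\cdot\bigO(1)\);
\item diagonal term: \(h_\ell^4 T^{-2}h_\ell^{-\chi}\), which is smaller since \(T\ge1\).
\end{itemize}
So again \(c_\ell=\widetildeO(h_\ell^{2-\chi}/T)\). Each coherent use costs \(\polylog\) by hypothesis, since the contour LCU has \(n_{\rm c}=\widetildeO(1)\) nodes.
\end{itemize}

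\emph{Allocation.} Apply the Lagrange allocation \eqref{eq:ml-ae}. This gives \(\mathcal C_{\rm ML}=\widetildeO(\eps^{-1}(\sum_\ell\sqrt{c_\ell})^2)\). Since \(h_\ell=2^{-\ell}h_0\) with \(h_0\le1\), the sum \(\sum_\ell h_\ell^{1-\chi/2}\) is a convergent geometric series, hence \(\bigO(1)\), for \(\chi<2\). For \(\chi=2\) it equals \(L=\widetildeO(1)\). Therefore \((\sum\sqrt{c_\ell})^2=\widetildeO(T^{-1})\), and adding the classical coarse cost gives \eqref{eq:ML-final-costs}.

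\emph{Main obstacle.} The delicate step is the quadratic cross-term normalization. One has to check that composing the coarse row (normalization \(\bigO(1)\)) with \(\calM_\ell P_\ell\) and \(\widehat D_\ell(T)\) inside the Rall-type observable estimator keeps the product form \(h_\ell^{2-\chi}/T\). The risk is that the observable block encoding of \(\widehat\calM_\ell\) as a whole might pick up \(\norm{h_\ell^2\calM_\ell}\) against the full \(\bm\phi_\ell\), whose norm is \(\bigO(1)\) and not \(\bigO(T^{-1})\). My plan is to avoid this by treating the cross term as a transition amplitude between \(\widehat D_\ell(T)\by_\ell(0)\) and \(\calM_\ell P_\ell e^{-TL_{\ell-1}}P_\ell^\dagger\by_\ell(0)\) via the Hadamard test, as in the linear case of \cref{prop:direct-observable-from-E}. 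The normalizations then multiply as claimed, with one factor \(T^{-1}\) coming only from \(\widehat D_\ell\).
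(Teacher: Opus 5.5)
Your proposal is correct and follows the paper's argument. The coarse level is computed classically, each level costs $\widetildeO(c_\ell/\eps_\ell)$ with $c_\ell=h_\ell^{2-\chi}/T$ (from $\alpha_{\widehat D,\ell}=\bigO(T^{-1})$ and the explicit $h_\ell^2$ in the extended readout), and then the Giles allocation \eqref{eq:ml-ae} and the geometric or length-$L$ sum finish the bound. Your explicit split of the quadratic correction into cross and diagonal terms, each containing a factor $\widehat D_\ell(T)$, spells out what the paper compresses into its remark that the vanishing lower-right block of $\widehat{\calM}_\ell$ is crucial for the $T^{-1}$ scaling.
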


Contour, arithmetic, block-encoding, and amplitude-estimation errors are
assigned fixed portions of the final error budget.  Amplifying the
levelwise success probabilities and taking a union bound over
\(L=\bigO(\log(1/h))\) levels changes only polylogarithmic factors.

\section{Summary and outlook}
\label{sec:summary}

This paper treats observable estimation, rather than preparation of a
normalized solution state, as the primary quantum task for parabolic PDEs.
For dissipative dynamics, this distinction is important: the solution norm
may decay rapidly even when a physically meaningful linear or quadratic
quantity remains accessible as a matrix element of the semigroup.  We give
three direct implementations of the semigroup---QSVT, Gaussian dilation,
and contour-resolvent methods---with coherent cost
$\widetildeO(\sqrt{T}/h)$.  These methods avoid the final-state
postselection penalty, but they retain polynomial dependence on the finest
mesh and on derivative-dependent readout normalizations.

Under the selected-access hypotheses, the multilevel construction removes
the remaining polynomial mesh dependence by encoding the fine--coarse
correction coherently before measurement.  A target-time contour
representation reduces each level correction to shifted resolvent problems,
while the Ritz--Schur factorization exposes the cancellation at its natural
two-grid scale.  Combining the resulting inverse-time normalization with
the level-dependent readout scale and optimizing the amplitude-estimation
accuracy across levels gives
\(\widetildeO(1+(T\epsilon)^{-1})\) complexity for both linear and
quadratic observables.  Dependence on the finest mesh is confined to
polylogarithmic factors under the stated selected-access assumptions.

The Fourier hierarchy plays a particularly important practical role.  For
constant-coefficient problems on tensor-product domains, a quantum Fourier
transform, or the corresponding sine or cosine transform for the boundary
conditions, diagonalizes the discrete generator.  The fine--coarse
correction is then represented by a spectral-band selector, and each
contour resolvent becomes a diagonal function of the spectral and contour
registers.  Its implementation requires only the spectral transform,
reversible evaluation of the eigenvalue and contour multiplier, and a
controlled rotation, without an iterative linear solver or a
coefficient-dependent operator-adapted basis.  Among the access models
considered here, this provides the clearest route to relatively large
near-term implementations.  QFT-based PDE circuits have already been
developed in \cite{LubaschKikuchiWrightMcKeever2025FourierPDE}, and related
spectral-transform circuits have been demonstrated on the Quantinuum H1-1
processor \cite{WrightEtAl2024Wave}.  Integrating the present method with
this architecture mainly adds coherent contour-node arithmetic and
fine--coarse band selection around the same transform primitive.

Beyond the Fourier setting, we give a concrete non-Fourier realization for
a uniform dyadic \(P_1\) hierarchy in one dimension when the uniformly
elliptic coefficient is piecewise constant on an aligned coarsest mesh.
Scaled midpoint details are then exactly energy-orthogonal to the coarse
space, and the resulting Ritz maps, shifted Schur complements, and local
lifting operations satisfy the required normalization, stability, and
selected-access conditions.  For fixed spatial dimension, this construction
extends to Cartesian tensor-product \(Q_1\) hierarchies when the coefficient
has fixed separation rank and the associated two-scale and Ritz--Schur
factors retain bounded or polylogarithmic Kronecker rank and uniform Riesz
bounds.  The one-dimensional estimates further suggest that the required
scaling may persist for coefficients varying sufficiently smoothly on the
mesh scale.  More broadly, the access conditions developed here identify
concrete design principles for new finite-element hierarchies: local or
reversibly computable interlevel maps, stable two-scale decompositions, and
normalizations that expose the physical size of the fine--coarse correction.
We hope that these conditions motivate the development of hierarchical
finite-element constructions designed jointly for approximation quality
and quantum implementation, thereby extending the present framework to
more general meshes, coefficients, and geometries.

The same observable-first principle may extend to more general polynomial readouts, and other evolution PDEs,
including wave and advection--diffusion equations.  The relevant transfer
maps, correction scales, and contour representations will need to adapt to the PDF flow and
 a separate analysis.  The
broader design principle is nevertheless unchanged: expose physical
fine--coarse cancellation inside the quantum map before amplitude
estimation, rather than estimate two large quantities separately and subtract them
classically.

We reiterate that this new route does not constitute an overall exponential improvement in the final accuracy $\epsilon$. Scalar inference retains the fundamental $\epsilon^{-1}$ dependence of amplitude estimation, and in the positive-time Galerkin regime the spatial accuracy is related to the finest mesh through $h^2/T=\bigO(\epsilon)$. The multilevel gain is instead the removal of the additional polynomial dependence on $h^{-1}$ from both the coherent evolution and the observable normalization, leaving only polylogarithmic mesh dependence under the stated access assumptions.

\section*{Acknowledgments}
This work was supported by NSF Grant DMS-2411120 and DMS-2552687.
\bibliographystyle{plain}
\bibliography{heat}

\appendix
\numberwithin{equation}{section}
\section{Contour-based representations}
\label{app:single-contour-proofs}

The same inverse-Laplace quadrature is used in the single-level and
multilevel constructions.  We use \(z\) for the complex Laplace variable and
\(n_{\rm c}\) for the number of positive contour indices.

\subsection{Sector geometry and a first-order dilation}

The following elementary estimate is standard in sectorial
inverse-Laplace analysis; see
\cite{LopezFernandezPalencia2004,
LopezFernandezPalenciaSchadle2006}
for the sectorial geometry and resolvent bounds underlying the contour
construction.

\begin{lemma}[Elementary sector geometry]
\label{lem:scalar-sector-geometry}
Let \(a\ge0\) and let \(z\in\C\) satisfy
\(|\arg z|\le\pi-\delta\), with \(0<\delta<\pi\).  Then
\begin{equation}
  |a+z|\ge \sin(\delta/2)(a+|z|).
  \label{eq:app-scalar-sector-geometry}
\end{equation}
\end{lemma}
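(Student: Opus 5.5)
We need to show \(|a+z|\ge\sin(\delta/2)(a+|z|)\) for \(a\ge0\) and \(|\arg z|\le\pi-\delta\). The plan is to reduce it to a statement about the angle between two vectors in the plane. Write \(z=|z|e^{\ii\theta}\) with \(|\theta|\le\pi-\delta\). Then \(a\) and \(z\) are two vectors whose directions differ by the angle \(|\theta|\). The cases \(a=0\) or \(z=0\) are trivial, since \(\sin(\delta/2)\le1\). So we assume \(a,|z|>0\).

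Expand the square modulus:
\[
  |a+z|^2=a^2+|z|^2+2a|z|\cos\theta .
\]
Because \(|\theta|\le\pi-\delta\) and cosine is decreasing on \([0,\pi]\), we have \(\cos\theta\ge\cos(\pi-\delta)=-\cos\delta\). Hence
\[
  |a+z|^2\ge a^2+|z|^2-2a|z|\cos\delta .
\]
Use \(\cos\delta=1-2\sin^2(\delta/2)\) and write \(s:=\sin(\delta/2)\in(0,1]\). The right-hand side then becomes
\[
  (a-|z|)^2+4s^2a|z| .
\]

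It remains to show \((a-|z|)^2+4s^2a|z|\ge s^2(a+|z|)^2\). The right side equals \(s^2(a-|z|)^2+4s^2a|z|\). The difference between the two sides is therefore
\[
  (1-s^2)(a-|z|)^2\ge0,
\]
which holds because \(s\le1\). Taking square roots gives \eqref{eq:app-scalar-sector-geometry}.

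There is no real obstacle here. The only step needing care is the monotonicity reduction \(\cos\theta\ge-\cos\delta\), which relies on \(|\theta|\le\pi-\delta<\pi\) and on the evenness of cosine. An alternative geometric check is also available. The minimum of \(|a+z|/(a+|z|)\) over the sector is attained on the boundary ray with \(a=|z|\). There \(|a+z|=2a\sin(\delta/2)\), which confirms that the constant \(\sin(\delta/2)\) is sharp.
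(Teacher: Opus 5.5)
Your proof is correct and is essentially the paper's argument: both expand \(|a+z|^2\) by the law of cosines and identify the extremal configuration \(a=|z|\) on the boundary ray \(|\theta|=\pi-\delta\). The only difference is the order of the two steps. The paper fixes \(\theta\), asserts that \(|a+z|/(a+|z|)\) is minimized at \(a=|z|\) with value \(|\cos(\theta/2)|\), and then bounds the angle. You instead bound \(\cos\theta\ge-\cos\delta\) first and close with the explicit identity whose remainder is \((1-s^2)(a-|z|)^2\ge0\), which makes the paper's unproved minimization claim rigorous.
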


\begin{proof}
Write \(z=re^{\ii\theta}\).  For any fixed \(\theta\), and fixed $a+r>0$ the minimum of
\(|a+re^{\ii\theta}|/(a+r)\) over \(a,r\ge0\), occurs at
\(a=r\) and the minimum value equals \(|\cos(\theta/2)|\) by law of cosine.  Since
\(|\theta|\le\pi-\delta\), this the minimum at least \(\sin(\delta/2)\). 
\end{proof}

\begin{lemma}[First-order realization of a complex-shifted resolvent]
\label{lem:complex-shifted-factorization}
Let \(A\in\C^{m\times n}\), let \(L=A^\dagger A\), and let
\(z\in\C\setminus(-\infty,0]\).  Choose the principal square root
\begin{equation}
  \rho:=\sqrt z,
  \qquad \operatorname{Re}\rho>0,
  \label{eq:app-complex-shift-root}
\end{equation}
and define a dilation
\begin{equation}
  \mathcal K_A(z):=
  \begin{bmatrix}
    \rho I_n&A^\dagger\\
    A&-\rho I_m
  \end{bmatrix}.
  \label{eq:app-complex-shifted-system}
\end{equation}
Then the resolvent can be embedded as follows
\begin{equation}
  (L+zI)^{-1}
  =\rho^{-1}\left( \langle 0| \otimes I \right) \mathcal K_A(z)^{-1} \left( | 0\rangle \otimes I \right).
  \label{eq:app-complex-resolvent-factorization}
\end{equation}
If \(\sigma\) is a singular value of \(A\), the two singular values of
\(\mathcal K_A(z)\) on the corresponding singular-vector subspace are
\begin{equation}
  \left[(\operatorname{Re}\rho)^2+
    \bigl(\sigma\pm|\operatorname{Im}\rho|\bigr)^2\right]^{1/2}.
  \label{eq:app-complex-K-singular-values}
\end{equation}
Additional singular values associated with unmatched null spaces equal
\(|\rho|\).  Consequently, the extreme singular values satisfy
\begin{equation}
  s_{\min}(\mathcal K_A(z))\ge\operatorname{Re}\rho,
  \qquad
  \norm{\mathcal K_A(z)}\le\norm A+|\rho|.
  \label{eq:app-complex-K-basic-bounds}
\end{equation}
If in addition \(|\arg z|\le\pi-\delta\), then
\begin{equation}
  \operatorname{Re}\rho\ge\sin(\delta/2)\sqrt{|z|},
  \qquad
  \kappa(\mathcal K_A(z))
  \le C_\delta\left(1+\frac{\norm A}{\sqrt{|z|}}\right).
  \label{eq:app-complex-K-sector-bound}
\end{equation}
\end{lemma}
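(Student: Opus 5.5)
The proof is linear algebra built on the SVD of \(A\); the sector bound is then a scalar computation. For the resolvent identity, I will compute \(\mathcal K_A(z)^2\). Since the diagonal blocks \(\rho I\) and \(-\rho I\) anticommute with the off-diagonal part,
\[
  \mathcal K_A(z)^2
  =\begin{bmatrix}
    (z I_n+A^\dagger A)&\rho A^\dagger-\rho A^\dagger\\
    \rho A-\rho A& A A^\dagger+zI_m
  \end{bmatrix}
  =\begin{bmatrix}
    L+zI_n&0\\
    0&AA^\dagger+zI_m
  \end{bmatrix}.
\]
For \(z\notin(-\infty,0]\) both diagonal blocks are invertible, because \(L\) and \(AA^\dagger\) are positive semidefinite. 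Hence \(\mathcal K_A(z)\) is invertible and \(\mathcal K_A(z)^{-1}=\mathcal K_A(z)\,\mathcal K_A(z)^{-2}\). The upper-left block of this product is \(\rho(L+zI)^{-1}\). Dividing by \(\rho\neq0\) gives the identity for \((L+zI)^{-1}\).

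For the singular values, I will take the SVD \(A=\sum_i\sigma_i u_iv_i^\dagger\). On each two-dimensional space \(\operatorname{span}\{(v_i,0),(0,u_i)\}\), \(\mathcal K_A(z)\) acts as the matrix
\[
  k=\begin{bmatrix}\rho&\sigma\\ \sigma&-\rho\end{bmatrix}.
\]
On the unmatched null spaces of \(A\) and \(A^\dagger\) it acts as \(\pm\rho\), which gives singular values \(|\rho|\). The decomposition is orthogonal, so the singular values of \(\mathcal K_A(z)\) are the union of those of the \(2\times2\) blocks and \(|\rho|\).

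For \(k\), write \(\rho=a+ib\) with \(a=\operatorname{Re}\rho>0\). The singular values of \(k\) are the square roots of the eigenvalues of \(k^\dagger k\). I will use \(\operatorname{tr}(k^\dagger k)=2|\rho|^2+2\sigma^2\) and \(|\det k|^2=|\rho^2+\sigma^2|^2\). The claimed values \(s_\pm^2=a^2+(\sigma\pm|b|)^2\) satisfy
\[
  s_+^2+s_-^2=2a^2+2b^2+2\sigma^2,
\]
which matches the trace. For the determinant,
\[
  s_+^2s_-^2=(a^2+\sigma^2+b^2)^2-4\sigma^2b^2,
\]
while \(|\rho^2+\sigma^2|^2=(a^2-b^2+\sigma^2)^2+4a^2b^2\). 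Expanding both shows they are equal. Two eigenvalues are fixed by their sum and product, so this proves the singular value formula.

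The basic bounds then follow directly:
\begin{itemize}[nosep]
  \item \(s_\pm\ge a\), and \(|\rho|\ge a\), so \(s_{\min}(\mathcal K_A(z))\ge\operatorname{Re}\rho\);
  \item \(s_+\le a+\sigma+|b|\le\sigma+\sqrt2|\rho|\).
\end{itemize}
The second bound is slightly weaker than the stated \(\norm{\mathcal K_A(z)}\le\norm A+|\rho|\). To get the sharp constant I will use the triangle inequality on the splitting into the diagonal part and the off-diagonal part. The diagonal part \(\diag(\rho I,-\rho I)\) has norm \(|\rho|\), and the Hermitian dilation of \(A\) has norm \(\norm A\).

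For the sector bound, \(|\arg z|\le\pi-\delta\) gives \(|\arg\rho|\le\pi/2-\delta/2\). Therefore
\[
  \operatorname{Re}\rho=|\rho|\cos(\arg\rho)\ge\sin(\delta/2)\sqrt{|z|}.
\]
Combining this with the norm bound,
\[
  \kappa(\mathcal K_A(z))\le\frac{\norm A+\sqrt{|z|}}{\sin(\delta/2)\sqrt{|z|}},
\]
which has the stated form with \(C_\delta=1/\sin(\delta/2)\).

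The main obstacle is the exact \(2\times2\) singular value computation, specifically the determinant identity. The reduction itself also needs care when \(\sigma=0\) on a matched pair, where the formula correctly gives \(|\rho|\) twice, and when \(A\) has unmatched null vectors.
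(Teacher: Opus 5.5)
Your proof is correct. Its overall structure matches the paper's: reduce to the invariant two-dimensional singular-vector subspaces plus the unmatched null-space directions, then use the sector geometry of the principal $\sqrt z$. The differences are in how three of the steps are carried out.

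\textbf{Resolvent identity.} The paper takes the Schur complement of the $-\rho I$ block, $\rho I+\rho^{-1}A^\dagger A=\rho^{-1}(L+zI)$. You instead use $\mathcal K_A(z)^2=\diag(L+zI,\,AA^\dagger+zI)$ and $\mathcal K_A(z)^{-1}=\mathcal K_A(z)\,\mathcal K_A(z)^{-2}$. Your route gives invertibility of $\mathcal K_A(z)$ for free, and it is the argument the paper's main text alludes to ("whose square is block diagonal"). This is a minor difference.

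\textbf{The $2\times2$ singular values.} This is the more substantive difference. You guess the answer and verify it through the trace and determinant of $k^\dagger k$. The paper instead works in the rotated orthonormal basis $q_\sigma^\pm=(v,\pm i u)/\sqrt2$, in which
\[
\mathcal K_A(z)q_\sigma^+=(\rho+i\sigma)q_\sigma^-, \qquad \mathcal K_A(z)q_\sigma^-=(\rho-i\sigma)q_\sigma^+ .
\]
The block is therefore anti-diagonal, and the singular values $|\rho\pm i\sigma|=[(\Re\rho)^2+(\Im\rho\pm\sigma)^2]^{1/2}$ can be read off directly; this is the same set as the stated formula. The paper's approach derives the formula and the singular vectors, and it removes the determinant identity you flagged as the main obstacle.

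\textbf{Norm bound.} Here your route is the quicker one. You split $\mathcal K_A(z)$ into $\diag(\rho I,-\rho I)$ plus the Hermitian dilation of $A$ and apply the triangle inequality. The paper first computes the exact norm $[(\Re\rho)^2+(|\Im\rho|+\norm A)^2]^{1/2}$ from the spectral decomposition and then bounds it by $|\rho|+\norm A$. The paper's version gives the sharp value, while yours needs no spectral information at all.
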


\begin{proof}
 
\eqref{eq:app-complex-resolvent-factorization} follows from Schur complement of $\mathcal K_A(z)$ directly.

To determine the singular values, let
\[
  A\bm v=\sigma\bm u,
  \qquad
  A^\dagger\bm u=\sigma\bm v,
  \qquad
  \norm{\bm u}=\norm{\bm v}=1,
\]
where $\bm u\in\C^m$ and $\bm v\in\C^n$ are left and right singular
vectors of $A$ corresponding to the singular value $\sigma$.  The
two-dimensional space
\[
  \operatorname{span}
  \left\{
    \begin{bmatrix}\bm v\\0\end{bmatrix},
    \begin{bmatrix}0\\\bm u\end{bmatrix}
  \right\}
\]
is invariant under $\mathcal K_A(z)$.  Define
\begin{equation}
  \bq_\sigma^\pm
  :=
  \frac{1}{\sqrt2}
  \begin{bmatrix}
    \bm v\\
    \pm\ii\bm u
  \end{bmatrix}.
  \label{eq:complex-K-right-singular-vectors}
\end{equation}
A direct calculation gives
\begin{align}
  \mathcal K_A(z)\bq_\sigma^+
  &=
  (\rho+\ii\sigma)\bq_\sigma^-,
  \label{eq:complex-K-action-plus}\\
  \mathcal K_A(z)\bq_\sigma^-
  &=
  (\rho-\ii\sigma)\bq_\sigma^+.
  \label{eq:complex-K-action-minus}
\end{align}
Hence the two singular values associated with $\sigma$ are
\begin{equation}
  s_\sigma^\pm
  :=
  |\rho\pm\ii\sigma|
  =
  \sqrt{
    (\Re\rho)^2+
    (\Im\rho\pm\sigma)^2
  }.
  \label{eq:complex-K-singular-values}
\end{equation}

The remaining null-space directions are also explicit.  If
$\bm v\in\ker(A)$, then
\[
  \mathcal K_A(z)
  \begin{bmatrix}\bm v\\0\end{bmatrix}
  =
  \rho
  \begin{bmatrix}\bm v\\0\end{bmatrix},
\]
while, if $\bm u\in\ker(A^\dagger)$, then
\[
  \mathcal K_A(z)
  \begin{bmatrix}0\\\bm u\end{bmatrix}
  =
  -\rho
  \begin{bmatrix}0\\\bm u\end{bmatrix}.
\]
These directions therefore contribute the singular value $|\rho|$.

Consequently,
\begin{equation}
  s_{\min}\bigl(\mathcal K_A(z)\bigr)
  \ge \Re\rho,
  \label{eq:complex-K-min-singular-value}
\end{equation}
because every singular value in
\eqref{eq:complex-K-singular-values} is at least $\Re\rho$, and
$|\rho|\ge\Re\rho$.  The same decomposition also gives
\begin{equation}
  \norm{\mathcal K_A(z)}
  =
  \sqrt{
    (\Re\rho)^2+
    \bigl(|\Im\rho|+\norm A\bigr)^2
  }
  \le
  |\rho|+\norm A.
  \label{eq:complex-K-exact-norm}
\end{equation}
Finally, if $|\arg z|\le\pi-\delta$, then
\[
  \Re\rho
  =
  \sqrt{|z|}
  \cos\left(\frac{\arg z}{2}\right)
  \ge
  \sin(\delta/2)\sqrt{|z|},
\]
which proves the sectorial bounds.
\end{proof}

\subsection{Inverse-Laplace quadrature}
\label{app:operator-contour}

The convergence result below is a fixed-target-time specialization of the
hyperbolic inverse-Laplace quadrature developed in
\cite[Theorem~2]{LopezFernandezPalencia2004} and extended to
Banach-space-valued sectorial transforms in
\cite[Section~2, equations~(2.1)--(2.5), and
Theorem~1]{LopezFernandezPalenciaSchadle2006}.
We record the explicit time scaling and coefficient estimates needed for
the quantum linear-combination construction.

\begin{lemma}[Operator-valued sectorial inverse-Laplace quadrature]
\label{lem:operator-valued-contour}
Let \(X,Y\) be finite-dimensional Hilbert spaces and define
\begin{equation}
  \mathfrak S_\delta
  :=
  \{z\in\C\setminus\{0\}:|\arg z|<\pi-\delta\},
  \qquad
  0<\delta<\frac{\pi}{2}.
  \label{eq:abstract-sector}
\end{equation}
Suppose that
\(\mathcal R:\mathfrak S_\delta\to\mathcal B(X,Y)\) is holomorphic and
satisfies
\begin{equation}
  \norm{\mathcal R(z)}
  \le
  M|z|^{-\nu},
  \qquad
  0\le\nu\le1.
  \label{eq:abstract-sectorial-potential-bound}
\end{equation}
Assume that the target-time operator has the deformed inverse-Laplace
representation
\begin{equation}
  \mathcal U(T)
  =
  \frac{1}{2\pi\ii}
  \int_{\Gamma_T}
  e^{zT}\mathcal R(z)\,\dd z .
  \label{eq:abstract-inverse-laplace}
\end{equation}

Choose fixed parameters
\begin{equation}
  0<\vartheta<\frac{\pi}{2}-\delta,
  \qquad
  0<d<
  \min\left\{
    \vartheta,\frac{\pi}{2}-\delta-\vartheta
  \right\},
  \qquad
  a_{\rm c}>0,
  \label{eq:hyperbola-fixed-parameters}
\end{equation}
and define
\begin{equation}
  \zeta(w)
  :=
  a_{\rm c}\bigl(1-\sin(\vartheta-\ii w)\bigr),
  \qquad
  \Gamma_T
  :=
  \{\zeta(x)/T:x\in\R\}.
  \label{eq:proof-hyperbolic-contour}
\end{equation}
Then the integral in \eqref{eq:abstract-inverse-laplace} is absolutely
convergent and
\begin{equation}
  \norm{\mathcal U(T)}
  \le
  CMT^{\nu-1}.
  \label{eq:abstract-time-scale}
\end{equation}

For \(n_{\rm c}\ge3\), let
\begin{equation}
  \kappa_{n_{\rm c}}
  :=
  \frac{\log n_{\rm c}}{n_{\rm c}},
  \qquad
  z_k
  :=
  \frac{\zeta(k\kappa_{n_{\rm c}})}{T},
  \label{eq:abstract-contour-nodes}
\end{equation}
and
\begin{equation}
  \omega_k
  :=
  \frac{\kappa_{n_{\rm c}}}{2\pi\ii T}
  e^{\zeta(k\kappa_{n_{\rm c}})}
  \zeta'(k\kappa_{n_{\rm c}}),
  \qquad
  -n_{\rm c}\le k\le n_{\rm c}.
  \label{eq:abstract-contour-weights}
\end{equation}
There exist constants
\(c,C,c_0,C_0,\delta_0>0\), depending only on the fixed contour
parameters, such that
\begin{equation}
  \left\|
    \mathcal U(T)
    -
    \sum_{k=-n_{\rm c}}^{n_{\rm c}}
      \omega_k\mathcal R(z_k)
  \right\|
  \le
  CMT^{\nu-1}
  \exp\left(
    -c\frac{n_{\rm c}}{\log n_{\rm c}}
  \right).
  \label{eq:abstract-contour-error}
\end{equation}
Moreover,
\begin{equation}
  |\arg z_k|
  \le
  \pi-\delta_0,
  \qquad
  \frac{c_0}{T}
  \le
  |z_k|
  \le
  \frac{C_0n_{\rm c}}{T},
  \label{eq:abstract-contour-shift-range}
\end{equation}
and
\[
  z_{-k}=\overline{z_k},
  \qquad
  \omega_{-k}=\overline{\omega_k}.
\]
Finally, for every \(0\le r\le1\),
\begin{equation}
  \sum_{k=-n_{\rm c}}^{n_{\rm c}}
  \frac{|\omega_k|}{|z_k|^r}
  \le
  CT^{r-1}.
  \label{eq:abstract-contour-general-weight-bound}
\end{equation}
In particular,
\begin{equation}
  \sum_k|\omega_k|
  \le\frac{C}{T},
  \qquad
  \sum_k\frac{|\omega_k|}{\sqrt{|z_k|}}
  \le\frac{C}{\sqrt T},
  \qquad
  \sum_k\frac{|\omega_k|}{|z_k|}
  \le C.
  \label{eq:abstract-contour-special-weight-bounds}
\end{equation}
\end{lemma}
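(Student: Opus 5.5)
Substituting \(z=\zeta(x)/T\) reduces everything to the scaled contour \(\zeta(\R)\), which is independent of \(T\). The representation becomes
\[
  \mathcal U(T)=\frac{1}{2\pi\ii T}\int_\R e^{\zeta(x)}\zeta'(x)\,\mathcal R(\zeta(x)/T)\,\dd x ,
\]
and the quadrature in \eqref{eq:abstract-contour-weights} is exactly the truncated trapezoidal rule with step \(\kappa_{n_{\rm c}}\) applied to this integral. Every \(T\)-dependence then comes from the bound \(\norm{\mathcal R(\zeta/T)}\le M T^{\nu}|\zeta|^{-\nu}\), together with the prefactor \(1/T\). This is the source of the factor \(T^{\nu-1}\) in \eqref{eq:abstract-time-scale} and \eqref{eq:abstract-contour-error}, and of the factor \(T^{r-1}\) in \eqref{eq:abstract-contour-general-weight-bound}. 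The geometric claims come first.

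\emph{Geometry of the contour.} Writing \(w=x+\ii y\), we have
\[
  \zeta(w)=a_{\rm c}\bigl(1-\sin(\vartheta+y)\cosh x+\ii\cos(\vartheta+y)\sinh x\bigr).
\]
For \(|y|\le d\) this is a family of left-opening hyperbolas. Their vertex \(a_{\rm c}(1-\sin(\vartheta+y))\) is positive because \(\vartheta+d<\pi/2\). Their asymptotic half-angle is \(\pi/2+\vartheta+y\), which stays below \(\pi-\delta\) because \(\vartheta+d<\pi/2-\delta\). Hence the strip \(|\Im w|<d\) is mapped into \(\mathfrak S_{\delta}\), with \(|\zeta(w)|\ge c_0>0\) and \(|\zeta(w)|\asymp\cosh x\). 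Restricting to \(y=0\) and \(|x|\le n_{\rm c}\kappa_{n_{\rm c}}=\log n_{\rm c}\) gives \(|\arg z_k|\le\pi-\delta_0\) for some \(\delta_0>\delta\), and
\[
  c_0/T\le|z_k|\le C\cosh(\log n_{\rm c})/T\le C_0n_{\rm c}/T,
\]
which is \eqref{eq:abstract-contour-shift-range}. Conjugate symmetry follows from \(\zeta(-x)=\overline{\zeta(x)}\), \(\zeta'(-x)=-\overline{\zeta'(x)}\), and the \(\ii\) in the denominator of \(\omega_k\). The decisive decay estimate is
\[
  \Re\zeta(w)\le a_{\rm c}-a_{\rm c}\sin(\vartheta-d)\cosh x ,
\]
so \(|e^{\zeta}\zeta'|\le Ce^{-c\cosh x}\cosh x\) uniformly on the strip.

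\emph{Absolute convergence, the norm bound, and the weight sums.} On the real line the integrand is bounded by
\[
  \frac{C}{T}\,e^{-c\cosh x}\cosh x\; M T^{\nu}(\cosh x)^{-\nu},
\]
which is integrable. This gives \eqref{eq:abstract-time-scale}, and the contour deformation from the Bromwich line is justified by the same decay together with the sectorial bound. For the weight sums,
\[
  \sum_k|\omega_k||z_k|^{-r}\le\frac{C\,T^{r}}{T}\,\kappa_{n_{\rm c}}\sum_k e^{-c\cosh(k\kappa)}(\cosh k\kappa)^{1-r}.
\]
The last sum is a Riemann sum of an integrable, unimodal function, so it is bounded uniformly in \(n_{\rm c}\) (use \(\kappa\le1\)). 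This gives \eqref{eq:abstract-contour-general-weight-bound}, and the special cases \(r=0,\tfrac12,1\) give \eqref{eq:abstract-contour-special-weight-bounds}.

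\emph{Quadrature error.} This is the main step. Split the error into a discretization part (the infinite trapezoidal rule versus the integral) and a truncation part (\(|k|>n_{\rm c}\)).

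For the discretization part, use the standard strip-analytic trapezoidal estimate for functions holomorphic and integrable on \(|\Im w|<d\):
\[
  \text{error}\le\frac{2\,\mathcal N(f)}{e^{2\pi d/\kappa}-1},\qquad \mathcal N(f)=\sup_{|y|<d}\int_\R\norm{f(x+\ii y)}\,\dd x .
\]
The argument goes through verbatim for \(\mathcal B(X,Y)\)-valued \(f\), since it only uses Cauchy's theorem and norm estimates. Apply it to \(f(w)=e^{\zeta}\zeta'\mathcal R(\zeta/T)/(2\pi\ii T)\). Holomorphy on the strip follows because \(\zeta/T\in\mathfrak S_\delta\), and the decay estimate above gives \(\mathcal N(f)\le CMT^{\nu-1}\). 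The discretization error is therefore bounded by
\[
  CMT^{\nu-1}e^{-2\pi d/\kappa}=CMT^{\nu-1}e^{-2\pi d\,n_{\rm c}/\log n_{\rm c}} .
\]

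For the truncation part, the discarded tail is bounded by
\[
  CMT^{\nu-1}\kappa\sum_{|k|>n_{\rm c}}e^{-c\cosh(k\kappa)}\le CMT^{\nu-1}e^{-c' n_{\rm c}},
\]
because \(\cosh(n_{\rm c}\kappa)\ge n_{\rm c}/2\). This is dominated by the discretization term.

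The main obstacle is making the uniform-in-strip constant \(\mathcal N(f)\) genuinely independent of \(T\) and \(n_{\rm c}\). The point that makes this work is that the sectorial bound is homogeneous, \(|z|^{-\nu}=T^{\nu}|\zeta|^{-\nu}\) with \(|\zeta|\ge c_0\) on the strip, so the hypothesis \(0\le\nu\le1\) causes no difficulty near the vertex. Adding the two parts gives \eqref{eq:abstract-contour-error}.
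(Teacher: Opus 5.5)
Your proposal is correct and follows the paper's proof essentially step by step: the $T^{-1}$ rescaling to a fixed hyperbola, the strip estimates $\Re\zeta\le a_{\rm c}-c\cosh x$ and $|\zeta|,|\zeta'|\asymp\cosh x$ giving $\norm{\mathcal U(T)}\le CMT^{\nu-1}$, and the Riemann-sum bound for the weighted sums $\sum_k|\omega_k||z_k|^{-r}$. The only difference is that you derive the discretization-plus-truncation error yourself, from the standard strip-analytic trapezoidal bound (which carries over verbatim to $\mathcal B(X,Y)$-valued integrands), whereas the paper cites L\'opez-Fern\'andez and Palencia for that step.
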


\begin{proof}
Because the algorithm evaluates the dynamics at one prescribed target time
\(T\), it is natural to scale a fixed dimensionless contour by \(T^{-1}\).
Indeed, setting \(\xi=Tz\) in
\eqref{eq:abstract-inverse-laplace} gives
\begin{equation}
  \mathcal U(T)
  =
  \frac{1}{2\pi\ii T}
  \int_{\Gamma_1}
  e^\xi\mathcal R(\xi/T)\,\dd\xi.
  \label{eq:fixed-time-scaled-contour}
\end{equation}
After parameterizing \(\xi=\zeta(w)\), the integrand becomes
\begin{equation}
  \mathcal G_T(w)
  :=
  \frac{1}{2\pi\ii T}
  e^{\zeta(w)}
  \zeta'(w)
  \mathcal R\left(\frac{\zeta(w)}{T}\right).
  \label{eq:abstract-parameterized-integrand}
\end{equation}
The potential bound therefore gives
\begin{equation}
  \norm{\mathcal G_T(w)}
  \le
  CMT^{\nu-1}
  e^{\Re\zeta(w)}
  \frac{|\zeta'(w)|}{|\zeta(w)|^\nu}.
  \label{eq:abstract-integrand-strip-bound}
\end{equation}

For \(w=x+\ii y\), one has
\[
  \Re\zeta(x+\ii y)
  =
  a_{\rm c}
  \left(
    1-\sin(\vartheta+y)\cosh x
  \right).
\]
The restrictions in
\eqref{eq:hyperbola-fixed-parameters} ensure that the strip
\(|\Im w|<d\) is mapped into a closed subsector of
\(\mathfrak S_\delta\).  On this strip,
\[
  \Re\zeta(x+\ii y)
  \le
  a_{\rm c}-c\cosh x,
  \qquad
  |\zeta(x+\ii y)|\asymp\cosh x,
  \qquad
  |\zeta'(x+\ii y)|\asymp\cosh x.
\]
Thus \(\mathcal G_T\) is analytic in the strip and integrable on its
boundary, with norm bounded by
\(CMT^{\nu-1}\) times a double-exponentially decaying majorant.
This proves \eqref{eq:abstract-time-scale}.

The trapezoidal-rule and truncation estimate
\eqref{eq:abstract-contour-error} now follows directly from the
hyperbolic inverse-Laplace quadrature theorem in
\cite[Theorem~2]{LopezFernandezPalencia2004}; see also the
Banach-space-valued formulation and parameter analysis in
\cite[Section~2 and Theorem~1]
{LopezFernandezPalenciaSchadle2006}.
The same proof applies for \(\nu=0\), since the contour stays uniformly
away from the origin.

It remains to record the geometry and coefficient sums used later.
For real \(x\),
\begin{equation}
  |\zeta(x)|
  =
  a_{\rm c}\bigl(\cosh x-\sin\vartheta\bigr).
  \label{eq:hyperbola-modulus}
\end{equation}
Since
\(|k\kappa_{n_{\rm c}}|\le\log n_{\rm c}\),
this gives
\[
  c_0
  \le
  |\zeta(k\kappa_{n_{\rm c}})|
  \le
  C_0n_{\rm c},
\]
which proves the magnitude bounds in
\eqref{eq:abstract-contour-shift-range}.  The sector bound and conjugate
symmetry follow directly from the definition of \(\zeta\).

Finally,
\begin{align}
  \sum_k\frac{|\omega_k|}{|z_k|^r}
  &=
  \frac{\kappa_{n_{\rm c}}T^{r-1}}{2\pi}
  \sum_k
  e^{\Re\zeta(k\kappa_{n_{\rm c}})}
  \frac{
    |\zeta'(k\kappa_{n_{\rm c}})|
  }{
    |\zeta(k\kappa_{n_{\rm c}})|^r
  }.
  \label{eq:abstract-weight-sum-proof}
\end{align}
The summand is dominated by an integrable, double-exponentially decaying
function, uniformly for \(0\le r\le1\).  Its trapezoidal sum is therefore
bounded independently of \(n_{\rm c}\), proving
\eqref{eq:abstract-contour-general-weight-bound}.  The three estimates in
\eqref{eq:abstract-contour-special-weight-bounds} correspond to
\(r=0,\frac12,1\).
\end{proof}

\subsection{Selected linear combinations of inverses}

\begin{lemma}[Selected linear combination of inverses]
\label{lem:selected-inverse-lcu}
For each $k\in\mathcal I$, let $U_k$ be a
$\beta$-normalized block encoding of an invertible matrix $K_k$, and
assume
\[
  s_{\min}(K_k)\ge s_*>0
  \qquad\text{for all }k\in\mathcal I.
\]
Suppose that the selected oracle
\[
  U_{\rm sel}
  :=
  \sum_{k\in\mathcal I}
  |k\rangle\langle k|\otimes U_k
\]
is available.  Let
\[
  F:=\sum_{k\in\mathcal I}c_kK_k^{-1},
  \qquad
  C_c:=\sum_{k\in\mathcal I}|c_k|,
\]
and assume standard LCU state-preparation oracles realizing the
coefficients $c_k/C_c$.

Then $F$ has a block encoding with normalization
\begin{equation}
  \alpha_F
  =
  \bigO\!\left(\frac{C_c}{s_*}\right).
  \label{eq:selected-inverse-lcu-normalization}
\end{equation}
Each application of this block encoding requires
\begin{equation}
  \widetildeO\!\left(\frac{\beta}{s_*}\right)
  \label{eq:selected-inverse-lcu-cost}
\end{equation}
queries to $U_{\rm sel}$ and $U_{\rm sel}^\dagger$, apart from
coefficient-state preparation and precision logarithms.
\end{lemma}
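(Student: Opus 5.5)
The plan is to compose two standard primitives: QSVT matrix inversion applied inside a selected (controlled-on-$k$) register, followed by LCU with coefficients $c_k/C_c$.

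\emph{Step 1: selected inversion.} Fix $k$. Since $U_k$ block encodes $K_k/\beta$, the singular values of $K_k/\beta$ lie in $[s_*/\beta,1]$. We form the Hermitian dilation
\[
\begin{bmatrix}0&K_k\\K_k^\dagger&0\end{bmatrix}.
\]
Alternatively, we work with singular values directly via QSVT and apply an odd polynomial $p$ approximating $\frac{s_*}{2\beta}\,x^{-1}$ on $[s_*/\beta,1]$, bounded by $1$ on $[-1,1]$. The standard construction of Gily\'en--Su--Low--Wiebe achieves accuracy $\eta$ with degree $\bigO\!\left(\frac{\beta}{s_*}\log\frac{\beta}{s_*\eta}\right)$.

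This yields a block encoding of $K_k^{-1}$ with normalization $2/s_*$. The QSVT phase sequence depends only on $\beta/s_*$ and $\eta$, not on $k$. We can therefore interleave the same phases with $U_{\rm sel}$ and $U_{\rm sel}^\dagger$, which produces $\sum_k|k\rangle\langle k|\otimes V_k$ with each $V_k$ block encoding $K_k^{-1}$ at a common normalization. This uniformity across $k$ is the main point to check, and the uniform lower bound $s_*$ guarantees it. The query count is then \eqref{eq:selected-inverse-lcu-cost}.

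\emph{Step 2: LCU.} We write $c_k=|c_k|e^{\ii\theta_k}$ and absorb each phase $e^{\ii\theta_k}$ into a controlled phase on $|k\rangle$. We then use PREP$\,|0\rangle=\sum_k\sqrt{|c_k|/C_c}\,|k\rangle$ and sandwich the selected inverse oracle between PREP$^\dagger$ and PREP (with the phases included). The top-left block becomes
\[
\sum_k \frac{c_k}{C_c}\,\frac{s_*}{2}K_k^{-1}=\frac{F}{2C_c/s_*}.
\]
This gives $\alpha_F=2C_c/s_*$, which is \eqref{eq:selected-inverse-lcu-normalization}. One application uses a single call to the selected inversion circuit, so the cost is unchanged.

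\emph{Step 3: error.} Each $V_k$ has error $\eta$ in its block. After LCU, the total block error is at most $\eta$ in the normalized block, since the convex weights sum to one. This corresponds to $\alpha_F\eta$ in $F$. Choosing $\eta=\epsilon s_*/C_c$ affects only the logarithm in the degree, which is the meaning of ``apart from precision logarithms.''

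The only genuinely delicate point is Step 1. We must ensure that a single polynomial works for all $k$ simultaneously, so that the controlled structure is preserved. This holds because the approximation interval $[s_*/\beta,1]$ is common to all $k$.
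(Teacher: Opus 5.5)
Your proof is correct and is essentially the paper's argument. The paper invokes Proposition~14 of~\cite{TakahiraOhashiSogabeUsuda2022Contour} for the block-diagonal selected matrix $\bigoplus_k K_k$, handling the non-Hermitian blocks by a Hermitian extension; this amounts to your uniform-in-$k$ inversion at condition scale $\beta/s_*$ followed by PREPARE--SELECT--UNPREPARE over the $k$ register, which you carry out explicitly with singular-value QSVT.

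One small correction: if $K_k/\beta=X_k\Sigma_kY_k^\dagger$, an odd singular-value transformation of $U_k$ gives $X_k\,p(\Sigma_k)\,Y_k^\dagger\approx\frac{s_*}{2}(K_k^\dagger)^{-1}$. To obtain $K_k^{-1}$ you should therefore start the QSVT sequence from $U_{\rm sel}^\dagger$, the encoding of $K_k^\dagger/\beta$, as in the standard pseudoinverse construction.
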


\begin{proof}
This is Proposition~14 of
\cite{TakahiraOhashiSogabeUsuda2022Contour}, applied to the block-diagonal
selected matrix.  Its Hermitian-extension construction covers the
non-Hermitian blocks \(K_k\).
\end{proof}

\subsection{Proof of the single-level contour construction}
\label{app:single-contour-proof}

\begin{proof}[Proof of \cref{lem:single-contour-quadrature}]
Set
\[
  \mathcal R_h(z):=(L_h+zI)^{-1}.
\]
Since \(L_h=L_h^\dagger\succeq0\), the spectral theorem gives
\[
  \norm{\mathcal R_h(z)}
  =
  \max_{\lambda\in\operatorname{spec}(L_h)}
  \frac{1}{|\lambda+z|}.
\]
For every \(z\) satisfying
\(|\arg z|\le\pi-\delta\),
\cref{lem:scalar-sector-geometry} yields
\[
  |\lambda+z|
  \ge
  \sin(\delta/2)(\lambda+|z|)
  \ge
  \sin(\delta/2)|z|,
  \qquad \lambda\ge0.
\]
Consequently,
\begin{equation}
  \norm{\mathcal R_h(z)}
  \le
  C_\delta |z|^{-1},
  \label{eq:proof-single-resolvent-sector-bound}
\end{equation}
uniformly in \(h\).  Moreover, \(\mathcal R_h\) is holomorphic on the
sector \(\mathfrak S_\delta\), because all of its poles lie on the
excluded negative real axis.

The inverse-Laplace representation
\eqref{eq:single-inverse-laplace} identifies
\[
  \mathcal U(T)=E_h(T)
\]
in \cref{lem:operator-valued-contour}.  We may therefore apply that lemma
with
\[
  \nu=1,
  \qquad
  M=C_\delta.
\]
Identifying the weights in the abstract lemma with those in the main text,
\(w_k:=\omega_k\), the conjugate symmetry and shift geometry in
\eqref{eq:single-contour-symmetry} and
\eqref{eq:single-contour-shift-range} follow from
\eqref{eq:abstract-contour-shift-range}.  The error estimate
\eqref{eq:single-contour-quadrature-error} follows from
\eqref{eq:abstract-contour-error}; because \(\nu=1\), its prefactor is
\(T^{\nu-1}=1\).  Finally, the two bounds in
\eqref{eq:single-contour-weight-bounds} are respectively the cases
\(r=1\) and \(r=\tfrac12\) of
\eqref{eq:abstract-contour-general-weight-bound}.

To make the quadrature error at most a fixed multiple of \(\eps\), it is
enough to choose \(n_{\rm c}\) so that
\[
  \frac{n_{\rm c}}{\log n_{\rm c}}
  \gtrsim
  \log(1/\eps).
\]
This is ensured by
\[
  n_{\rm c}
  =
  \bigO\!\left(
    \log(1/\eps)\log\log(1/\eps)
  \right),
\]
which proves \eqref{eq:single-contour-node-count}.
\end{proof}

\begin{proof}[Proof of \cref{thm:single-contour-block-encoding}]
Let \(z_k,w_k\) be the contour nodes and weights from
\cref{lem:single-contour-quadrature}, and let
\[
  \rho_k:=\sqrt{z_k}
\]
denote the principal square root.  The sector separation in
\eqref{eq:single-contour-shift-range} ensures that this branch is
well-defined and satisfies \(\Re\rho_k>0\).  For each contour node define
\begin{equation}
  \mathcal K_k
  :=
  \mathcal K_{A_h}(z_k)
  =
  \begin{bmatrix}
    \rho_k I&A_h^\dagger\\
    A_h&-\rho_k I
  \end{bmatrix}.
  \label{eq:proof-nodewise-first-order-system}
\end{equation}
Let
\[
  \Pi_{\rm pr}:=\begin{bmatrix}I&0\end{bmatrix}
\]
denote the projection onto the first block.  By
\eqref{eq:app-complex-resolvent-factorization},
\begin{equation}
  (L_h+z_kI)^{-1}
  =
  \rho_k^{-1}
  \Pi_{\rm pr}\mathcal K_k^{-1}\Pi_{\rm pr}^\dagger.
  \label{eq:proof-factorized-resolvent}
\end{equation}

We first establish uniform block-encoding and inverse scales for the
family \(\{\mathcal K_k\}_k\).  A block encoding of
\(A_h/\alpha_A\), together with controlled preparation of the scalar
\(\rho_k\), gives a coherently selected block encoding of
\(\mathcal K_k\) with a common normalization
\begin{equation}
  \beta_{n_{\rm c}}
  =
  \bigO\!\left(
    \alpha_A+\max_k|\rho_k|
  \right).
  \label{eq:proof-selected-K-normalization}
\end{equation}
Indeed, the off-diagonal Hermitian dilation has normalization
\(\alpha_A\), while the diagonal part has normalization \(|\rho_k|\).
By the upper shift bound in
\eqref{eq:single-contour-shift-range},
\[
  \max_k|\rho_k|
  =
  \sqrt{\max_k|z_k|}
  =
  \bigO\!\left(
    \sqrt{\frac{n_{\rm c}}{T}}
  \right),
\]
and hence
\begin{equation}
  \beta_{n_{\rm c}}
  =
  \bigO\!\left(
    \alpha_A+\sqrt{\frac{n_{\rm c}}{T}}
  \right).
  \label{eq:proof-K-block-normalization}
\end{equation}

Next, \eqref{eq:app-complex-K-sector-bound} and the lower shift bound in
\eqref{eq:single-contour-shift-range} imply
\begin{align}
  s_{\min}(\mathcal K_k)
  &\ge
  c_{\delta_0}|\rho_k|
   =
  c_{\delta_0}\sqrt{|z_k|}
   \ge
  \frac{c}{\sqrt T}.
\end{align}
Therefore the selected family has the uniform inverse threshold
\begin{equation}
  s_*:=
  \min_k s_{\min}(\mathcal K_k)
  =
  \Omega(T^{-1/2}).
  \label{eq:proof-uniform-inverse-gap}
\end{equation}

Substituting \eqref{eq:proof-factorized-resolvent} into the contour sum
gives
\begin{align}
  \sum_{k=-n_{\rm c}}^{n_{\rm c}}
    w_k(L_h+z_kI)^{-1}
  &=
  \Pi_{\rm pr}
  \left(
    \sum_{k=-n_{\rm c}}^{n_{\rm c}}
      \gamma_k\mathcal K_k^{-1}
  \right)
  \Pi_{\rm pr}^\dagger,
  \label{eq:proof-factorized-contour-sum}\\
  \gamma_k
  &:=
  \frac{w_k}{\rho_k}.
  \notag
\end{align}
The coefficient estimate in
\eqref{eq:single-contour-weight-bounds} gives
\begin{equation}
  C_\gamma
  :=
  \sum_k|\gamma_k|
  =
  \sum_k
  \frac{|w_k|}{\sqrt{|z_k|}}
  =
  \bigO(T^{-1/2}).
  \label{eq:proof-gamma-mass}
\end{equation}

We can now apply \cref{lem:selected-inverse-lcu} to the selected family
\(\{\mathcal K_k\}_k\) with coefficients \(\gamma_k\).  Using
\eqref{eq:proof-uniform-inverse-gap} and
\eqref{eq:proof-gamma-mass}, the resulting block encoding of the operator
in parentheses in
\eqref{eq:proof-factorized-contour-sum} has normalization
\begin{equation}
  \alpha_{\rm cont}
  =
  \bigO\!\left(
    \frac{C_\gamma}{s_*}
  \right)
  =
  \bigO(1).
  \label{eq:proof-contour-block-normalization}
\end{equation}
Compression by \(\Pi_{\rm pr}\) does not increase the normalization.
Thus the discrete contour sum has constant block-encoding normalization.
Since \(\norm{E_h(T)}\le1\), this constant may be reduced to one with
constant overhead, giving the normalized block encoding asserted in the
theorem.

The selected inverse cost from
\eqref{eq:selected-inverse-lcu-cost} is
\begin{align}
  \widetildeO\!\left(
    \frac{\beta_{n_{\rm c}}}{s_*}
  \right)
  &=
  \widetildeO\!\left[
    \left(
      \alpha_A+\sqrt{\frac{n_{\rm c}}{T}}
    \right)\sqrt T
  \right]
  \notag
  &=
  \widetildeO\!\left(
    1+\alpha_A\sqrt T+\sqrt{n_{\rm c}}
  \right).
  \label{eq:proof-inverse-condition-scale}
\end{align}
Each query to the selected block encoding of \(\mathcal K_k\) uses only
a constant number of queries to the block encoding of \(A_h\) and its
adjoint; the dependence on \(k\) is handled coherently by reversible
arithmetic for \(z_k\) and \(\rho_k\).

It remains to combine the approximation errors.  Choose
\(n_{\rm c}\) according to
\eqref{eq:single-contour-node-count}, so that the contour error in
\eqref{eq:single-contour-quadrature-error} is a fixed fraction of
\(\eps\).  Approximate the selected inverses, the coefficient-state
preparation, and the reversible arithmetic to sufficiently small fixed
fractions of the remaining error budget.  Restoring the logarithmic
precision dependence hidden in
\eqref{eq:selected-inverse-lcu-cost} gives
\[
  \mathcal T_{\rm cont}(h,T,\eps)
  =
  \bigO\!\left(
    \left[
      1+\alpha_A\sqrt T+\sqrt{n_{\rm c}}
    \right]
    \log\!\frac{
      n_{\rm c}(1+\alpha_A\sqrt T)
    }{\eps}
  \right),
\]
which is \eqref{eq:single-contour-query-complexity}.  Since
\(n_{\rm c}\) is polylogarithmic in \(1/\eps\),
\[
  \mathcal T_{\rm cont}(h,T,\eps)
  =
  \widetildeO\!\left(
    1+\alpha_A\sqrt T
  \right).
\]
Finally, for discrete-gradient access,
\(\alpha_A=\bigO(h^{-1})\).  In the resolved positive-time regime
\(T\ge1\) and \(h\le1\), the unit term is dominated, and therefore
\[
  \mathcal T_{\rm cont}(h,T,\eps)
  =
  \widetildeO\!\left(
    \frac{\sqrt T}{h}
  \right),
\]
which proves \eqref{eq:single-contour-cost-h}.
\end{proof}

\section{Estimates for the two-level transfer map}
\label{app:joint-transfer-contour-proofs}

\subsection{Proof of positive-time interlevel smoothing}
\label{app:interlevel-smoothing}

\begin{proof}[Proof of \cref{lem:interlevel-smoothing}]
Under the mass-reduced isometries between \(\C^{N_j}\) and \(V_j\),
the operators \(e^{-tL_j}\) represent the Galerkin semigroups
\(\mathcal E_j(t)\), while \(P_\ell\) represents the inclusion
\(V_{\ell-1}\hookrightarrow V_\ell\) and \(P_\ell^\dagger\) the
corresponding \(L^2\)-projection.  Comparing both Galerkin semigroups
with the common continuous semigroup \(\mathcal E(t)=e^{-t\mathcal L}\),
using the standard nonsmooth-data estimate
\[
  \norm{\mathcal E(t)-\mathcal E_j(t)\Pi_j}_{L^2\to L^2}
  \le
  C\min\left\{1,\frac{h_j^2}{t}\right\},
\]
and applying the triangle inequality together with
\(h_{\ell-1}=2h_\ell\), gives
\[
  \norm{
    e^{-tL_\ell}
    -
    P_\ell e^{-tL_{\ell-1}}P_\ell^\dagger
  }
  \le
  C\min\left\{1,\frac{h_\ell^2}{t}\right\}.
\]
Applying this estimate to the compatible initial data proves
\eqref{eq:interlevel-final-time}.
\end{proof}

\subsection{The corrected resolvent as a transfer function}

\begin{proof}[Proof of \cref{lem:joint-transfer-resolvent}]
We first observe that 
\begin{equation}
 \bm\phi_\ell(0)=
  \begin{bmatrix}
    h_\ell^{-2}(I-P_\ell P_\ell^\dagger)\\[1mm]
    P_\ell^\dagger
  \end{bmatrix}\by_\ell(0).
  \label{eq:app-compatible-data-embedding}
\end{equation}
The block inverse of \(zI+\widehat{L}_\ell\) is
\begin{equation}
  (zI+\widehat{L}_\ell)^{-1}
  =\begin{bmatrix}
    R_\ell(z)&-h_\ell^{-2}R_\ell(z)B_\ell R_{\ell-1}(z)\\[1mm]
    0&R_{\ell-1}(z)
  \end{bmatrix}.
  \label{eq:app-joint-block-inverse}
\end{equation}
The shifted defect identity
\begin{equation}
  (L_\ell+zI)P_\ell-P_\ell(L_{\ell-1}+zI)=B_\ell
  \label{eq:app-shifted-defect-identity}
\end{equation}
implies
\begin{equation}
  R_\ell(z)B_\ell R_{\ell-1}(z)
  =P_\ell R_{\ell-1}(z)-R_\ell(z)P_\ell.
  \label{eq:app-shifted-defect-cancellation}
\end{equation}
Multiplying \eqref{eq:app-joint-block-inverse} by
\eqref{eq:app-compatible-data-embedding} gives
\begin{equation}
  (zI+\widehat{L}_\ell)^{-1}\begin{bmatrix}
    h_\ell^{-2}(I-P_\ell P_\ell^\dagger)\\[1mm]
    P_\ell^\dagger
  \end{bmatrix}
  =\begin{bmatrix}
    h_\ell^{-2}D_\ell(z)\\[1mm]
    R_{\ell-1}(z)P_\ell^\dagger
  \end{bmatrix}.
  \label{eq:app-exact-transfer-resolvent}
\end{equation}
Taking the first row gives \eqref{eq:correction-transfer-resolvent-z}.
Sectorial deformation of the Bromwich line then proves
\eqref{eq:correction-transfer-contour-z}.
\end{proof}

\subsection{Sectorial corrected-resolvent estimate}

\begin{lemma}[Sectorial two-level shifted-resolvent estimate]
\label{lem:sectorial-corrected-resolvent}
Under the assumptions of \cref{lem:interlevel-smoothing}, for every fixed
\(\delta\in(0,\pi)\),
\begin{equation}
  \norm{D_\ell(z)}
  \le
  C_\delta
  \frac{h_\ell^2}{1+|z|h_\ell^2},
  \qquad
  |\arg z|\le\pi-\delta.
  \label{eq:app-corrected-resolvent-bound}
\end{equation}
Furthermore, if \(L_j\succeq\lambda_*I\) uniformly in \(j\), then
\begin{equation}
  \norm{R_j(z)}
  \le
  \frac{C_\delta}{\lambda_*+|z|}.
  \label{eq:app-coarse-resolvent-bound}
\end{equation}
\end{lemma}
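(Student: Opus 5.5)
The coarse bound \eqref{eq:app-coarse-resolvent-bound} is immediate and I will dispose of it first. Since \(L_j\) is Hermitian with \(L_j\succeq\lambda_*I\), the spectral theorem gives
\[
\norm{R_j(z)}=\max_{\lambda\in\operatorname{spec}(L_j)}|\lambda+z|^{-1}.
\]
\Cref{lem:scalar-sector-geometry} gives \(|\lambda+z|\ge\sin(\delta/2)(\lambda+|z|)\ge\sin(\delta/2)(\lambda_*+|z|)\), which is the claim.

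For \eqref{eq:app-corrected-resolvent-bound} I will split into the two regimes \(|z|h_\ell^2\ge1\) and \(|z|h_\ell^2\le1\).

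\textbf{Large shifts.} For \(|z|h_\ell^2\ge1\) the triangle inequality suffices. Since \(P_\ell\) is an isometry,
\[
\norm{D_\ell(z)}\le\norm{R_\ell(z)}+\norm{R_{\ell-1}(z)}\le 2C_\delta|z|^{-1}.
\]
Because \(|z|^{-1}\le 2h_\ell^2/(1+|z|h_\ell^2)\) in this regime, this is the required bound.

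\textbf{Small shifts.} For \(|z|h_\ell^2\le1\) I need \(\norm{D_\ell(z)}\le Ch_\ell^2\), and I will obtain it by a shifted Aubin--Nitsche argument. Translate to function spaces through the mass-reduced isometries. Then \(R_\ell(z)\) becomes the Galerkin solution operator \(f\mapsto u_\ell\in V_\ell\) of the shifted form
\[
b_z(u,v):=\mathrm a(u,v)+z(u,v).
\]
The operator \(P_\ell R_{\ell-1}(z)P_\ell^\dagger\) becomes the corresponding solution operator on \(V_{\ell-1}\), with the \(L^2\)-projection absorbed since \((P_\ell^\dagger f,v)=(f,v)\) for \(v\in V_{\ell-1}\). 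It is cleanest to compare each with the continuous solution \(u=(\mathcal L+z)^{-1}f\) and use the triangle inequality with \(h_{\ell-1}=2h_\ell\), exactly as in the proof of \cref{lem:interlevel-smoothing}. The ingredients are as follows.

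\emph{Step 1: rotated coercivity.} Writing \(z=|z|e^{\ii\theta}\),
\[
\Re\bigl[e^{-\ii\theta/2}b_z(v,v)\bigr]\ge c_\delta\bigl(\mathrm a(v,v)+|z|\norm v^2\bigr).
\]
This gives quasi-optimality (C\'ea) in the \(z\)-weighted energy norm \(\norm v_z^2:=\mathrm a(v,v)+|z|\norm v^2\), with constant \(C_\delta\) independent of \(z\).

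\emph{Step 2: shifted \(H^2\) regularity.} \(H^2\) regularity together with sectoriality gives
\[
\norm{u}_{H^2}+|z|\norm u\le C_\delta\norm f.
\]

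\emph{Step 3: energy error.} Interpolation gives
\[
\norm{u-u_j}_z\le C\bigl(h_j+|z|^{1/2}h_j^2\bigr)\norm u_{H^2}\le C h_j\norm f,
\]
using \(|z|h_j^2\lesssim1\).

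\emph{Step 4: duality.} The dual problem \(b_z(v,\varphi)=(v,u-u_j)\) is solved with \(\bar z\), which lies in the same sector, so it enjoys the same regularity and approximation bounds. Galerkin orthogonality then yields \(\norm{u-u_j}\le Ch_j^2\norm f\).

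Summing the \(j=\ell\) and \(j=\ell-1\) errors gives \(\norm{D_\ell(z)}\le Ch_\ell^2\) in this regime.

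The main obstacle is keeping every constant uniform in \(z\) over the sector, uniformly for \(|z|h_\ell^2\le1\). Two places need care. First, the shifted regularity in Step 2 must be derived from the unshifted regularity by moving \(zu\) to the right-hand side and using the sectorial bound \(|z|\norm u\le C\norm f\). Second, the dual argument must use the \(z\)-weighted norm, so that the term \(|z|\,(u-u_j,\varphi-I_j\varphi)\) is bounded by
\[
|z|^{1/2}\norm{u-u_j}\cdot|z|^{1/2}h_j^2\norm\varphi_{H^2}.
\]
This term is harmless precisely because \(|z|h_j^2\le 4\).
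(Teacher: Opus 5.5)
Your proposal is correct and follows essentially the same route as the paper. Both arguments get the coarse bound from the spectral theorem and the sector-geometry lemma. Both then split into large shifts, handled by the triangle inequality with \(\norm{R_j(z)}\le C_\delta|z|^{-1}\), and small shifts, handled by rotated coercivity in the \(z\)-weighted energy norm, C\'ea plus Aubin--Nitsche with the adjoint problem at \(\bar z\), and comparison of both Galerkin resolvents with the common continuous resolvent using \(h_{\ell-1}=2h_\ell\). Your derivation of the shifted \(H^2\) bound and your bound on the \(|z|\)-weighted duality term supply details that the paper only asserts.
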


The proof uses the standard sectorial Galerkin resolvent approximation:
the complex-shifted variational problem remains uniformly stable after a
sector-dependent rotation, and the usual C\'ea--Aubin--Nitsche argument
gives an \(L^2\)-error of order \(h^2\).  Closely related
parameter-dependent resolvent estimates are used in contour-based finite
element approximations of parabolic and fractional elliptic evolution
problems; see, for example,
\cite{SheenSloanThomee1999,SheenSloanThomee2003,
McLeanThomee2011}.
The two-level estimate below follows by applying the one-level estimate on
the two adjacent spaces and comparing both with the same continuous
resolvent.

\begin{proof}
The second estimate follows immediately from the spectral theorem and
\cref{lem:scalar-sector-geometry}.  Indeed, for
\(\lambda\ge\lambda_*\),
\[
  |\lambda+z|
  \ge
  c_\delta(\lambda+|z|)
  \ge
  c_\delta(\lambda_*+|z|),
\]
and hence
\[
  \norm{R_j(z)}
  =
  \max_{\lambda\in\operatorname{spec}(L_j)}
  \frac1{|\lambda+z|}
  \le
  \frac{C_\delta}{\lambda_*+|z|}.
\]
Without a uniform gap, the same argument gives
\(\norm{R_j(z)}\le C_\delta|z|^{-1}\).

We prove the corrected-resolvent estimate in two regimes.  First suppose
that
\[
  |z|h_\ell^2\ge1.
\]
Since \(P_\ell\) is an isometry,
\[
\begin{aligned}
  \norm{D_\ell(z)}
  &\le
  \norm{R_\ell(z)}
  +
  \norm{P_\ell R_{\ell-1}(z)P_\ell^\dagger}  \\
  &=
  \norm{R_\ell(z)}
  +
  \norm{R_{\ell-1}(z)}
  \le
  \frac{C_\delta}{|z|}.
\end{aligned}
\]
Because \(|z|h_\ell^2\ge1\),
\[
  \frac1{|z|}
  \le
  2\frac{h_\ell^2}{1+|z|h_\ell^2},
\]
which proves the desired estimate in this regime.

It remains to consider
\[
  |z|h_\ell^2\le1.
\]
Let \(\Pi_j:L^2(\Omega)\to V_j\) be the \(L^2\)-orthogonal
projection.  Define the continuous resolvent
\[
  \mathcal R(z):=(\mathcal L+zI)^{-1},
\]
and the discrete resolvent
\(\mathcal R_j(z):V_j\to V_j\) by
\[
  \mathrm a_z(\mathcal R_j(z)g_j,v_j)
  =
  (g_j,v_j)_{L^2},
  \qquad g_j,v_j\in V_j,
\]
where, on the complexification of \(H_0^1(\Omega)\),
\[
  \mathrm a_z(u,v)
  :=
  \mathrm a(u,v)+z(u,v)_{L^2}.
\]
Thus the Galerkin approximation to \(\mathcal R(z)f\) is
\(\mathcal R_j(z)\Pi_jf\).

Here we distinguish between resolvents in coefficient and function space.
The matrix resolvent
\(
R_j(z)=(L_j+zI)^{-1}
\)
acts on mass-reduced coordinate vectors in \(\mathbb C^{N_j}\), whereas
the calligraphic resolvent \(\mathcal R_j(z)\) below acts on finite-element
functions in \(V_j\).  The two are unitarily equivalent under the
mass-reduced isometry \(\mathcal U_j:\mathbb C^{N_j}\to V_j\).

Writing \(z=|z|e^{\ii\theta}\), we have
\[
  \Re\!\left[
    e^{-\ii\theta/2}\mathrm a_z(v,v)
  \right]
  =
  \cos(\theta/2)
  \left(
    \mathrm a(v,v)+|z|\norm{v}_{L^2}^2
  \right).
\]
Since \(|\theta|\le\pi-\delta\), the shifted form is uniformly
coercive and continuous in the norm
\[
  \norm{v}_z^2
  :=
  \mathrm a(v,v)+|z|\norm{v}_{L^2}^2.
\]
The sectorial resolvent bound and elliptic \(H^2\)-regularity apply
uniformly to both the primal problem and its adjoint.  Hence the
standard C\'ea and Aubin--Nitsche arguments give, for every fixed
\(C_0>0\),
\begin{equation}
  \norm{
    \mathcal R(z)-\mathcal R_j(z)\Pi_j
  }_{L^2\to L^2}
  \le
  C_{\delta,C_0}h_j^2,
  \qquad
  |z|h_j^2\le C_0.
  \label{eq:app-shifted-galerkin-resolvent-error}
\end{equation}
Indeed, the corresponding energy- and duality-error estimates are
\[
  \norm{u-u_j}_z
  \le
  C_{\delta,C_0}h_j\norm{f}_{L^2},
  \qquad
  \norm{u-u_j}_{L^2}
  \le
  C_{\delta,C_0}h_j\norm{u-u_j}_z.
\]

Now to prove \eqref{eq:app-corrected-resolvent-bound}, we identify the matrix \(D_\ell(z)\) with the difference of the
two Galerkin resolvents.  Let
\(\mathcal U_j:\C^{N_j}\to V_j\) denote the mass-reduced isometry.
By construction,
\begin{equation}
  \mathcal U_jR_j(z)\mathcal U_j^\dagger
  =
  \mathcal R_j(z)\Pi_j
  \label{eq:app-discrete-resolvent-identification}
\end{equation}
as an operator on \(L^2(\Omega)\), with the output naturally embedded
in \(L^2(\Omega)\).  Moreover, the mass-scaled prolongation represents
the natural finite-element inclusion, and therefore
\[
  \mathcal U_\ell P_\ell=\mathcal U_{\ell-1},
  \qquad
  P_\ell^\dagger\mathcal U_\ell^\dagger
  =
  \mathcal U_{\ell-1}^\dagger,
\]
where both sides are regarded as maps to or from \(L^2(\Omega)\).
Consequently,
\begin{align}
  \mathcal U_\ell D_\ell(z)\mathcal U_\ell^\dagger
  &=
  \mathcal U_\ell R_\ell(z)\mathcal U_\ell^\dagger
  -
  \mathcal U_\ell P_\ell
  R_{\ell-1}(z)P_\ell^\dagger
  \mathcal U_\ell^\dagger
  \notag\\
  &=
  \mathcal R_\ell(z)\Pi_\ell
  -
  \mathcal R_{\ell-1}(z)\Pi_{\ell-1}.
  \label{eq:app-D-continuous-identification}
\end{align}
Because \(\mathcal U_\ell\) is an isometry,
\[
  \norm{D_\ell(z)}
  =
  \norm{
    \mathcal U_\ell D_\ell(z)\mathcal U_\ell^\dagger
  }_{L^2\to L^2}.
\]

Since \(h_{\ell-1}=2h_\ell\), the condition
\(|z|h_\ell^2\le1\) implies
\(|z|h_{\ell-1}^2\le4\).  Thus
\eqref{eq:app-shifted-galerkin-resolvent-error} applies on both
levels, and comparison with the same continuous resolvent yields, by triangle inequality,
\[
\begin{aligned}
  \norm{D_\ell(z)}
  &\le
  \norm{
    \mathcal R_\ell(z)\Pi_\ell-\mathcal R(z)
  }_{L^2\to L^2}
  +
  \norm{
    \mathcal R(z)-\mathcal R_{\ell-1}(z)\Pi_{\ell-1}
  }_{L^2\to L^2} \\
  &\le
  C_\delta
  \left(h_\ell^2+h_{\ell-1}^2\right)
  \le
  C_\delta h_\ell^2.
\end{aligned}
\]
Since \(1+|z|h_\ell^2\le2\) in this regime,
\[
  h_\ell^2
  \le
  2\frac{h_\ell^2}{1+|z|h_\ell^2}.
\]
Together with the large-shift estimate, this proves
\eqref{eq:app-corrected-resolvent-bound}.
\end{proof}

\subsection{Proof of the contour-transfer results}\label{proof-contour-correction}

\begin{proof}[Proof of \cref{thm:scaled-correction-contour}]
By \cref{lem:sectorial-corrected-resolvent},
\begin{equation}
  \norm{h_\ell^{-2}D_\ell(z)}
  \le \frac{C}{1+|z|h_\ell^2}
  \le C
  \label{eq:app-correction-symbol-bound}
\end{equation}
on the admissible sector.  Apply \cref{lem:operator-valued-contour} with
\(\nu=0\) to the transfer function
\(h_\ell^{-2}D_\ell(z)\).  This gives the node geometry and the error
\eqref{eq:correction-transfer-quadrature-error}.  The same contour estimate,
or directly the general weight estimate in
\eqref{eq:abstract-contour-general-weight-bound}, gives
\eqref{eq:correction-transfer-weight-bound}.

For the LCU normalization,
\[
  \sum_k|\omega_k|h_\ell^{-2}
  \alpha_{D_\ell(z_k)}
  \le
  C_D\sum_k|\omega_k|
  =
  \bigO(T^{-1}).
\]
PREPARE--SELECT--UNPREPARE over the contour register therefore block encodes
\(\widehat D_\ell(T)\) at the stated normalization.  The SELECT oracle calls
one nodewise corrected-resolvent branch, so its selected query cost is the
largest branch cost.  Requesting branch errors of size
\(\eps/\poly(n_{\rm c})\) controls the accumulated LCU error and changes only
polylogarithmic factors.  This proves
\eqref{eq:correction-transfer-block-normalization}.
\end{proof}

\end{document}